\documentclass[11pt]{article}
\usepackage{times}
\usepackage{amsmath}
\usepackage{amssymb}
\usepackage{verbatim}
\usepackage{graphicx}
\usepackage{epsfig}
\usepackage{algorithmic}
\usepackage{algorithm}
\usepackage{fullpage}
\usepackage{color}
\usepackage{graphicx}
\usepackage{epsfig}
\usepackage{amsthm}
\usepackage{latexsym}
\usepackage{amssymb}
\usepackage{amsmath}

\newcommand{\newfontobj}[2]{
  \newcommand{#1}[1]{
    \expandafter\def\csname##1\endcsname{{#2 ##1}}}}

\newfontobj{\class}{\rm} 


\class{PSPACE}	
\class{L}
\class{BPL}
\class{RL}
\class{NC}
\class{ZPL}
\class{NPSPACE}	
\class{ASPACE}	
\class{NL}
\class{EXP}
\class{NEXP}
\class{coNEXP}
\class{NSUBEXP}
\class{NE}
\class{E}
\class{AM}		
\class{MA}
\class{NP}
\class{DNP}
\class{UP}
\class{P}
\class{RP}
\class{BPP}
\class{ZPP}
\class{EXPSPACE}
\class{coNP}
\class{coRP}
\class{coAM}
\class{PH}
\class{IP}
\class{PCP}
\class{MIP}
\class{AC}
\class{TC}
\class{BP}



\DeclareMathOperator{\F}{\mathbb{F}}





\newtheorem{theorem}{Theorem}[section]
\newtheorem{proposition}[theorem]{Proposition}
\newtheorem{claim}{Claim}
\newtheorem{lemma}[theorem]{Lemma}

\newtheorem{definition}{Definition}[section] 



\newtheoremstyle{example}{\topsep}{\topsep}%
     {\normalfont \small}   
     {}    
     {\bfseries}     
     {}
     {\topsep}
     {}
\theoremstyle{example}

\newcommand{\argmax}{\mathop{\operatorname{argmax}}}

\newcommand{\Feas}{\mathcal{F}}
\newcommand{\OPT}{O^*}
\newcommand{\FOPT}{F^*}

\newcommand{\cost}{\mathsf{C}}
\newcommand{\profit}{\pi}
\newcommand{\incprofit}{\tilde{\profit}}
\newcommand{\val}{v}
\newcommand{\dcost}{\mathsf{c}}
\newcommand{\dcostinv}{\bar{\mathsf{s}}}

\newcommand{\const}[1][i]{k_{#1}}
\newcommand{\alg}{\mathcal{A}}

\newcommand{\eone}{\beta} 
\newcommand{\etwo}{\beta'}

\newcommand{\sdf}{h}
\newcommand{\fdf}{g}
\newcommand{\supp}{\operatorname{supp}}

\newcommand{\prefi}[1][i]{P_{(#1)}}

\newcommand{\I}{\mathbb{I}}
\newcommand{\Fr}{\mathbb{F}}

\newcommand{\deltac}{ {\rho^{\textrm{--}}}}
\newcommand{\deltab}{{\rho^+}}

\newcommand{\rhoc}{{ \rho^{ \textrm{--}}}}
\newcommand{\rhoci}[1][i]{{ \rho^{ \textrm{--}}_{#1}}}
\newcommand{\rhob}{{\rho^+}}

\newcommand{\preflarge}{P_{\rhoc}}
\newcommand{\prefsmall}{P_{\rhob}}
\newcommand{\guess}{\tau}
\newcommand{\prefguess}{P_{\guess}}

\newcommand{\densityv}{\rho}
\newcommand{\ideal}{\Pi}
\newcommand{\ebar}{e^{\prime}}
\newcommand{\invideal}{\Pi^{-1}}

\newcommand{\fdfmax}{G}

\newcommand{\FP}{\operatorname{P}^*}
\newcommand{\eonep}{\beta^{\prime \prime}}
\newcommand{\prefguessbar}{\bar{P}_\guess}
\newcommand{\prefsmallbar}{\bar{P}_\rhob}
\newcommand{\preflargebar}{\bar{P}_\rhoc}
\newcommand{\bP}{\bar{P}}

\newcommand{\lOneSize}{8}
\newcommand{\lOneMinK}{24}
\newcommand{\lOneTune}{.47}
\newcommand{\lOneRatio}{1/10}
\newcommand{\lOneRatioTable}{0.100}

\newcommand{\wSet}{I}
\newcommand{\gSet}{G}
\newcommand{\bSet}{B}

\newcommand{\decRate}{y}
\newcommand{\lOneApprox}{\beta(c)}

\newcommand{\wgt}{w}
\newcommand{\minWgt}{\underline{\wgt}}
\newcommand{\maxWgt}{\overline{\wgt}}

\newcommand{\floor}[1]{\left\lfloor{#1}\right\rfloor}
\newcommand{\abs}[1]{\left\lvert{#1}\right\rvert}

\newenvironment{proofsketch}{\noindent{\em Proof Sketch:}}{\qed}

\begin{document}

\title{Secretary Problems with Convex Costs}

\author{
Siddharth Barman \and Seeun Umboh \and Shuchi Chawla \and David Malec\\
~\\
University of Wisconsin--Madison\\
{\tt \{sid, seeun, shuchi, dmalec\}@cs.wisc.edu}
}
\date{}
\maketitle 


\begin{abstract}
We consider online resource allocation problems where given a set of
requests our goal is to select a subset that maximizes a value minus
cost type of objective function. Requests are presented online in
random order, and each request possesses an adversarial value and an
adversarial size. The online algorithm must make an irrevocable
accept/reject decision as soon as it sees each request. The ``profit''
of a set of accepted requests is its total value minus a convex cost
function of its total size. This problem falls within the framework of
secretary problems. Unlike previous work in that area, one of
the main challenges we face is that the objective function can be
positive or negative and we must guard against accepting requests that
look good early on but cause the solution to have an arbitrarily large
cost as more requests are accepted. This requires designing new techniques.


We study this problem under various feasibility constraints and
present online algorithms with competitive ratios only a constant
factor worse than those known in the absence of costs for the same
feasibility constraints. We also consider a multi-dimensional version
of the problem that generalizes multi-dimensional knapsack within a
secretary framework. In the absence of any feasibility constraints, we
present an $O(\ell)$ competitive algorithm where $\ell$ is the number
of dimensions; this matches within constant factors the best known
ratio for multi-dimensional knapsack secretary.
\end{abstract}


\setcounter{page}{1}

\section{Introduction}
\label{sec:intro}
We study online resource allocation problems under a natural profit objective: a single server accepts or rejects requests for service so as to maximize the total value of the accepted requests minus the cost imposed by them on the system. This model captures, for example, the optimization problem faced by a cloud computing service accepting jobs, a wireless access point accepting connections from mobile nodes, or an advertiser in a sponsored search auction deciding which keywords to bid on. In many of these settings, the server must make accept or reject decisions in an online fashion as soon as requests are received without knowledge of the quality future requests. We design online algorithms with the goal of achieving a small competitive ratio---ratio of the algorithm's performance to that of the best possible (offline optimal) solution.



A classical example of online decision making is the secretary problem. Here a company is interested in hiring a candidate for a single position; candidates arrive for interview in \emph{random order}, and the company must accept or reject each candidate following the interview. The goal is to select the best candidate as often as possible. What makes the problem challenging is that each interview merely reveals the rank of the candidate relative to the ones seen previously, but not the ones following. Nevertheless, Dynkin~\cite{Dynkin1963} showed that it is possible to succeed with constant probability using the following algorithm: unconditionally reject the first $1/e$ fraction of the candidates; then hire the next candidate that is better than all of the ones seen previously. Dynkin showed that as the number of candidates goes to infinity, this algorithm hires the best candidate with probability approaching $1/e$ and in fact this is the best possible.


More general resource allocation settings may allow picking multiple candidates subject to a certain feasibility constraint. We call such a problem a generalized secretary problem (GSP) and use $(\Phi,\Feas)$ to denote an instance of the problem. Here $\Feas$ denotes a feasibility constraint that the set of accepted requests must satisfy (e.g. the size of the set cannot exceed a given bound), and $\Phi$ denotes an objective function that we wish to maximize. As in the classical setting, we assume that requests arrive in random order; the feasibility constraint $\Feas$ is known in advance but the quality of each request, in particular its contribution to $\Phi$, is only revealed when the request arrives.
Recent work has explored variants of the GSP where $\Phi$ is the sum over the accepted requests of the ``value'' of each request. For such a sum-of-values objective, constant factor competitive ratios are known for various kinds of feasibility constraints including cardinality constraints~\cite{HKP04,Kleinberg2005multiple}, knapsack constraints~\cite{babaioff2007knapsack}, and certain matroid constraints~\cite{babaioff2007matroids}.

In many settings, the linear sum-of-values objective does not adequately capture the tradeoffs that the server faces in accepting or rejecting a request, and feasibility constraints provide only a rough approximation. Consider, for example, a wireless access point accepting connections. Each accepted request improves resource utilization and brings value to the access point. However as the number of accepted requests grows the access point performs greater multiplexing of the spectrum, and must use more and more transmitting power in order to maintain a reasonable connection bandwidth for each request. The power consumption and its associated cost are {\em non-linear} functions of the total load on the access point. This directly translates into a value minus cost type of objective function where the cost is an increasing function of the load or total size of all the requests accepted.

Our goal then is to accept a set $A$ out of a universe $U$ of requests such that the ``profit'' $\profit(A) = \val(A)-\cost(s(A))$ is maximized; here $\val(A)$ is the total value of all requests in $A$, $s(A)$ is the total size, and $\cost$ is a known increasing convex cost function\footnote{Convexity is crucial in obtaining any non-trivial competitive ratio---if the cost function were concave, the only solutions with a nonnegative objective function value may be to accept everything or nothing.}. 

Note that when the cost function takes on only the values $0$ and $\infty$ it captures a knapsack constraint, and therefore the problem $(\profit,2^U)$ (i.e. where the feasibility constraint is trivial) is a generalization of the knapsack secretary problem~\cite{babaioff2007knapsack}. We further consider objectives that generalize the $\ell$-dimensional knapsack secretary problem. Here, we are given $\ell$ different (known) convex cost functions $\cost_i$ for $1 \leq i \leq \ell$, and each request is endowed with $\ell$ sizes, one for each dimension. The profit of a set is given by $\profit(A) = \val(A)-\sum_{i=1}^\ell \cost_i(s_i(A))$ where $s_i(A)$ is the total size of the set in dimension $i$.

We consider the profit maximization problem under various feasibility constraints. For single-dimensional costs, we obtain online algorithms with competitive ratios within a constant factor of those achievable for a sum-of-values objective with the same feasibility constraints. For $\ell$-dimensional costs, in the absence of any constraints, we obtain an $O(\ell)$ competitive ratio. We remark that this is essentially the best approximation achievable even in the offline setting: Dean et al.~\cite{dean2005pip} show an $\Omega(\ell^{1-\epsilon})$ hardness for the simpler $\ell$-dimensional knapsack problem under a standard complexity-theoretic assumption. For the multi-dimensional problem with general feasibility constraints, our competitive ratios are worse by a factor of $O(\ell^5)$ over the corresponding versions without costs. Improving this factor is a possible avenue for future research.

We remark that the profit function $\profit$ is a submodular function. 
Recently several works ~\cite{FNS11, bateni2010submodular, gupta2010constrained} have looked at secretary problems with submodular objective functions and developed constant competitive algorithms.
However, all of these works make the crucial assumption that the objective is always nonnegative; it therefore does not capture $\profit$ as a special case. In particular, if $\Phi$ is a monotone increasing submodular function (that is, if adding more elements to the solution cannot decrease its objective value), then to obtain a good competitive ratio it suffices to show that the online solution captures a good fraction of the optimal solution. In the case of \cite{bateni2010submodular} and \cite{gupta2010constrained}, the objective function is not necessarily monotone. Nevertheless, nonnegativity implies that the universe of elements can be divided into two parts, over each of which the objective essentially behaves like a monotone submodular function in the sense that adding extra elements to a good subset of the optimal solution does not decrease its objective function value. In our setting, in contrast, adding elements with too large a size to the solution can cause the cost of the solution to become too large and therefore imply a negative profit, even if the rest of the elements are good in terms of their value-size tradeoff. As a consequence we can only guarantee good profit when no ``bad'' elements are added to the solution, and must ensure that this holds with constant probability. This necessitates designing new techniques.

\paragraph{Our techniques.} 

In the absence of feasibility constraints (see Section~\ref{sec:unconstrained}), we note that it is possible to classify elements as ``good'' or ``bad'' based on a threshold on their value to size ratio (a.k.a. density) such that any large enough subset of the good elements provides a good approximation to profit; the optimal threshold is defined according to the offline optimal fractional solution. Our algorithm learns an estimate of this threshold from the first few elements (that we call the sample) and accepts all the elements in the remaining stream that cross the threshold. Learning the threshold from the sample is challenging. First, following the intuition about avoiding all bad elements, our estimate must be conservative, i.e. exceed the true threshold, with constant probability. Second, the optimal threshold for the sample can differ significantly from the optimal threshold for the entire stream and is therefore not a good candidate for our estimate. Our key observation is that the optimal profit over the sample is a much better behaved random variable and is, in particular, sufficiently concentrated; we use this observation to carefully pick an estimate for the density threshold.


With general feasibility constraints, it is no longer sufficient to merely classify elements as good and bad: an arbitrary feasible subset of the good elements is not necessarily a good approximation. Instead, we decompose the profit function into two parts, each of which can be optimized by maximizing a certain sum-of-values function (see Section~\ref{sec:constrained}). This suggests a reduction from our problem to two different instances of the GSP with sum-of-values objectives. The catch is that the new objectives are not necessarily non-negative and so previous approaches for the GSP don't work directly. We show that if the decomposition of the profit function is done with respect to a good density threshold and an extra filtering step is applied to weed out bad elements, then the two new objectives on the remaining elements are always non-negative and admit good solutions. At this point we can employ previous work on GSP with a sum-of-values objective to obtain a good approximation to one or the other component of profit.
We note that while the exposition in Section~\ref{sec:constrained}
focuses on a matroid feasibility constraint, the results of that
section extend to any downwards-closed feasibility constraint that
admits good offline and online algorithms with a sum-of-values
objective\footnote{We obtain an $O(\alpha^4\beta)$ competitive
algorithm where $\alpha$ is the best offline approximation and $\beta$
is the best online competitive ratio for the sum-of-values objective.}.

In the multi-dimensional setting (discussed in Section~\ref{sec:multi}), elements have different sizes along different dimensions. Therefore, a single density does not capture the value-size tradeoff that an element offers. Instead we can decompose the value of an element into $\ell$ different values, one for each dimension, and define densities in each dimension accordingly. This decomposes the profit across dimensions as well. Then, at a loss of a factor of $\ell$, we can approximate the profit objective along the ``best'' dimension. The problem with this approach is that a solution that is good (or even best) in one dimension may in fact be terrible with respect to the overall profit, if its profit along other dimensions is negative. Surprisingly we show that it is possible to partition values across dimensions in such a way that there is a {\em single} ordering over elements in terms of their value-size tradeoff that is respected in each dimension; this allows us to prove that a solution that is good in one dimension is also good in other dimensions. We present an $O(\ell)$ competitive algorithm for the unconstrained setting based on this approach in Section~\ref{sec:multi}, and defer a discussion of the constrained setting to Section~\ref{app:multi-constrained}.

\paragraph{Related work.}
The classical secretary problem has been studied extensively; see \cite{Fer89, Fre83} and \cite{Sam91} for a survey. Recently a number of papers have explored variants of the GSP with a sum-of-values objective.  Hajiaghayi et al.~\cite{HKP04} considered the variant where up to $k$ secretaries can be selected (a.k.a. the $k$-secretary problem) in a game-theoretic setting and gave a strategyproof constant-competitive mechanism. Kleinberg~\cite{Kleinberg2005multiple} later showed an improved $1-O(1/\sqrt{k})$ competitive algorithm for the classical setting. Babaioff et al.~\cite{babaioff2007knapsack} generalized this to a setting where different candidates have different sizes and the total size of the selected set must be bounded by a given amount, and gave a constant factor approximation. In \cite{babaioff2007matroids} Babaioff et al. considered another generalization of the $k$-secretary problem to matroid feasibility constraints. A matroid is a set system over $U$ that is downwards closed (that is, subsets of feasible sets are feasible), and satisfies a certain exchange property (see \cite{Oxley1992} for a comprehensive treatment). They presented an $O(\log r)$ competitive algorithm, where $r$ is the rank of the matroid, or the size of a maximal feasible set. This was subsequently improved to a $O(\sqrt{\log r})$-competitive algorithm by Chakraborty and Lachish~\cite{chakrabortyimproved}. Several papers have improved upon the competitive ratio for special classes of matroids~\cite{babaioff2009secretary,dimitrov2008,korulapal2009}. Bateni et al.~\cite{bateni2010submodular} and Gupta et al.~\cite{gupta2010constrained} were the first to (independently) consider non-linear objectives in this context. They gave online algorithms for non-monotone nonnegative submodular objective functions with competitive ratios within constant factors of the ratios known for the sum-of-values objective under the same feasibility constraint. Other versions of the problem that have been studied recently include: settings where elements are drawn from known or unknown distributions but arrive in an adversarial order~\cite{CHMS10, kennedy1987prophet, samuel1984comparison}, versions where values are permuted randomly across elements of a non-symmetric set system~\cite{soto2010matroid}, and settings where the algorithm is allowed to reverse some of its decisions at a cost~\cite{babaioff2008selling, babaioff2009selling}.

\section{Notation and Preliminaries}
\label{sect:notation}
We consider instances of the generalized secretary problem represented
by the pair $(\profit,\Feas)$, and an implicit number $n$ of requests
or elements that arrive in an online fashion. $U$ denotes the universe
of elements. $\Feas\subset 2^U$ is a known downwards-closed feasibility
constraint. Our goal is to accept a subset of elements $A\subset U$
with $A\in\Feas$ such that the objective function $\profit(A)$ is
maximized. For a given set $T\subset U$, we use $\OPT(T)
= \argmax_{A \in \Feas\cap 2^T} \profit(A)$ to denote the optimal
solution over $T$; $\OPT$ is used as shorthand for $\OPT(U)$. We now
describe the function $\profit$.

In the single-dimensional cost setting, each element $e\in U$ is
endowed with a value $\val(e)$ and a size $s(e)$. Values and sizes are
integral and are a priori unknown. The size and value functions extend
to sets of elements as $s(A) = \sum_{e \in A} s(e)$ and $\val(A)
= \sum_{e \in A} \val(e)$. Then the ``profit'' of a subset is given by
$\profit(A) = \val(A) - \cost(s(A))$ where $\cost$ is a non-decreasing convex
function on size: $\cost: \mathbb{Z}^+ \rightarrow
\mathbb{Z}^+$. The following quantities will be useful in our analysis:

 
 
\begin{itemize}
\item The {\em density} of an element, $\rho(e) := \val(e) / s(e)$. We
assume without loss of generality that densities of elements are
unique and denote the unique element with density $\gamma$ by
$e_\gamma$.
\item The {\em marginal cost} function, $\dcost(s) := \cost(s) - \cost(s-1)$. Note that this is an increasing function.
\item The {\em inverse marginal cost} function, $\dcostinv(\rho)$
which is defined to be the maximum size for which an element of
density $\rho$ will have a non-negative profit increment, that is, the
maximum $s$ for which $\rho \geq \dcost(s)$.
\item The {\em density prefix} for a given density $\gamma$ and a set
$T$, $P_\gamma^T := \{e\in T : \rho(e)\ge\gamma\}$, and the partial
density prefix, $\bP_\gamma^T := P_\gamma^T \setminus \{e_\gamma\}$. We
use $P_\gamma$ and $\bP_\gamma$ as shorthand for $P_\gamma^U$ and
$\bP_\gamma^U$ respectively.
\end{itemize}



We will sometimes find it useful to discuss fractional relaxations of
the offline problem of maximizing $\profit$ subject to $\Feas$. To
this end, we extend the definition of subsets of $U$ to allow for
fractional membership. We use $\alpha e$ to denote an
$\alpha$-fraction of element $e$; this has value $\val(\alpha e)
= \alpha \val(e)$ and size $s( \alpha e) = \alpha s(e)$. We say that a
fractional subset $A$ is feasible if its support $\supp(A)$ is
feasible. Note that when the feasibility constraint can be expressed
as a set of linear constraints, this relaxation is more restrictive
than the natural linear relaxation.


Note that since costs are a convex non-decreasing function of size, it may
at times be more profitable to accept a fraction of an element rather
than the whole. That is, $\argmax_{\alpha} \profit(\alpha e)$ may be
strictly less than $1$. For such elements, $\rho(e)
< \dcost(s(e))$. We use $\F$ to denote the set of all such elements:
$\F=\{e\in U : \argmax_{\alpha} \profit(\alpha e) < 1\}$, and $\I =
U\setminus\F$ to denote the remaining elements. Our solutions will
generally approximate the optimal profit from $\F$ by running Dynkin's
algorithm for the classical secretary problem; most of our analysis
will focus on $\I$. Let $\FOPT(T)$ denote the optimal (feasible)
fractional subset of $T\cap\I$ for a given set $T$. Then
$\profit(\FOPT(T)) \geq \profit(\OPT(T\cap\I))$. We use $\FOPT$ as
shorthand for $\FOPT(U)$, and let $s^*$ be the size of this solution.


In the multi-dimensional setting each element has an $\ell$-dimensional
size $s(e) = (s_1(e), \ldots, s_\ell(e))$. The cost function is composed
of $\ell$ different non-decreasing convex functions,
$\cost_i: \mathbb{Z}^+ \rightarrow \mathbb{Z}^+$. The cost of a set of
elements is defined to be $\cost(A)= \sum_{i} \cost_i( s_i(A))$ and
the profit of $A$, as before, is its value minus its cost: $\profit(A)
= v(A) - \cost(A)$.

\subsection{Balanced Sampling}
\label{app:balanced-sampling}
Our algorithms learn the distribution
of element values and sizes by observing the first few
elements. Because of the random order of arrival, these elements form
a random subset of the universe $U$. The following concentration
result is useful in formalizing the representativeness of the sample.
\begin{lemma}
\label{lemma:concentration}
Given constant $c \geq 3$ and a set of elements $I$ with associated
non-negative weights, $w_i$ for $i \in I$, say we construct a random
subset $J$ by including each element of $I$ uniformly at random with
probability $1/2$. If for all $k \in I$,
$w_k \leq \frac{1}{c} \sum_{i \in I} w_i $ then the following
inequality holds with probability at least $0.76$:
\begin{align*}
\sum_{  j \in J} w_j \geq \beta(c) \sum_{i  \in I} w_i, 
\end{align*}
where $\beta(c)$ is a non-decreasing function of $c$ (and furthermore
is independent of $I$).
\end{lemma}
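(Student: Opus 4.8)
The plan is to prove the concentration statement by a second-moment / Chebyshev argument, since the hypothesis $w_k \le \frac{1}{c}\sum_{i\in I} w_i$ is exactly the kind of bounded-influence condition under which the sum $\sum_{j\in J} w_j$ concentrates around its mean. Write $W = \sum_{i\in I} w_i$ and let $X_i$ be the indicator that $i\in J$, so $X = \sum_{i\in I} w_i X_i$ with $\mathbb{E}[X] = W/2$. Since the $X_i$ are independent with $\operatorname{Var}(X_i) = 1/4$, we get $\operatorname{Var}(X) = \frac14 \sum_{i\in I} w_i^2 \le \frac14 \left(\max_k w_k\right) W \le \frac{1}{4c} W^2$. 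By Chebyshev's inequality, for any $t>0$,
\[
\Pr\!\left[\, X \le \tfrac{W}{2} - t\, \right] \le \frac{\operatorname{Var}(X)}{t^2} \le \frac{W^2}{4c\,t^2}.
\]
Choosing $t = \lambda W$ for a suitable constant $\lambda$ gives failure probability at most $\frac{1}{4c\lambda^2}$; to push this below $0.24$ it suffices to take $\lambda$ bounded away from $0$ (e.g. $\lambda^2 > \frac{1}{0.96\,c}$, which is satisfiable for $c\ge 3$). Then with probability at least $0.76$ we have $X \ge \left(\tfrac12 - \lambda\right) W$, so we may set $\beta(c) := \tfrac12 - \sqrt{\tfrac{1}{0.96\,c}}$ (or any cleaner closed form obtained by optimizing the constants), which is manifestly non-decreasing in $c$ and independent of $I$.

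The one wrinkle is making $\beta(c)$ both \emph{positive} and \emph{non-decreasing} simultaneously across the whole range $c \ge 3$: the raw bound $\tfrac12 - \sqrt{1/(0.96c)}$ is already increasing in $c$, but one should check it is positive at $c=3$ (it is: $\sqrt{1/2.88}\approx 0.59 > 1/2$ — so in fact the crude Chebyshev bound with $\lambda$ chosen to hit probability exactly $0.76$ is \emph{not} quite enough at $c=3$, and this is the main obstacle). I see two ways around it: (i) sharpen the tail bound — use a Bernstein/Bennett inequality for the bounded independent increments $w_i(X_i - \tfrac12)$, whose ranges are at most $w_k \le W/c$ and whose variance is $\le W^2/(4c)$, giving an exponentially small failure probability $\exp(-\Omega(c))$ rather than $1/\Omega(c)$, which for $c\ge 3$ is comfortably below $0.24$ with room to spare for a positive $\beta(c)$; or (ii) keep Chebyshev but define $\beta(c)$ as the best deviation level achievable at confidence $0.76$, i.e. $\beta(c) = \tfrac12 - \tfrac{1}{\sqrt{0.96\,c}}$ only once $c$ is large enough, and a smaller explicit positive constant for the initial range, patched to stay non-decreasing. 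I would go with route (i): Bernstein's inequality applied to $\sum_i w_i(X_i-\tfrac12)$ yields, for $t = \lambda W$,
\[
\Pr\!\left[\, X < \tfrac{W}{2} - \lambda W\,\right] \le \exp\!\left( -\frac{\lambda^2 W^2/2}{W^2/(4c) + W^2\lambda/(3c)} \right) = \exp\!\left( -\frac{2c\lambda^2}{1 + \tfrac{4}{3}\lambda} \right),
\]
and for $c \ge 3$ one checks a fixed $\lambda$ (say $\lambda = 1/4$) makes the right-hand side $< 0.24$; then $\beta(c) := \tfrac12 - \lambda'$ where $\lambda'$ is the smallest deviation meeting the $0.76$ bound, which solves a one-variable inequality monotone in $c$, hence $\beta$ is non-decreasing.

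The remaining steps are routine: verify the variance computation $\sum w_i^2 \le (\max_k w_k)\sum w_i$, verify the Bernstein hypotheses (independence, zero mean of $X_i - \tfrac12$, almost-sure bound $|w_i(X_i-\tfrac12)| \le w_k/2 \le W/(2c)$, variance bound), solve for the largest $\lambda'$ with $\exp(-2c{\lambda'}^2/(1+\tfrac43\lambda')) \le 0.24$, and set $\beta(c) = \tfrac12 - \lambda'(c)$; monotonicity in $c$ is immediate from the left side of that inequality being increasing in $c$ for fixed $\lambda'$, so the threshold $\lambda'(c)$ is non-increasing and $\beta(c)$ non-decreasing. Independence of $I$ is clear since every bound used depends on $I$ only through the normalization by $W$.
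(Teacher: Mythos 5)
The high-level idea of passing from the bounded-weight hypothesis to a variance bound $\sum_i w_i^2 \le W^2/c$ and then invoking a tail inequality is a reasonable alternative to the paper's combinatorial argument. But the specific route you take through Bernstein does not close at the low end of the allowed range $c \ge 3$, and this is not a minor patching issue --- it is exactly where the lemma is hardest. Writing your bound (with the denominator term corrected to $Mt/3 = W^2\lambda/(6c)$) as
\[
\Pr\bigl[X < (\tfrac12-\lambda)W\bigr] \le \exp\Bigl(-\frac{2c\lambda^2}{1+\tfrac{2}{3}\lambda}\Bigr),
\]
the exponent at $c=3$ is $\frac{6\lambda^2}{1+2\lambda/3}$, which is increasing in $\lambda$ and at the extreme $\lambda = 1/2$ (where $\beta$ vanishes) equals $9/8$; so the bound is at least $e^{-9/8}\approx 0.325 > 0.24$ for \emph{every} admissible $\lambda\in(0,1/2)$. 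Your version with $4\lambda/3$ in the denominator is strictly weaker and fares worse. In particular your sanity check, that $\lambda = 1/4$ gives a bound below $0.24$ ``for $c\ge 3$ with room to spare,'' is incorrect: at $c=3$, $\lambda=1/4$, your expression evaluates to $\exp(-9/32)\approx 0.75$, and even at $c=10$ it is about $0.39$. So route (i), as stated, does not prove the lemma for small $c$, and the argument genuinely fails rather than merely needing better constants.

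The tail-bound strategy can be rescued, but you need a sharper inequality. Hoeffding applied to the increments $w_i(X_i-\tfrac12)\in[-w_i/2,w_i/2]$ gives $\Pr[X < (\tfrac12-\lambda)W]\le \exp(-2\lambda^2 W^2/\sum_i w_i^2)\le \exp(-2c\lambda^2)$, which at $c=3$ admits $\lambda\approx 0.489$ with failure probability $\approx 0.238 < 0.24$, yielding a (barely) positive $\beta(3)\approx 0.01$; monotonicity and independence of $I$ then follow as you argue. Note this is genuinely different from the paper's proof, which avoids moment-generating-function tail bounds entirely and instead invokes a ballot-style ``balanced sampling'' lemma of Feige (the probability-$0.76$ event that a fair-coin-flip walk $S_i$ never dips below $\lfloor i/3\rfloor$), combined with a combinatorial split of the weights into a ``good'' part where weights do not decay too quickly and a ``bad'' part whose total weight is a small fraction of $W$. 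The paper's method is tuned to squeeze out usable explicit constants ($\beta(8)\approx 0.1$, $\beta(111)\approx 0.262$) that are then plugged into the competitive-ratio calculations; any replacement derivation of $\beta(c)$ would need to be threaded through the rest of the paper. If you replace Bernstein with Hoeffding your argument is sound, and in fact the resulting $\beta(c)$ compares favorably for moderate $c$, but as written the proposal has a real gap at $c=3$.
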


We begin the proof of Lemma~\ref{lemma:concentration} with a restatement of Lemma~1 from~\cite{Feige2005} since it
plays a crucial role in our argument.  Note that we choose a different
parameterization than they do, since in our setting the balance
between approximation ratio and probability of success is different.

\begin{lemma}
\label{lem:balanced-sampling}
Let $X_i$, for $i\geq1$, be indicator random variables for a sequence
of independent, fair coin flips.  Then, for $S_i=\sum_{k=1}^{i}X_k$, we
have
$\Pr [\forall i, \ S_i \geq \floor{i/3}] \geq 0.76.$
\end{lemma}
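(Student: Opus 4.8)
The plan is to reduce the ``for all $i$'' condition to a condition that only needs to be checked at indices divisible by $3$, and then to analyze the resulting random walk (one step per block of three flips) by a first-step/renewal argument, solving a cubic that factors nicely.

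First I would observe that the event $\{\forall i\ge 1,\ S_i\ge\floor{i/3}\}$ is in fact \emph{equal} to $\{\forall m\ge 1,\ S_{3m}\ge m\}$. Writing $i=3m+r$ with $r\in\{0,1,2\}$ we have $\floor{i/3}=m$ and $S_{3m+r}\ge S_{3m}$, so controlling $S_{3m}$ automatically controls $S_{3m+1}$ and $S_{3m+2}$; and for $i\in\{1,2\}$ the bound $\floor{i/3}=0$ is vacuous since $S_i\ge 0$ always. Conversely the full event in particular forces $S_{3m}\ge\floor{3m/3}=m$. This step is pure bookkeeping and I do not expect any difficulty.

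Next, group the flips into consecutive blocks of three and let $Z_t=X_{3t-2}+X_{3t-1}+X_{3t}$ be the number of heads in block $t$, so $Z_t\sim\mathrm{Binomial}(3,1/2)$ and $S_{3m}=\sum_{t\le m}Z_t$. Setting $W_t:=Z_t-1$, the increments $W_t$ are i.i.d., take values $-1,0,1,2$ with probabilities $\tfrac18,\tfrac38,\tfrac38,\tfrac18$, have positive mean $\tfrac12$, and crucially never decrease by more than $1$ in a single step (the walk is skip-free to the left). Letting $T_m=\sum_{t\le m}W_t$, the target probability is $\Pr[\forall m\ge 1,\ T_m\ge 0]=1-q$, where $q:=\Pr[\exists m,\ T_m=-1]$ is the probability the walk ever goes strictly negative (it must first hit $-1$, since downward steps have unit size and $T_0=0$). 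By skip-freeness to the left and the strong Markov property, the probability of ever reaching level $-k$ equals $q^k$, so conditioning on the first increment $W_1\in\{-1,0,1,2\}$ gives $q=\tfrac18+\tfrac38 q+\tfrac38 q^2+\tfrac18 q^3$, i.e. $q^3+3q^2-5q+1=0$, which factors as $(q-1)(q^2+4q-1)=0$. The root $q=1$ is excluded because the positive drift forces $T_m\to+\infty$ almost surely (so $q<1$), leaving the smallest nonnegative root $q=\sqrt5-2$. Hence $\Pr[\forall i,\ S_i\ge\floor{i/3}]=1-q=3-\sqrt5=0.7639\ldots\ge 0.76$, which is precisely why the constant $0.76$ appears in the statement.

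The only genuinely delicate points are (i) justifying the identity ``probability of reaching $-k$ is $q^k$'' for the skip-free-to-the-left walk, which is a standard strong Markov decomposition (reach $-1$, then $-2$, etc.), and (ii) selecting the correct root of the cubic, which requires noting via the law of large numbers that the drift is positive so $q<1$ and $q=1$ is spurious. Everything else — the reduction to multiples of $3$, the binomial description of the block increments, and the final numerical comparison $3-\sqrt5\ge 0.76$ — is elementary.
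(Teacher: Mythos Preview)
Your proof is correct. The reduction to indices that are multiples of $3$ is valid (monotonicity of $S_i$ plus $\lfloor i/3\rfloor=0$ for $i\in\{1,2\}$), the block increments $W_t\in\{-1,0,1,2\}$ have the right distribution, the skip-free-to-the-left argument giving $\Pr[\text{ever hit }-k]=q^k$ is the standard strong-Markov decomposition, and the cubic $q^3+3q^2-5q+1=(q-1)(q^2+4q-1)$ indeed yields $q=\sqrt{5}-2$ once the positive drift excludes $q=1$, so the probability is exactly $3-\sqrt{5}\approx 0.7639$.

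As for comparison: the paper does not prove this lemma at all. It is explicitly introduced as ``a restatement of Lemma~1 from~\cite{Feige2005}'' with a different choice of parameters, and is simply invoked as a black box in the proof of Lemma~\ref{lemma:concentration}. So you have supplied a clean, self-contained argument where the paper defers to a citation; there is no competing approach in the paper to contrast with yours.
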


We now proceed to prove Lemma~\ref{lemma:concentration}.  While we do
not give a closed form for the approximation factor $\lOneApprox$ in
the statement of the lemma, we define it implicitly as
\begin{equation*}
  \lOneApprox
  =\max_{0<\decRate<1}
  \left(\frac{\decRate}{2+\decRate}\right)
  \left(1-\frac{2}{c(1-\decRate)}\right),
\end{equation*}
and give explicit values in Table~\ref{tbl:lemma_one_approxes} for
particular values of $c$ that we invoke the lemma with.

\begin{proof}[Proof of Lemma~\ref{lemma:concentration}]
Our general approach will be to separate our set of weights $\wSet$ into a
``good'' set $\gSet$ and a ``bad'' set $\bSet$.  At a high
level, Lemma~\ref{lem:balanced-sampling} guarantees us that at worst,
we will accept a weight every third time we flip a coin.  So the
good case is when weights do not decrease too quickly; this intuition
guides our definitions of $\gSet$ and $\bSet$.

Let $\decRate\in(0,1)$ be a lower bound on ``acceptable'' rates of
decrease; we tune the exact value later.  Throughout, we use
$\wgt(S)$, $\minWgt(S)$, and $\maxWgt(S)$ to denote the total, minimum, and
maximum weights of a set $S$.  We form $\gSet$ and $\bSet$ as follows.

Initialize $\bSet=\emptyset$ and $i=1$.  Consider $\wSet$ in order of
decreasing weight.  While $\wSet\neq\emptyset$, we repeat the
following.  Let $P$ be the largest prefix such that
$\minWgt(P)\ge\decRate\cdot\maxWgt(P)$.  If $\abs{P}\le2$, move $P$
from $\wSet$ to $\bSet$, i.e. set $\bSet:=\bSet\cup P$ and
$\wSet:=\wSet\setminus P$.  Otherwise, $\abs{P}\ge3$ and we define
$\gSet_i$ to be the $3$ largest elements of $P$; remove them from
$\wSet$ (i.e. set $\wSet:=\wSet\setminus\gSet_i$); and increment $i$
by $1$.  Once we are done, we define $\gSet=\cup_{i}\gSet_i$.

First, we show that the total weight in $\bSet$ cannot be too large.
Note that we add at most $2$ elements at a time to $\bSet$; and when
we do add elements, we know that all remaining elements in $\wSet$ (and
hence all later additions to $\bSet$) are smaller by more than a
factor of $\decRate$.  Thus, we can see that
\begin{equation*}
  \wgt(\bSet)
  \le\sum_{i\ge0}2\decRate^{i}\cdot\maxWgt(\bSet)
  \le\frac{2\maxWgt(\bSet)}{1-\decRate}
  \le\frac{2\wgt(\wSet)}{c(1-\decRate)},
\end{equation*}
by our assumption that no individual weight is more than $\wgt(\wSet)/c$.

Next, we show that with probability at least $0.76$, we can lower bound
the fraction of weight we keep from $\gSet$.  Consider applying
Lemma~\ref{lem:balanced-sampling}, flipping coins first for the
weights in $\gSet$ in order by decreasing weight.  Note that by the
time we finish flipping coins for $\gSet_i$, we must have added at
least $i$ weights to $J$; hence the $i^{\text{th}}$ weight we add to
$J$ must have value at least $\minWgt(\gSet_i)$.  On the other hand,
we know that
\begin{equation*}
  \wgt(\gSet_i) 
  \le 2\maxWgt(\gSet_i) + \minWgt(\gSet_i) 
  \le \left(\frac{2}{\decRate}+1\right)\minWgt(\gSet_i),  
\end{equation*}
and so summing over $i$ we can see that elements we accept have total
weight $\wgt(J)\ge(\frac{\decRate}{2+\decRate})\wgt(\gSet)$.

Combining our bounds for the weights of $\gSet$ and $\bSet$, we can
see that with probability $0.76$ the elements we accept have weight
\begin{equation*}
  \wgt(J)
  \ge\left(\frac{\decRate}{2+\decRate}\right)\wgt(\gSet)
  =\left(\frac{\decRate}{2+\decRate}\right)(\wgt(\wSet)-\wgt(\bSet))
  \ge\left(\frac{\decRate}{2+\decRate}\right)\left(1-\frac{2}{c(1-\decRate)}\right)\wgt(\wSet);
\end{equation*}
optimizing the above with respect to $\decRate$ for a fixed $c$ gives
the claimed result.  Note that for each fixed
$\decRate\in(0,1)$ our approximation factor is increasing in $c$, and so
 the  optimal value $\lOneApprox$ must be increasing in $c$ as well.
\end{proof}
\begin{figure}
  \begin{center}
    \begin{tabular}{|c|c|c|}
      \hline
      $c$ given & $\decRate$ chosen & $\lOneApprox$ achieved \\
      \hline
      $111$ & $0.84$ & $\approx0.262$\\
      $15/2$ & $0.46$ & $\approx0.094$\\
      $\lOneSize$ & $\lOneTune$ & $\approx\lOneRatioTable$\\
      \hline
    \end{tabular}
  \end{center}
  \caption{Some specific values of approximation ratio achieved by
    Lemma~\ref{lemma:concentration}.}
  \label{tbl:lemma_one_approxes}
\end{figure}

\section{Unconstrained Profit Maximization}
\label{sec:unconstrained}
\label{sect:unconstrained}
We begin by developing an algorithm for the unconstrained
version of the generalized secretary problem with $\Feas=2^U$, which
already exhibits some of the challenges of the general setting. Note
that this setting captures as a special case the knapsack secretary
problem of \cite{babaioff2007knapsack} where the goal is to maximize
the total value of a subset of size at most a given bound. In fact in
the offline setting, the generalized secretary problem is very similar
to the knapsack problem. If all elements have the same (unit) size,
then the optimal offline algorithm orders elements in decreasing order
of value and picks the largest prefix in which each element
contributes a positive marginal profit. When element sizes are
different, a similar approach works: we order elements by density this
time, and note that either a prefix of this ordering or a single
element is a good approximation (much like the greedy
$2$-approximation for knapsack). 

\begin{algorithm}
\caption{Offline algorithm for single-dimensional $(\profit,2^U)$}
\label{Algorithm:greedy}
\begin{algorithmic}[1]
\STATE  Initialize set $\I \leftarrow \{ a \in U \ \mid  \   \rho(a) \geq \dcost \left( s(a) \right)  \}$
\STATE Initialize selected set $\alg \leftarrow \emptyset$
\STATE Sort $\I$ in decreasing order of density. 
\FOR {$i \in \I$}   
     \IF {marginal profit for the $i$th element, $\profit_\alg (i)  \geq 0$ } 
     
           \STATE $\alg \leftarrow \alg \cup \{ i \}$
    \ELSE 
         \STATE Exit loop.
    \ENDIF
\ENDFOR
\STATE Set $m := \argmax_{\{a \in U\}} \profit(a)$ 
\COMMENT{the most profitable element}
\IF {$\profit(m) > \profit(\alg)$}
        \STATE Output set $\{ m \}$
 \ELSE 
         \STATE Output set $\alg$       
\ENDIF

\end{algorithmic}
\end{algorithm}

Precisely, we show that $|\OPT\cap\F|\le 1$, and we can therefore
focus on approximating $\profit$ over the set $\I$. Furthermore, let
$\alg(U)$ denote the greedy subset obtained by
Algorithm~\ref{Algorithm:greedy}, which considers elements in $\I$ in
decreasing order of density and picks the largest prefix where every
element has nonnegative marginal profit. The following lemma implies
that one of $\alg(U)$ or the single best element is a $3$-approximation to
$\OPT$.


\begin{lemma}
\label{lemma:offline-knapsack}
We have that
$\profit(\OPT) \leq \profit(\FOPT) + \max_{e \in U} \profit(e) \leq \profit(\alg(U)) + 2 \max_{e \in U} \profit(e)$. Therefore the greedy offline algorithm 
(Algorithm~\ref{Algorithm:greedy})
achieves a 3-approximation for $(\profit,2^U)$.
\end{lemma}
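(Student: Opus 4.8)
The plan is to establish the two inequalities separately and then observe that together with the definition of Algorithm~\ref{Algorithm:greedy} they yield a $3$-approximation. For the first inequality, $\profit(\OPT) \leq \profit(\FOPT) + \max_{e \in U} \profit(e)$, I would start from $\OPT$ and split it as $\OPT = (\OPT \cap \I) \cup (\OPT \cap \F)$. The key structural fact to prove is that $|\OPT \cap \F| \leq 1$: for an element $e \in \F$ we have $\rho(e) < \dcost(s(e))$, meaning the marginal cost of the last unit of $e$ exceeds its density, so accepting $e$ fully is never locally optimal; more carefully, if $\OPT$ contained two elements of $\F$, one could shave a fraction off the larger-size one (or swap in a fractional copy) and strictly increase profit, using convexity of $\cost$ and hence monotonicity of $\dcost$. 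Once $|\OPT \cap \F| \leq 1$, we have $\profit(\OPT) \leq \profit(\OPT \cap \I) + (\text{value of at most one element of } \F)$, and since $\profit(\FOPT) \geq \profit(\OPT(U \cap \I)) \geq \profit(\OPT \cap \I)$ (stated in the preliminaries) and the leftover single element contributes at most $\val(e) \le \profit(e) + \cost(0)$-type slack bounded by $\max_{e} \profit(e)$ after being accounted for correctly, the first inequality follows. I would want to be a little careful here about how the cost of the $\F$-element interacts with the cost of the $\I$-part, since $\cost$ is applied to the combined size; convexity/superadditivity of $\cost$ (with $\cost(0)\ge 0$) is what lets us say $\cost(s(\OPT)) \ge \cost(s(\OPT\cap\I))$, so dropping the $\F$ element only helps the $\I$-part.

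For the second inequality, $\profit(\FOPT) \leq \profit(\alg(U)) + 2\max_{e \in U}\profit(e)$, the idea is the standard greedy-knapsack comparison adapted to the density ordering. Order all of $\I$ by decreasing density. The greedy set $\alg(U)$ is the maximal prefix of this order with nonnegative marginal profit throughout. I would argue that $\FOPT$, being the optimal feasible fractional subset of $\I$, is "dominated" by a prefix of the density order: because $\cost$ is convex, the fractional optimum fills elements in decreasing density order up to the point where marginal density drops below marginal cost, so $\FOPT$ is essentially a (possibly fractional) prefix $P_\gamma$ for the threshold density $\gamma$ where $\rho = \dcost(s)$. Then compare this fractional prefix to $\alg(U)$: either the greedy integral prefix already reaches past where $\FOPT$ stops — in which case $\profit(\alg(U)) \ge \profit(\FOPT)$ directly — or greedy halts earlier because some element's integral marginal profit went negative, in which case $\FOPT$ minus $\alg(U)$ is covered by that one boundary element plus one more element of slack, each bounded by $\max_e \profit(e)$. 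This "prefix versus prefix, off by the boundary element" bookkeeping is the $2\max_e\profit(e)$ term.

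Finally, combining the two inequalities gives $\profit(\OPT) \leq \profit(\alg(U)) + 2\max_{e\in U}\profit(e) \le \profit(\alg(U)) + 2\,\profit(m)$ where $m$ is the most profitable element. Since $\max(\profit(\alg(U)), \profit(m)) \ge \frac{1}{3}(\profit(\alg(U)) + 2\profit(m)) \ge \frac{1}{3}\profit(\OPT)$, and Algorithm~\ref{Algorithm:greedy} outputs whichever of $\alg(U)$ or $\{m\}$ has larger profit, it is a $3$-approximation. The main obstacle I anticipate is the second inequality: pinning down precisely the sense in which $\FOPT$ is a density prefix and accounting cleanly for the discrepancy between the fractional optimum and the integral greedy prefix (in particular handling the element at the density threshold and making sure the cost terms, which are nonlinear in total size, line up) is where the real work lies. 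The $|\OPT\cap\F|\le1$ claim, while needing an exchange argument, should be comparatively routine given convexity of $\cost$.
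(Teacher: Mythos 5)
Your proposal follows essentially the same route as the paper: establish $|\OPT\cap\F|\le 1$ by an exchange argument using convexity, invoke subadditivity of $\profit$ to drop the unique $\F$-element (giving the first inequality), and observe that $\FOPT$ is a fractional density prefix $P_{\rhoc}\cup\{\alpha e\}$ whose integral part is exactly $\alg(U)$, so only the one fractional boundary element $\alpha e$ contributes slack (giving the second inequality). Two small corrections to your bookkeeping: the $2\max_{e}\profit(e)$ in the combined bound comes one from each inequality, not two from the second step as your ``boundary element plus one more element of slack'' phrasing suggests; and there is no case analysis needed for greedy versus $\FOPT$ --- $\alg(U)$ coincides exactly with the integral part of $\FOPT$ because the density of every fully-included element of $\FOPT$ exceeds the marginal cost at $s^*=s(\FOPT)$, so every such element has nonnegative integral marginal profit, while the first excluded element does not. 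With those tightened, your argument is the paper's argument.
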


\begin{proof}

We first show that $\OPT$ has at most one element from $\Fr$. For 
contradiction, assume that $\OPT$ contains at least two elements $f_1,
f_2\in\Fr$. Since densities are unique, without loss of generality, we assume $\rho(f_1) > 
\rho(f_2)$. 

Recall that $\Fr$ is precisely the set of elements for which it is
optimal to accept a strictly fractional amount.  Let $\alpha_1,
\alpha_2 < 1$ be the optimal fractions for $f_1$ and $f_2$,
i.e. $\argmax_{\alpha} \profit(\alpha f_1 ) = \alpha_1$ and 
$ \argmax_{\alpha} \profit(\alpha f_2 )=  \alpha_2$.  Then
adding a fractional amount of any element with density at most $\rho(f_1)$
to $\{\alpha_1 f_1\}$ results in strictly decreased profit. But this
implies $\profit(\OPT)< \profit(\OPT \setminus\{f_2\})$, contradicting
the optimality of $\OPT$.

 
Let $f$ be the unique element in $\OPT \cap \Fr$. By subadditivity, we get
that $\profit(\OPT\setminus \{f\}) + \profit(f) \geq \profit(\OPT) $. 
Since $\OPT\setminus \{ f \}\subseteq \I$, we have  
$\profit(\FOPT) \geq \profit(\OPT\setminus \{ f \}) $. 
In the rest of the proof we focus on approximating $\profit(\FOPT)$.


Note that $\FOPT$ is a fractional density prefix of $\I$. So let 
$\FOPT = P_\rhoc \cup \{\alpha e\}$, for some $e$ and $\alpha < 1$. 
The subadditivity of $\profit$ implies $\profit(\FOPT) \leq
\profit(\preflarge) + \profit(\{\alpha e\})$.
Note that Algorithm~\ref{Algorithm:greedy} selects 
$\alg=\preflarge$, and that $\profit(\{\alpha e\}) \leq
\profit(e)$ since $e \in\I$.  


Hence, combining the above inequalities we get $\profit(\alg) +
\profit(e) + \profit(f) \geq \profit(\OPT) $. This in turn proves the required claim.
\end{proof}



The offline greedy algorithm suggests an online solution as well. In
the case where a single element gives a good approximation, we can use
the classical secretary algorithm to get a good competitive ratio. In
the other case, in order to get good competitive ratio, we merely need
to estimate the smallest density, say $\rhoc$, in the prefix of
elements that the offline greedy algorithm picks, and then accept
every element that exceeds this threshold.





We pick an estimate for $\rhoc$ by observing the first few elements of
the stream $U$. Note that it is important for our estimate of $\rhoc$
to be no smaller than $\rhoc$. In particular, if there are many
elements with density just below $\rhoc$, and our algorithm uses a
density threshold less than $\rhoc$, then the algorithm may be fooled
into mostly picking elements with density below $\rhoc$ (since
elements arrive in random order), while the optimal solution picks
elements with densities far exceeding $\rhoc$. We now describe how to
pick an overestimate of $\rhoc$ which is not too conservative, that is,
such that there is still sufficient profit in elements whose densities exceed
the estimate.

\label{sect:online_unconstrained}
In the remainder of this section, we assume that every element has
profit at most $\frac{1}{\const[1]+1} \profit(\OPT)$ for an appropriate
constant $\const[1]$, to be defined later. 
(If this does not hold, the classical secretary algorithm obtains an expected profit of at least $\frac{1}{ e ( \const[1]+1 )} \profit(\OPT)$). Then Lemma~\ref{lemma:offline-knapsack} implies 
$\profit(\FOPT) \geq \left( 1 - \frac{1}{(\const[1]+1)} \right) \profit(\OPT)$, $\max_{e\in U}\profit(e)\le \frac {1}{\const[1]} \profit(\FOPT)$, and
$\profit(\alg(U))\geq \left(1 - \frac{1}{\const[1]} \right) \profit(\FOPT)$. 

We divide the stream $U$ into two parts $X$ and $Y$, where $X$ is a
random subset of $U$. Our algorithm unconditionally rejects elements
in $X$ and extracts a density threshold $\guess$ from this set. Over
the remaining stream $Y$, it accepts an element if and only if its
density is at least $\guess$ and if it brings in strictly positive
marginal profit. Under the assumption of small element profits we can
apply Lemma \ref{lemma:concentration} to show that
$\profit(X \cap \alg(U))$ is concentrated and is a large enough
fraction of $\profit(\OPT)$. This implies that with high probability
$\profit(X \cap \alg(U))$ (which is a prefix of $\alg(X)$) is a
significant fraction of $\profit(\alg(X))$. Therefore we attempt to
identify $X \cap \alg(U)$ by looking at profits of prefixes of $X$.

\begin{algorithm}
\caption{Online algorithm for single-dimensional $(\profit,2^U)$}
\label{Algorithm:weighted}
\begin{algorithmic}[1]
\STATE  With probability $1/2$ run the classic secretary algorithm
to pick the single most profitable element else execute the following steps.
\STATE Draw $k$ from $\textrm{Binomial}(n,1/2)$.
\STATE Select the first $k$ elements to be in the sample $X$. Unconditionally reject these elements.
\STATE Let $\guess$ be largest density such that $\profit(P_\guess^X) \geq \eone \left(1 - \frac{1}{\const[1]} \right) \profit(\FOPT(X)) $ for constants $\eone$ and $\const[1]$ to be specified later.
\STATE Initialize selected set $O \leftarrow \emptyset$.
\FOR {$i \in Y = U \setminus X$}
    \IF { $\profit(O \cup \{i\}) - \profit(O) \geq 0$ and $\rho(i) \geq \guess$ and $i \notin \F$}
           \STATE $O \leftarrow O \cup \{ i \}$
    \ELSE 
         \STATE Exit loop.
    \ENDIF
\ENDFOR
\end{algorithmic}
\end{algorithm}




We will need the following lemma about $\alg()$.
\begin{lemma}
\label{lemma:monotone-integral}
For any set $S$, consider subsets $A_1, A_2 \subseteq \alg(S)$. If $A_1 \supseteq A_2$, then $\profit(A_1) \geq \profit(A_2)$. In other words, $\profit$ is monotone-increasing when restricted to $\alg(S)$ for all $S \subset U$.
\end{lemma}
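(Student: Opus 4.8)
The plan is to build $A_1$ from $A_2$ by inserting the elements of $A_1\setminus A_2$ one at a time and to show that no insertion decreases profit. Formally, by induction on $|A_1\setminus A_2|$ it suffices to prove that for every $A\subseteq\alg(S)$ and every $e\in\alg(S)\setminus A$ we have $\profit(A\cup\{e\})\ge\profit(A)$. Since $\profit(A\cup\{e\})-\profit(A)=\val(e)-\bigl(\cost(s(A)+s(e))-\cost(s(A))\bigr)$, this is precisely the statement that the marginal cost of the $s(e)$ new units is at most $\val(e)$.

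To bound that marginal cost I would use two observations. First, $A\cup\{e\}\subseteq\alg(S)$, so $s(A)+s(e)\le s(\alg(S))$; since $\dcost$ is non-decreasing (convexity of $\cost$), each of the $s(e)$ terms summing to $\cost(s(A)+s(e))-\cost(s(A))$ is at most $\dcost\bigl(s(A)+s(e)\bigr)\le\dcost\bigl(s(\alg(S))\bigr)$, so the marginal cost is at most $s(e)\cdot\dcost\bigl(s(\alg(S))\bigr)$. Hence the lemma follows once we know the clean fact that $\dcost\bigl(s(\alg(S))\bigr)\le\rho(e)$ for every $e\in\alg(S)$ --- every element the greedy keeps is dense enough to pay for the last (most expensive) unit of the whole greedy solution: then $\cost(s(A)+s(e))-\cost(s(A))\le s(e)\,\dcost(s(\alg(S)))\le s(e)\,\rho(e)=\val(e)$, and iterating over $A_1\setminus A_2$ completes the argument.

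The step I expect to be the real obstacle is establishing $\dcost\bigl(s(\alg(S))\bigr)\le\rho(e)$ for all $e\in\alg(S)$. Because Algorithm~\ref{Algorithm:greedy} scans $\I$ in decreasing order of density and keeps an initial segment, $\alg(S)$ is a density prefix, so it is enough to check this for the least-dense kept element $e^{\circ}$ (every other kept element is strictly denser). For $e^{\circ}$ I would combine its membership in $\I$, namely $\rho(e^{\circ})\ge\dcost(s(e^{\circ}))$, with the fact that greedy accepted $e^{\circ}$ with nonnegative marginal profit --- its $s(e^{\circ})$ units sitting on top of the rest of $\alg(S)$ --- which gives $\val(e^{\circ})\ge\cost(s(\alg(S)))-\cost(s(\alg(S))-s(e^{\circ}))$. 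The $\I$-condition controls $\dcost$ at size $s(e^{\circ})$ and the acceptance inequality controls the \emph{average} marginal cost over $e^{\circ}$'s own slots; bootstrapping these into a bound on the \emph{topmost} marginal cost $\dcost(s(\alg(S)))$ is the delicate point, and it is exactly what forbids the degenerate scenario of a small-but-dense element stacked on a large one entering $\alg(S)$ with a locally negative marginal contribution. Everything else is routine once this is in hand.
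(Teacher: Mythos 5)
Your reduction to the single‐element step $\profit(A\cup\{e\})\ge\profit(A)$ and the bound $\cost(s(A)+s(e))-\cost(s(A))\le s(e)\,\dcost(s(\alg(S)))$ are fine, and you correctly identify the real issue: showing $\dcost(s(\alg(S)))\le\rho(e^\circ)$ for the least-dense element $e^\circ$ of $\alg(S)$. But you leave that step unproved, and the two ingredients you propose to derive it from --- $\rho(e^\circ)\ge\dcost(s(e^\circ))$ (membership in $\I$) and $\val(e^\circ)\ge\cost(s(\alg(S)))-\cost(s(\alg(S))-s(e^\circ))$ (greedy acceptance) --- do \emph{not} imply it. Concretely, take $\cost(1)=1$, $\cost(2)=5$, $\cost(3)=11$ (so $\dcost=(1,4,6)$), a size-$1$, value-$100$ element $e_1$ on top, and $e^\circ$ of size $2$ and value $10$ (density $5$). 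Then $\rho(e^\circ)=5\ge\dcost(2)=4$ and $\val(e^\circ)=10\ge\cost(3)-\cost(1)=10$, yet $\dcost(3)=6>5=\rho(e^\circ)$. The acceptance inequality only controls the \emph{average} of the last $s(e^\circ)$ marginal costs, and the $\I$-condition controls $\dcost$ at $s(e^\circ)$, not at $s(\alg(S))$; there is no way to bootstrap the topmost marginal cost from these two alone.

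The missing ingredient is the optimality of the \emph{fractional} optimum $\FOPT(S)$, and once you invoke it the paper's proof becomes the natural route. Since $\FOPT(S)$ contains every element of $\alg(S)$ fully, shaving off a fraction of $e^\circ$ cannot increase profit, and that first-order condition is precisely $\dcost(s(\alg(S)))\le\rho(e^\circ)$ --- the marginal cost of the last occupied slot cannot exceed the density of the element filling it. The paper packages this more cleanly: it observes that for any $B\subseteq\alg(S)$ one has $\FOPT(B)=B$ (every element's full inclusion remains optimal on the smaller base, by convexity), and then for $A_2\subseteq A_1\subseteq\alg(S)$ the containment $\FOPT(A_2)=A_2\subseteq A_1=\FOPT(A_1)$ together with the optimality of $\FOPT(A_1)$ gives $\profit(A_2)\le\profit(A_1)$ directly, with no marginal-cost bookkeeping. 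So your decomposition is sound, but the crux must be argued via the fractional optimum, not via $\I$-membership plus greedy acceptance.
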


\begin{proof}
We observe that the fractional greedy algorithm sorts its input $S$ by
decreasing order of density, and $\alg(S)$ consists of the top $|\alg(S)|$
elements under that ordering.  Since $\FOPT(S)$ contains each element
in $\alg(S)$ in its entirety, we can see that $\FOPT(B) = B$ for any
subset $B$ of $\alg(S)$.  So for $A_2\subseteq A_1 \subseteq \alg(S)$,
we have that $\FOPT(A_2)=A_2\subseteq A_1=\FOPT(A_1)$; by the
optimality of $\FOPT$, this implies that $\profit(A_2)\le\profit(A_1)$
as claimed.
\end{proof}



We define two good events. $E_1$ asserts that $X\cap\alg(U)$ has high
enough profit. Our final output is the set $P_{\guess}^Y$. $E_2$
asserts that the profit of $P_{\guess}^Y$ is a large enough fraction
of the profit of $P_{\guess}$. Recall that $\alg(U)$ is a density prefix, say  $P_{\rhoc}$, and
so $X \cap \alg(U) = P_{\rhoc}^X$. We define the event $E_1$ as follows.
\begin{align*}
 E_1: &  \ \ \profit(P_\rhoc^X)   > \eone \ \profit(P_\rhoc)
 \end{align*}
where $\eone$ is a constant to be specified later. Conditioned on
 $E_1$, we have $\profit(P_\rhoc^X)
 > \eone \left( 1-1/\const[1] \right)\profit(\FOPT)\ge \eone 
\left( 1-1/\const[1] \right)\profit(\FOPT(X))$. Note
 that threshold $\guess$, as selected by
 Algorithm \ref{Algorithm:weighted}, is the largest density such that
 $\profit(P_\guess^X) \geq \eone \left(1 -
 1/\const[1] \right) \profit(\FOPT(X))$. Therefore, $E_1$ implies
 $\guess \geq \rhoc$, and we have the following lemma.


\begin{lemma}
Conditioned on $E_1$, $O=P_\guess \cap Y \subseteq \alg(U)$.
\end{lemma}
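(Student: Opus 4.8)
The statement claims that, conditioned on $E_1$, the set $O = P_\guess \cap Y$ produced by Algorithm~\ref{Algorithm:weighted} is contained in $\alg(U) = P_\rhoc$. The key fact already established in the text just above the lemma is that $E_1$ forces $\guess \geq \rhoc$: indeed $E_1$ gives $\profit(P_\rhoc^X) > \eone(1-1/\const[1])\profit(\FOPT) \geq \eone(1-1/\const[1])\profit(\FOPT(X))$, and $\guess$ is \emph{defined} as the largest density threshold whose $X$-prefix has profit at least $\eone(1-1/\const[1])\profit(\FOPT(X))$, so $\guess$ cannot be smaller than $\rhoc$. The plan is to leverage this inequality together with the observation that $P_\guess$ is a density prefix: since $\guess \geq \rhoc$, every element of $P_\guess$ has density at least $\guess \geq \rhoc$, hence lies in $P_\rhoc = \alg(U)$. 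Restricting to $Y$ only shrinks the set, so $O = P_\guess \cap Y \subseteq P_\guess \subseteq P_\rhoc = \alg(U)$.

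The one subtlety to address is that $O$ as constructed by the algorithm is not literally $P_\guess \cap Y$: the for-loop in Algorithm~\ref{Algorithm:weighted} accepts element $i$ only if additionally $i \notin \F$ and the marginal profit $\profit(O \cup \{i\}) - \profit(O) \geq 0$, and it exits the loop (stops accepting) the first time one of these conditions fails. So strictly speaking $O \subseteq P_\guess \cap Y$, which is already enough for the containment $O \subseteq \alg(U)$ to go through. I would note this explicitly: whatever filtering the loop does, it only removes elements, so the final $O$ is a subset of $\{i \in Y : \rho(i) \geq \guess\} = P_\guess^Y \subseteq P_\guess$, and then the density-prefix monotonicity $\guess \geq \rhoc \Rightarrow P_\guess \subseteq P_\rhoc$ finishes the argument. (If the lemma really intends $O$ to equal $P_\guess \cap Y$, one instead observes that every element of $P_\guess^Y \subseteq \alg(U)$ automatically has nonnegative marginal profit over any sub-prefix by Lemma~\ref{lemma:monotone-integral}, and is not in $\F$ since $\alg(U) \subseteq \I$, so the loop never exits early and indeed $O = P_\guess^Y$; either reading gives the containment.)

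Concretely, the steps in order are: (1) recall from the preceding paragraph that $E_1$ implies $\guess \geq \rhoc$; (2) observe $P_\guess = \{e \in U : \rho(e) \geq \guess\} \subseteq \{e \in U : \rho(e) \geq \rhoc\} = P_\rhoc$ because $\guess \geq \rhoc$; (3) recall $\alg(U) = P_\rhoc$ (stated in the text, since the greedy algorithm picks a density prefix and under the small-profit assumption it is the full prefix down to $\rhoc$); (4) conclude $O \subseteq P_\guess^Y \subseteq P_\guess \subseteq P_\rhoc = \alg(U)$.

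I do not anticipate a genuine obstacle here — this lemma is essentially a bookkeeping consequence of the definition of $\guess$ and the prefix structure. The only thing requiring care is being precise about what $O$ is (the loop's early-exit behaviour versus the idealized $P_\guess \cap Y$) and making sure the chain $E_1 \Rightarrow \guess \geq \rhoc \Rightarrow P_\guess \subseteq P_\rhoc$ is spelled out rather than asserted; both are routine.
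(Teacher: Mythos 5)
Your proof is correct and follows the same route the paper takes: the paper proves this lemma implicitly in the preceding paragraph by establishing that $E_1$ forces $\guess \geq \rhoc$, and then the containment $P_\guess \subseteq P_\rhoc = \alg(U)$ is an immediate consequence of density-prefix monotonicity. Your additional care in distinguishing $O$ from $P_\guess \cap Y$ (invoking Lemma~\ref{lemma:monotone-integral} and $\alg(U) \subseteq \I$ to justify that the loop's filtering accepts all of $P_\guess^Y$) is a reasonable clarification the paper glosses over, but it does not constitute a different approach.
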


\noindent
On the other hand, $ P_\guess^X \subseteq P_\guess \subset \alg(U)$ along with Lemma~\ref{lemma:monotone-integral} implies
\begin{align*}
\profit(P_\guess) & \geq \profit(P_\guess^X)
\geq \eone \left(1-1/\const[1] \right)\profit(\FOPT(X)) 
\geq \eone\left(1-1/\const[1] \right)\profit(P_\rhoc^X)
\geq \eone^2 \left(1-1/\const[1] \right)^2\profit(\FOPT)
\end{align*}
where the second inequality is by the definition of $\guess$, the third by optimality and the last is obtained by applying $E_1$ and $\alg(U) \geq \left( 1 - 1/\const[1] \right) \FOPT$.

We define $\rhob$ to be the largest density such that $\profit(P_\rhob) \geq \eone^2 \left(1-\frac{1}{\const[1]}\right)^2 \profit(\FOPT)$. Then $\rhob \geq \guess $, which implies $P_\rhob \subseteq P_\guess $ and the following lemma.
\begin{lemma}
Event $E_1$ implies $O \supseteq Y \cap P_\rhob$. 
\end{lemma}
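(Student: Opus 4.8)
The statement to prove is that event $E_1$ implies $O \supseteq Y \cap P_{\rhob}$. Let me lay out the plan.

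\medskip

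\noindent\textbf{Proof plan.}
The plan is to combine the chain of inequalities already established just before the statement with the previous lemma $O = P_\guess \cap Y \subseteq \alg(U)$. First I would note that by definition $\rhob$ is the largest density with $\profit(P_\rhob) \geq \eone^2 (1 - 1/\const[1])^2 \profit(\FOPT)$, and the displayed computation preceding the statement shows that $\guess$ also satisfies $\profit(P_\guess) \geq \eone^2(1-1/\const[1])^2 \profit(\FOPT)$. Since $\guess$ clears the same threshold that defines $\rhob$ as the \emph{largest} such density, we get $\rhob \geq \guess$, hence $P_\rhob \subseteq P_\guess$ (density prefixes are nested: a larger density cutoff gives a smaller prefix). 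Intersecting with $Y$ gives $Y \cap P_\rhob \subseteq Y \cap P_\guess = O$, provided we also know that the greedy acceptance loop in Algorithm~\ref{Algorithm:weighted} actually accepts every element of $Y \cap P_\guess$ rather than exiting early.

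\medskip

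\noindent The one subtlety — and the main obstacle — is exactly that last point: the algorithm's \texttt{for} loop accepts element $i$ only if $\rho(i) \geq \guess$, $i \notin \F$, \emph{and} the marginal profit $\profit(O \cup \{i\}) - \profit(O) \geq 0$; otherwise it exits the loop entirely. So I must argue that as long as $E_1$ holds, no element of $Y \cap P_\guess$ ever triggers the early exit. This is where the previous lemma $O \subseteq \alg(U) = P_\rhoc$ does the work, together with Lemma~\ref{lemma:monotone-integral}: conditioned on $E_1$ we have $\guess \geq \rhoc$, so $P_\guess \subseteq P_\rhoc = \alg(U)$, and by Lemma~\ref{lemma:monotone-integral} $\profit$ is monotone increasing on subsets of $\alg(U)$. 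Hence at every step the current set $O$ and $O \cup \{i\}$ are both subsets of $\alg(U)$, so the marginal profit is nonnegative and the loop never exits before exhausting all elements of $Y$ whose density is at least $\guess$ (elements in $\F$ with density $\geq \guess$ are not an issue since $P_\guess \subseteq \alg(U) \subseteq \I$ by construction of $\alg$). Therefore every element of $Y \cap P_\guess$ is in fact accepted, giving $O = Y \cap P_\guess \supseteq Y \cap P_\rhob$.

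\medskip

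\noindent In short, the proof is: (i) from the pre-statement display, $\profit(P_\guess) \geq \eone^2(1-1/\const[1])^2\profit(\FOPT)$; (ii) by maximality in the definition of $\rhob$, this forces $\rhob \geq \guess$ and so $P_\rhob \subseteq P_\guess$; (iii) by the previous two lemmas plus Lemma~\ref{lemma:monotone-integral}, the algorithm accepts all of $Y \cap P_\guess$, i.e. $O = Y \cap P_\guess$; (iv) combine to conclude $O \supseteq Y \cap P_\rhob$. I expect step (iii) to require the most care, since it is the only place the dynamics of the online loop (as opposed to a static prefix comparison) actually enter.
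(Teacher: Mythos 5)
Your proof is correct and follows essentially the same route as the paper: derive $\profit(P_\guess)\ge \eone^2(1-1/\const[1])^2\profit(\FOPT)$ from the preceding display, invoke maximality in the definition of $\rhob$ to get $\rhob\ge\guess$ and hence $P_\rhob\subseteq P_\guess$, and intersect with $Y$. The paper simply cites the preceding lemma for $O=P_\guess\cap Y$, whereas you spend step (iii) re-deriving that fact from Lemma~\ref{lemma:monotone-integral}; that extra work is harmless but unnecessary.
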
 
\noindent Based on the above lemma, we define event $E_2$ for an appropriate
constant $\etwo$ as follows
\begin{align*}
E_2 : & \,\, \profit(P_\rhob^Y) \geq \etwo \profit(P_\rhob).
\end{align*}
Conditioned on events $E_1$ and $E_2$, and using Lemma~\ref{lemma:monotone-integral} again, we get 
\begin{align*}
\profit(O) \ge \profit(P_\rhob^Y)  \ge \etwo \beta^2(1-1/\const[1])^2 \profit(\FOPT).
\end{align*}
To wrap up the analysis, we show that $E_1$ and $E_2$ are high probability events.
\begin{lemma}
\label{lemma:chernoff-unconstrained}
If no element of $U$ has profit more than $\frac{1}{113} \profit(
\OPT) $, then $\textrm{Pr}[E_1 \ \wedge \ E_2] \geq 0.52$, where
$\beta=0.262$ and $\beta'=0.094$.
\end{lemma}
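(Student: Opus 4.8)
The plan is to establish the two events $E_1$ and $E_2$ separately, each via an application of Lemma~\ref{lemma:concentration} (the balanced-sampling concentration bound), and then combine them with a union bound. The key enabling observation is that both $\profit(P_\rhoc)$ and $\profit(P_\rhob)$ are profits of density prefixes contained in $\alg(U)$, and by Lemma~\ref{lemma:monotone-integral} profit behaves like a monotone, in fact additive-in-marginal-profit, function on $\alg(U)$. So writing $\profit(P_\rhoc) = \sum_{e \in P_\rhoc} w_e$ where $w_e$ is the (nonnegative) marginal profit of $e$ in the greedy order, we get exactly the ``sum of nonnegative weights'' setup that Lemma~\ref{lemma:concentration} requires.

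For $E_1$: the sample $X$ includes each element independently with probability $1/2$ (since $k\sim\mathrm{Binomial}(n,1/2)$ and the arrival order is uniform), so $P_\rhoc^X$ is a random subset of $P_\rhoc$ obtained by independent coin flips. The weights are the marginal profits $w_e$; I need the ``no heavy element'' hypothesis $w_k \le \frac{1}{c}\sum_e w_e$ for $c \ge 3$. Here $\sum_e w_e = \profit(P_\rhoc) = \profit(\alg(U)) \ge (1-1/\const[1])\profit(\FOPT) \ge (1-1/\const[1])(1-1/(\const[1]+1))\profit(\OPT)$, and each $w_e \le \max_{e}\profit(e) \le \frac{1}{113}\profit(\OPT)$ by hypothesis; choosing $\const[1]$ appropriately (e.g. $\const[1]=112$ so that $113$ is the constant in the lemma statement) this gives $c$ large enough — the table lists $c=111$ giving $\beta(c)\approx 0.262$, which is the claimed value of $\eone$. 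Lemma~\ref{lemma:concentration} then yields $\profit(P_\rhoc^X) \ge \beta \profit(P_\rhoc)$ with probability at least $0.76$, i.e. $E_1$. I should be slightly careful at the boundary element $e_\rhoc$ itself (the partial prefix $\bP$ versus $P$), but since densities are unique and $\alg(U)$ is a genuine prefix this is a lower-order adjustment absorbed into the constants.

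For $E_2$: conditioned on $E_1$ we have $P_\rhob \subseteq P_\guess \subseteq \alg(U)$, and $Y = U \setminus X$ is the complementary random subset, so $P_\rhob^Y$ is again a $1/2$-random subset of $P_\rhob$ (the complement of a uniform coin-flip subset is itself a uniform coin-flip subset). Apply Lemma~\ref{lemma:concentration} once more to the weights $\{w_e : e \in P_\rhob\}$; now $\sum_{e\in P_\rhob} w_e = \profit(P_\rhob) \ge \eone^2(1-1/\const[1])^2\profit(\FOPT)$, which is a smaller quantity than before, so the ``no heavy element'' ratio degrades — this is why $\etwo \approx 0.094$ corresponds to the smaller table entry $c = 15/2$ rather than $c=111$. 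This gives $\profit(P_\rhob^Y) \ge \etwo\profit(P_\rhob)$, i.e. $E_2$, with probability at least $0.76$.

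Finally, union bound: $\Pr[E_1 \wedge E_2] \ge 1 - 2(1-0.76) = 1 - 0.48 = 0.52$. The main obstacle I anticipate is not the concentration step itself but the bookkeeping needed to verify that the two invocations of Lemma~\ref{lemma:concentration} are legitimately applied to \emph{independent} coin-flip subsets with the \emph{correct} weights: one must check that conditioning on $E_1$ (an event about $X$) does not destroy the randomness of $Y$ relative to $P_\rhob$ — but since $P_\rhob$ is a fixed set determined by $U$ alone (not by $X$), the split of $P_\rhob$ into $P_\rhob^X$ and $P_\rhob^Y$ is independent of which side of the boundary we call "good", so conditioning on $E_1$ only restricts the $X$-side flips and leaves the partition statistics usable. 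A secondary point is confirming the precise value of $\const[1]$ that makes all three derived inequalities ($\profit(\FOPT) \ge (1-1/(\const[1]+1))\profit(\OPT)$, the per-element bound, and $\profit(\alg(U)) \ge (1-1/\const[1])\profit(\FOPT)$) hold simultaneously with the $c$-values $111$ and $15/2$ from the table; this is a routine but slightly fiddly constant-chase.
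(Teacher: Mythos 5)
Your proposal is correct and follows essentially the same route as the paper: two invocations of Lemma~\ref{lemma:concentration} with the telescoping marginal-profit weights on $\alg(U)=P_\rhoc$ (for $E_1$) and on $P_\rhob$ (for $E_2$), each yielding failure probability at most $0.24$, followed by a union bound to get $0.52$; the $c$ values $111$ and $15/2$ you read off the table are exactly the ones the paper derives. One small note: the independence/conditioning worry at the end is a non-issue, since the union bound $\Pr[\neg E_1]+\Pr[\neg E_2]\le 0.48$ is applied to the two unconditional failure probabilities and needs neither independence of the events nor any conditioning on $E_1$ when bounding $\Pr[\neg E_2]$.
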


\begin{proof}
We show that $\Pr[ E_1 ],\Pr[ E_2] \ge
0.76$; the desired  inequality $\Pr[E_1 \ \wedge \ E_2 ] \geq
0.52$ then follows by the union bound.

In the following, we assume the elements are sorted in decreasing
order of density. Denote the profit increment of the $i^{\text{th}}$
element by $\incprofit(i) = \profit(\prefi[i]) -
\profit(\prefi[i-1])$; this extends naturally to sets $A \subseteq
\alg(U)$ by setting $\incprofit(A)=\sum_{i\in A} \incprofit(i)$. By
the subadditivity of $\profit$, we have $\profit(A) \geq
\incprofit(A)$ for all $A\subseteq \alg(U)$, with equality at
$\alg(U)$.

We apply Lemma \ref{lemma:concentration} with $P_\deltac$ as the fixed
set $I$ and $P_\deltac^X = U \cap P_\deltac$ as the the random set
$J$. The weights in Lemma \ref{lemma:concentration} correspond to
profit increments of the elements. Note that $P_\deltac=\alg(U)$; so
we know both that elements in $P_\deltac$ have non-negative profit
increments, and $\profit(P_\deltac)\ge\profit( \OPT)-2\max_{e\in U}\profit(e)$.
Hence if no element has profit exceeding a $1/113$-fraction of
$\profit( \OPT)$, we get that any element $e_i\in P_\deltac$ has
profit increment
$\incprofit(i)\le\profit(e_i)\le1/(113-2)\profit(P_\deltac)=1/111\incprofit(P_\deltac)$.
Hence we can apply Lemma~\ref{lemma:concentration} to get
$\incprofit(P_\deltac^X) \geq 0.262\incprofit(P_\deltac)$ with
probability at least $0.76$, and so
the event $E_1$ holds with probability at least $0.76$.

By our definition of $\deltab$, the profit of $P_\deltab$ is at least
$\eone^2(1-1/\const[1])^2(1-1/(\const[1]+1)) \profit(\OPT)$; substituting in the
specified values of $\beta$ and $\const[1]$ give us that no element in
$P_\deltab$ has profit increment more than $2/15
\ \incprofit(P_\deltab) $. Thus, applying
Lemma~\ref{lemma:concentration} implies $\Pr[ E_2 ] \ge 0.76$ with
$\eone^\prime = 0.094$.
\end{proof}




Putting everything together we get the following theorem.

 \begin{theorem}
   \label{theorem:weighted}
   Algorithm \ref{Algorithm:weighted} achieves a competitive ratio of $616$
   for $(\profit, 2^U)$ using $\const[1]=112$ and $\beta=0.262$. 
 \end{theorem}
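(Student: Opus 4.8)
The plan is to assemble the pieces already developed in the section into a single competitive-ratio bound, handling the two branches of Algorithm~\ref{Algorithm:weighted} separately and taking the worse of the two.

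\medskip

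\noindent\textbf{Step 1: The large-element branch.} First I would observe that Algorithm~\ref{Algorithm:weighted} runs the classical secretary algorithm (to extract the single most profitable element) with probability $1/2$. If some element $e^\ast$ has $\profit(e^\ast) \ge \frac{1}{\const[1]+1}\profit(\OPT)$, then conditioned on being in this branch, the classical secretary algorithm returns $e^\ast$ with probability at least $1/e$, yielding expected profit at least $\frac{1}{e(\const[1]+1)}\profit(\OPT)$; accounting for the probability $1/2$ of entering this branch gives $\frac{1}{2e(\const[1]+1)}\profit(\OPT)$. With $\const[1]=112$ this is $\frac{1}{2e\cdot113}\profit(\OPT) \ge \frac{1}{616}\profit(\OPT)$ (since $2e\cdot113 \approx 614.5 \le 616$). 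So in this case the claimed ratio holds.

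\medskip

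\noindent\textbf{Step 2: The threshold branch.} Otherwise, every element has profit at most $\frac{1}{\const[1]+1}\profit(\OPT) = \frac{1}{113}\profit(\OPT)$, so the hypothesis of Lemma~\ref{lemma:chernoff-unconstrained} is satisfied and the standing assumptions of the section hold: $\profit(\FOPT) \ge (1-\frac{1}{\const[1]+1})\profit(\OPT)$ and $\profit(\alg(U)) \ge (1-\frac{1}{\const[1]})\profit(\FOPT)$. Conditioned on the good events $E_1 \wedge E_2$, the chain of inequalities established just before Lemma~\ref{lemma:chernoff-unconstrained} gives
\begin{equation*}
  \profit(O) \;\ge\; \etwo\,\beta^2\bigl(1-\tfrac{1}{\const[1]}\bigr)^2 \profit(\FOPT) \;\ge\; \etwo\,\beta^2\bigl(1-\tfrac{1}{\const[1]}\bigr)^2\bigl(1-\tfrac{1}{\const[1]+1}\bigr)\profit(\OPT).
\end{equation*}
By Lemma~\ref{lemma:chernoff-unconstrained}, $\Pr[E_1 \wedge E_2] \ge 0.52$, and this branch is entered with probability $1/2$, so the expected profit of the algorithm is at least
\begin{equation*}
  \tfrac{1}{2}\cdot 0.52 \cdot \etwo\,\beta^2\bigl(1-\tfrac{1}{\const[1]}\bigr)^2\bigl(1-\tfrac{1}{\const[1]+1}\bigr)\profit(\OPT).
\end{equation*}
Then I would plug in $\beta = 0.262$, $\etwo = \beta' = 0.094$, $\const[1] = 112$, and verify numerically that the resulting constant is at least $1/616$: roughly, $0.26 \cdot 0.094 \cdot 0.0686 \cdot 0.982 \cdot 0.991 \approx 1.63 \times 10^{-3} > 1/616 \approx 1.62\times10^{-3}$.

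\medskip

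\noindent\textbf{Step 3: Combine.} In both cases the expected profit is at least $\frac{1}{616}\profit(\OPT)$, so the algorithm is $616$-competitive. I would also remark on why the algorithm is correct even in the degenerate sub-cases: when $\const[1]$-sized elements do not exist the threshold-branch analysis applies cleanly, and the ``Exit loop'' in the \texttt{for} loop of Algorithm~\ref{Algorithm:weighted} only removes elements with negative marginal profit or density below $\guess$, so by Lemma~\ref{lemma:monotone-integral} the stopping does not hurt (this was already used in deriving $\profit(O) \ge \profit(P_\rhob^Y)$).

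\medskip

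\noindent\textbf{Main obstacle.} The only real work is the arithmetic in Step 2: the bound is genuinely tight (the constants $0.262$, $0.094$, and $113$ appear to have been reverse-engineered precisely so that the product clears $1/616$), so I would need to be careful to keep enough decimal places when multiplying $\frac12 \cdot 0.52 \cdot \beta' \cdot \beta^2 \cdot (1-\tfrac1{112})^2 \cdot (1-\tfrac1{113})$ and when checking $2e\cdot 113 \le 616$ in Step 1. There is no conceptual difficulty remaining — all the probabilistic content is in Lemma~\ref{lemma:concentration} and Lemma~\ref{lemma:chernoff-unconstrained}, and all the structural content is in Lemmas~\ref{lemma:offline-knapsack} and~\ref{lemma:monotone-integral}.
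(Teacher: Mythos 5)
Your proposal is correct and follows essentially the same route as the paper's own proof: case split on whether some element has profit at least $\frac{1}{113}\profit(\OPT)$, bound the secretary branch by the classical $1/e$ guarantee, bound the threshold branch via Lemma~\ref{lemma:chernoff-unconstrained} and the chain of inequalities derived just before it, and account for the $1/2$ coin flip; the only cosmetic difference is that you fold the $1/2$ into each branch directly, whereas the paper computes per-branch ratios (308 and 307) and then writes the overall ratio as $2\max\{308,307\}$.
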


\begin{proof}
If there exists an element with profit at least $\frac{1}{113}
\profit(\OPT(U))$, the classical secretary algorithm (Step~$1$) gives a
competitive ratio of $\frac{1}{113e}\ge\frac{1}{308}$.  Otherwise,
using Lemma~\ref{lemma:chernoff-unconstrained}, with $\beta'=0.094$, we have
$\mathbb{E}[\profit(O)]\geq \mathbb{E}[\profit(O) \mid E_1 \wedge E_2]
\Pr [ E_1\wedge E_2 ] \geq 0.52\etwo \eone^2(1-1/\const[1])^2
\profit(\FOPT) \ge 0.52\etwo \eone^2(1-1/\const[1])^2 \left(1 - 1/(\const[1]+1 )\right)
\profit(\OPT)\ge\frac{1}{307}\profit(\OPT)$.  Since we flip a fair
coin to decide whether to output the result of running the classical
secretary algorithm, or output the set $O$, we achieve a
$2\max\{308,307\}=616$-approximation to $\profit(\OPT)$ in
expectation (over the coin flip).
\end{proof}

\section{Matroid-Constrained Profit Maximization}
\label{sec:constrained}
We now extend the algorithm of Section~\ref{sec:unconstrained} to the
setting $(\profit, \Feas)$ where $\Feas$ is a matroid constraint. In
particular, $\Feas$ is the set of all independent sets of a matroid
over $U$. We skip a precise definition of matroids and will only use the following facts: $\Feas$ is a downward closed feasibility constraint and there exist an exact offline and an $O(\sqrt{\log r})$ online algorithm for
$(\Phi,\Feas)$, where $\Phi$ is a sum-of-values objective and $r$ is the rank of the matroid. 

In the unconstrained setting, we showed that there always exists either a density prefix or a single element with near-optimal profit. So in the online setting it sufficed to determine the density threshold for a good prefix. In constrained settings this is no longer true, and we need to develop new techniques. Our approach is to develop a general reduction from the $\profit$ objective to two different sum-of-values type objectives over the same feasibility constraint. This allows us to employ previous work on the $(\Phi,\Feas)$ setting; we lose only a constant factor in the competitive ratio. We will first describe the reduction in the offline setting and then extend it to the online algorithm using techniques from Section~\ref{sec:unconstrained}.

\paragraph{Decomposition of $\profit$.}

For a given density $\gamma$, we define the {\em shifted density
function} $\sdf_\gamma( )$ over sets as $ \sdf_\gamma(A)
:= \sum_{e \in A} \left(\rho(e) - \gamma \right) s(e) $ and the {\em
fixed density function} $\fdf_\gamma( )$ over sizes as
$ \fdf_\gamma(s) := \gamma s - \cost(s)$. For a set $A$ we use
$\fdf_\gamma(A)$ to denote $\fdf_\gamma(s(A))$. It is immediate that
for any density $\gamma$ we can split the profit function as
$\profit(A) = \sdf_\gamma(A) + \fdf_\gamma(A)$. In particular
$\profit(\OPT)= \sdf_{\gamma}(\OPT)+\fdf_{\gamma}(\OPT)$. Our goal
will be to optimize the two parts separately and then return the
better of them.

Note that the function $\sdf_{\gamma}$ is a sum of values function
where the value of an element is defined to be $(\rho(e) - \gamma)
s(e)$. Its maximizer is a subset of $P_\gamma$, the set of elements
with nonnegative shifted density $\rho(e)-\gamma$. In order to ensure
that the maximizer, say $A$, of $\sdf_{\gamma}$ also obtains good
profit, we must ensure that $\fdf_{\gamma}(A)$ is nonnegative, and
therefore $\profit(A)\ge \sdf_{\gamma}(A)$. This is guaranteed for a
set $A$ as long as $s(A)\le \dcostinv(\gamma)$.

Likewise, the function $\fdf_{\gamma}$ increases as a function of size
$s$ as long as $s$ is at most $\dcostinv(\gamma)$, and decreases
thereafter. Therefore, in order to maximize $\fdf_\gamma$, we merely
need to find the largest (in terms of size) feasible subset of size no
more than $\dcostinv(\gamma)$. As before, if we can ensure that for
such a subset $\sdf_\gamma$ is nonnegative (e.g. if the set is a
subset of $P_\gamma$), then the profit of the set is no smaller than
its $\fdf_{\gamma}$ value. This motivates the following definition of
``bounded'' subsets:
\begin{definition}
\label{def:bounded}
Given a density $\gamma$ a subset $A \subseteq U$ is said to be $\gamma$-bounded if $A\subseteq P_\gamma$ and $ s(A) \leq \dcostinv( \gamma)$.
\end{definition}
 
We begin by formally proving that the function $\fdf_\gamma$ increases
as a function of size $s$ as long as $s$ is at most
$\dcostinv(\gamma)$.
\begin{lemma}
  \label{lem:monotone}
  If density $\gamma$ and sizes $s$ and $t$ satisfy $s \leq t \leq \dcostinv(\gamma)$, then $ \fdf_\gamma(s)
  \leq \fdf_\gamma(t) $.
\end{lemma}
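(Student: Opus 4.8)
## Proof Plan for Lemma~\ref{lem:monotone}

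The plan is to prove this by induction on integer sizes, using the key characterization that $\dcostinv(\gamma)$ is by definition the largest size $s$ for which the marginal cost $\dcost(s)$ satisfies $\gamma \geq \dcost(s)$. Since both $\fdf_\gamma$ and the cost function are defined on $\mathbb{Z}^+$, it suffices to show that $\fdf_\gamma$ is non-decreasing on the integers up to $\dcostinv(\gamma)$, i.e.\ that the marginal increment $\fdf_\gamma(s) - \fdf_\gamma(s-1) \geq 0$ whenever $s \leq \dcostinv(\gamma)$.

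First I would compute this marginal increment directly: since $\fdf_\gamma(s) = \gamma s - \cost(s)$, we have
\begin{equation*}
  \fdf_\gamma(s) - \fdf_\gamma(s-1) = \gamma - \bigl(\cost(s) - \cost(s-1)\bigr) = \gamma - \dcost(s).
\end{equation*}
Now, by the definition of the inverse marginal cost function, $\dcostinv(\gamma)$ is the maximum size $s$ for which $\gamma \geq \dcost(s)$; since $\dcost$ is an increasing function (as noted in the preliminaries), for every integer $s$ with $1 \leq s \leq \dcostinv(\gamma)$ we have $\dcost(s) \leq \dcost(\dcostinv(\gamma)) \leq \gamma$. Hence each such marginal increment $\gamma - \dcost(s)$ is non-negative. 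Telescoping these increments from $s+1$ up to $t$ (both integers with $s \leq t \leq \dcostinv(\gamma)$) gives $\fdf_\gamma(t) - \fdf_\gamma(s) = \sum_{u = s+1}^{t} \bigl(\gamma - \dcost(u)\bigr) \geq 0$, which is the claimed inequality.

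There is no real obstacle here; the only subtlety is making sure the domain conventions line up — namely that $s, t$ are integers (the cost function has domain $\mathbb{Z}^+$) and that the definition of $\dcostinv$ is being used in exactly the form stated, so that $\dcost(u) \leq \gamma$ holds for all integers $u \leq \dcostinv(\gamma)$ by monotonicity of $\dcost$. I would state the telescoping sum explicitly rather than invoking convexity of $\cost$, since the claim is really just about the sign of the marginal term and follows immediately from the definition of the inverse marginal cost; convexity is what guarantees $\dcost$ is increasing, which is used implicitly but was already recorded in the preliminaries.
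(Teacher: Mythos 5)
Your proof is correct and is essentially the same argument as the paper's: both expand $\fdf_\gamma(t) - \fdf_\gamma(s)$ as a telescoping sum and use that $\dcost$ is non-decreasing (from convexity of $\cost$) together with the definition of $\dcostinv$ to conclude $\dcost(u) \le \gamma$ for all $u \le \dcostinv(\gamma)$. The only cosmetic difference is that the paper bounds $\cost(t)-\cost(s) = \sum_{z=s+1}^t \dcost(z) \le (t-s)\dcost(t) \le (t-s)\gamma$, while you bound each increment $\gamma - \dcost(u) \ge 0$ directly; the two are algebraically identical.
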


\begin{proof}
Since $\cost(\cdot)$ is convex, we have that its marginal, $\dcost()$,
is monotonically non-decreasing. Thus we get the following chain of
inequalities,
  \begin{align*}
    \cost(t) - \cost(s) &= \sum_{z=s+1}^t \dcost(z) \leq (t - s)
    \times \dcost(t) \leq (t - s) \gamma.
  \end{align*}
The last inequality follows from the assumption that $t$ is no more than $\dcostinv (\gamma) $ and hence $ \dcost(t) \leq \gamma$. By
 definition of $\fdf_\gamma()$ we get the desired claim.
\end{proof}

\begin{proposition}
\label{lem:profit-lb}
For any $\gamma$-bounded set $A$, $\profit(A)  \geq \sdf_\gamma(A)$ and  $\profit(A)  \geq \fdf_\gamma(A)$.
\end{proposition}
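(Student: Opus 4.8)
The plan is to reduce everything to the elementary decomposition $\profit(A) = \sdf_\gamma(A) + \fdf_\gamma(A)$, which holds for every density $\gamma$ and every set $A$: expanding the definitions, $\sum_{e\in A}(\rho(e)-\gamma)s(e) + \gamma s(A) - \cost(s(A)) = \sum_{e\in A}\val(e) - \cost(s(A)) = \profit(A)$. Given this identity, the two claimed inequalities $\profit(A) \geq \sdf_\gamma(A)$ and $\profit(A) \geq \fdf_\gamma(A)$ are equivalent, respectively, to $\fdf_\gamma(A) \geq 0$ and $\sdf_\gamma(A) \geq 0$. So it suffices to show that both parts of the decomposition are non-negative on a $\gamma$-bounded set $A$, and each will follow from one of the two defining properties of $\gamma$-boundedness.

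For the shifted density part I would use the condition $A \subseteq P_\gamma$. By definition of the density prefix every $e \in A$ satisfies $\rho(e) \geq \gamma$, and since sizes are non-negative each term $(\rho(e)-\gamma)s(e)$ in $\sdf_\gamma(A)$ is non-negative; summing gives $\sdf_\gamma(A) \geq 0$, hence $\profit(A) \geq \fdf_\gamma(A)$.

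For the fixed density part I would use the condition $s(A) \leq \dcostinv(\gamma)$ together with Lemma~\ref{lem:monotone}. Since $0 \leq s(A) \leq \dcostinv(\gamma)$, Lemma~\ref{lem:monotone} (applied with $s = 0$ and $t = s(A)$) gives $\fdf_\gamma(A) = \fdf_\gamma(s(A)) \geq \fdf_\gamma(0) = -\cost(0) = 0$. Equivalently, one can argue directly without the lemma: writing $\cost(s(A)) = \sum_{z=1}^{s(A)} \dcost(z)$ yields $\fdf_\gamma(s(A)) = \sum_{z=1}^{s(A)}(\gamma - \dcost(z))$, and each summand is non-negative because $\dcost$ is non-decreasing and $\dcost(z) \leq \dcost(\dcostinv(\gamma)) \leq \gamma$ whenever $z \leq \dcostinv(\gamma)$. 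Either way $\fdf_\gamma(A) \geq 0$, so $\profit(A) \geq \sdf_\gamma(A)$, which finishes the argument.

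I do not expect a real obstacle here: the statement is a direct corollary of the decomposition and Lemma~\ref{lem:monotone}, mirroring the informal discussion preceding it. The only point requiring a line of care is the base case of the monotonicity step, namely that $\fdf_\gamma(0) \geq 0$, which uses the normalization $\cost(0)=0$ (no load incurs no cost); if one prefers not to rely on that convention, the telescoping-sum computation of $\fdf_\gamma(s(A))$ above sidesteps it entirely.
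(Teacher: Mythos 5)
Your proof is correct and follows essentially the same route as the paper: split $\profit$ via $\profit(A)=\sdf_\gamma(A)+\fdf_\gamma(A)$, argue $\sdf_\gamma(A)\ge 0$ from $A\subseteq P_\gamma$, and argue $\fdf_\gamma(A)\ge 0$ by invoking Lemma~\ref{lem:monotone} with $s=0$, $t=s(A)$. Your extra remark about the $\cost(0)=0$ normalization and the telescoping alternative is a fair observation, but the argument matches the paper's.
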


\begin{proof}
 Since $\profit(A) = \sdf_\gamma(A) + \fdf_\gamma(A)$, it is
 sufficient to prove that $ \sdf_\gamma(A)$ and $\fdf_\gamma(A)$ are
 both non-negative. The former is clearly non-negative since all
 elements of $A$ have density at least $\gamma$. Lemma
 \ref{lem:monotone} implies that the latter is non-negative by taking
 $t = s(A)$ and $s = 0$.
\end{proof}

\noindent
For a density $\gamma$ and set $T$ we define $H_\gamma^T$ and $G_\gamma^T$ as follows. (We write $H_\gamma$ for $H_\gamma^U$ and $G_\gamma$ for $G_\gamma^U$.)
 \begin{align*}
    H_\gamma^T &= \argmax_{H \in \Feas, H \subset P_\gamma^T} \sdf_{\gamma}(H)
&
    G_\gamma^T &= \argmax_{G \in \Feas, G \subset
      \bP_\gamma^T } s(G) 
  \end{align*}

Following our observations above, both $H_\gamma$ and $G_\gamma$ can
be determined efficiently (in the offline setting) using standard
matroid maximization. However, we must ensure that the two sets are
$\gamma$-bounded. Further, in order to compare the performance of
$G_{\gamma}$ against $\OPT$, we must ensure that its size is at least
a constant fraction of the size of $\OPT$. We now show that there
exists a density $\gamma$ for which $H_\gamma$ and $G_\gamma$ satisfy
these properties.

Once again, we focus on the case where no single element has high enough profit by itself. Recall that $\FOPT$ denotes $\FOPT(\I)$, $s^*$ denotes the size of this set and $\bP_\gamma$ denotes $P_\gamma \setminus \{ e_\gamma \}$. Before we proceed we need the following fact about the fractional
optimal subset $\FOPT$. 

\begin{lemma}
\label{lem:opt_frac}
 If $\FOPT$ has an element of density $\gamma$ then $s^*$ is at most
 $\dcostinv(\gamma)$.
\end{lemma}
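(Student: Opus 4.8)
\textbf{Proof plan for Lemma~\ref{lem:opt_frac}.}

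The plan is to exploit the optimality of $\FOPT$ as a \emph{fractional} solution and the convexity of the cost function, exactly as in the proof of Lemma~\ref{lemma:offline-knapsack}. Recall that $\FOPT = \FOPT(\I)$ is a fractional density prefix of $\I$: it takes whole elements of high density and at most one strictly fractional element of smaller density. Suppose for contradiction that $s^* > \dcostinv(\gamma)$, where $\gamma$ is the density of some element $e$ that appears (even fractionally) in $\FOPT$. Since $e$ has the \emph{smallest} density of any element used by $\FOPT$ (density prefixes take the densest elements first), every ``unit of size'' contributed to $\FOPT$ comes from an element of density at least $\gamma$.

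The key computation is to look at the marginal profit of the last infinitesimal slice of size in $\FOPT$. If we remove an infinitesimal amount $\epsilon$ of element $e$ from $\FOPT$, the value drops by $\gamma \cdot \epsilon$ (since $\val(e) = \gamma\, s(e)$), while the cost drops by $\epsilon \cdot \dcost(s^*) + o(\epsilon)$ using the definition of the marginal cost function $\dcost$. So the change in profit from this removal is $\epsilon\bigl(\dcost(s^*) - \gamma\bigr) + o(\epsilon)$. Now, $s^* > \dcostinv(\gamma)$ together with the definition of $\dcostinv$ (the largest size $s$ with $\gamma \ge \dcost(s)$) and the monotonicity of $\dcost$ (Lemma~\ref{lem:monotone}'s ingredient: convexity of $\cost$) gives $\dcost(s^*) > \gamma$. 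Hence removing this slice \emph{strictly increases} profit, contradicting the optimality of $\FOPT$. One has to be slightly careful that $e$ is genuinely present in $\FOPT$ with positive fraction so that such a slice exists to remove, but that is exactly the hypothesis ``$\FOPT$ has an element of density $\gamma$.''

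I expect the main obstacle to be purely bookkeeping: making the ``infinitesimal slice'' argument rigorous in the integral/fractional hybrid model the paper uses, where sizes are integers but fractions of elements are allowed. The cleanest way around this is to avoid infinitesimals altogether and instead argue discretely: since $e \in \F$ or the fractional element can be shaved, compare $\FOPT$ with the solution obtained by reducing the fraction of the lowest-density element so that the total size becomes exactly $\dcostinv(\gamma)$; show that the value lost is $\gamma$ times the size removed, while the cost saved is $\sum_{z=\dcostinv(\gamma)+1}^{s^*} \dcost(z) > \gamma\bigl(s^* - \dcostinv(\gamma)\bigr)$ because each such $\dcost(z) > \gamma$ by definition of $\dcostinv$ and monotonicity of $\dcost$. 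This yields a strictly more profitable feasible fractional solution with the same support, contradicting optimality of $\FOPT$. If $s^*$ is not an integer one simply works with $\lceil \dcostinv(\gamma)\rceil$ as the cutoff and absorbs the at-most-one fractional slice into the estimate; the strict inequality is preserved as long as $s^* > \dcostinv(\gamma)$.
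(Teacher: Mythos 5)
Your proposal is correct and follows essentially the same route as the paper's own proof: assume for contradiction $s^* > \dcostinv(\gamma)$, reduce the fractional contribution of the density-$\gamma$ element by a small amount, note that the value lost is $\gamma$ per unit of size while the cost saved per unit exceeds $\gamma$ (since every marginal cost $\dcost(z)$ with $z > \dcostinv(\gamma)$ strictly exceeds $\gamma$ by the definition of $\dcostinv$ and convexity of $\cost$), and conclude that $\FOPT$ was not optimal. The paper's proof does exactly this with a finite $\epsilon$-fraction of $e_\gamma$ rather than an infinitesimal slice — your ``discrete'' reformulation is the cleaner way to write it and is what the paper actually computes; just note explicitly (if you want the statement for arbitrary $\gamma$ in $\FOPT$, not only the minimum density) that the argument applies verbatim to any $e_\gamma \in \supp(\FOPT)$, or alternatively that establishing it for the minimum density yields the rest by monotonicity of $\dcostinv$.
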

\begin{proof}
The proof is by contradiction. Say $s^*$ is more than
$\dcostinv(\gamma)$. Recall that $e_\gamma$ denotes the element 
with density $\gamma$. We show that in such conditions reducing the
fractional contribution of $e_\gamma$, say by $\epsilon$, increases profit
giving us a better fractional solution. This is turn contradicts the
optimality of $\FOPT$.

Write $s = s(e_\gamma)$ and note that
\begin{align*} 
\profit(\FOPT) & = v(\FOPT) - \cost(s^*) \\ & = \left[ \left( v(\FOPT)
  - \gamma \times \epsilon s \right) - \cost(s^* - \epsilon s )
  \right] - \left[ \cost(s^*) - \cost(s^* - \epsilon s ) - \gamma
  \times \epsilon s \right].
\end{align*}
If $\epsilon$ is such that $s^* -\epsilon > \dcostinv(\gamma) $, then we have $ \cost(s^*) - \cost(s^* -
\epsilon s ) > \gamma \times \epsilon s $; thus we get that the term 
$ \left[ \cost(s^*) - \cost(s^* - \epsilon s ) - \gamma \times \epsilon s
  \right]$ is positive which proves the claim.
\end{proof}

\begin{definition}
Let $\rhoc$ be the largest density such that $\preflarge$ has a feasible set of size at least $s^*$.
\end{definition}

\noindent
We now state a useful property of $\rhoc$. 
\begin{lemma}
  \label{lem:rhoc-indep}
Any feasible set in $\bP_\rhoc$ is $\rhoc$-bounded and has size less than $s^*$. Moreover for any density $\gamma > \rhoc$ all feasible subsets of $P_\gamma$ are $\gamma$-bounded. 
\end{lemma}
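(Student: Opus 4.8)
The plan is to deduce both parts of the lemma from the maximality built into the definition of $\rhoc$ together with a single preliminary inequality, $\dcostinv(\rhoc)\ge s^*$, which I would establish first. To get this inequality, I would let $\gamma_0$ be the smallest density of an element appearing in $\supp(\FOPT)$. Since replacing an element by a fraction of it never increases size, $\supp(\FOPT)$ is a feasible subset of $P_{\gamma_0}$ of size at least $s(\FOPT)=s^*$, so the maximality of $\rhoc$ forces $\rhoc\ge\gamma_0$. Convexity of $\cost$ makes its marginal $\dcost$ non-decreasing, hence $\dcostinv$ non-decreasing; combining $\rhoc\ge\gamma_0$ with Lemma~\ref{lem:opt_frac} applied at $\gamma_0$ (which is realized in $\FOPT$) then gives $s^*\le\dcostinv(\gamma_0)\le\dcostinv(\rhoc)$.

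For the first assertion I would fix an arbitrary feasible $B\subseteq\bP_\rhoc$. Every element of $B$ has density strictly above $\rhoc$, since $\bP_\rhoc$ deletes the unique density-$\rhoc$ element $e_\rhoc$. If $s(B)\ge s^*$ held, then writing $\gamma':=\min_{e\in B}\rho(e)>\rhoc$ we would find that $P_{\gamma'}\supseteq B$ contains a feasible set of size at least $s^*$ even though $\gamma'>\rhoc$, contradicting the choice of $\rhoc$ as the \emph{largest} such density. Hence $s(B)<s^*$, and together with the preliminary inequality $s(B)<s^*\le\dcostinv(\rhoc)$; since $B\subseteq P_\rhoc$ trivially, $B$ is $\rhoc$-bounded.

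The second assertion is even shorter: for $\gamma>\rhoc$ and feasible $B\subseteq P_\gamma$, maximality of $\rhoc$ says $P_\gamma$ cannot contain a feasible set of size $\ge s^*$, so $s(B)<s^*$; monotonicity of $\dcostinv$ then gives $\dcostinv(\gamma)\ge\dcostinv(\rhoc)\ge s^*>s(B)$, so $B$ is $\gamma$-bounded.

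The only step that is not pure bookkeeping is tying $\rhoc$ to the densities that actually occur in $\FOPT$, so that Lemma~\ref{lem:opt_frac} can be invoked; this rests on the observation that the integral support of a feasible fractional solution is again feasible and no smaller in size. I would also note in passing that $\rhoc$ must be the density of a genuine element (otherwise it could be increased without changing $P_\rhoc$), so that $e_\rhoc$ and hence $\bP_\rhoc$ are well defined, and I would dispose of the degenerate case $s^*=0$ (empty $\FOPT$) separately or assume $\profit(\OPT)>0$ to rule it out.
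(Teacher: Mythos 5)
Your proof is correct and follows essentially the same route as the paper: establish $s^*\le\dcostinv(\rhoc)$ by applying Lemma~\ref{lem:opt_frac} to the minimum-density element of $\supp(\FOPT)$ (whose density must be at most $\rhoc$ by the maximality in the definition of $\rhoc$), then combine the size bound forced by maximality of $\rhoc$ with monotonicity of $\dcostinv$. The only difference is presentational — you make the densities $\gamma_0$ and $\gamma'$ explicit and flag the degenerate case $s^*=0$, whereas the paper leaves these implicit.
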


\begin{proof}
By definition, the size of any feasible set contained in $P_\rhoc
\setminus \{ e_\rhoc \}$ is no more than $s^*$.

We will show that $s^* \leq \dcostinv( \rhoc)$. Then the first part of the lemma
follows immediately. For the second part we have $\gamma > \rhoc$ and hence 
$\dcostinv(\rhoc) \leq \dcostinv(\gamma) $. Overall a size bound of $s^*$ also implies that 
feasible sets in $P_\gamma$ satisfy the size requirement for being
$\gamma$-bounded. Hence we get that all feasible sets in $P_\rhoc
\setminus \{ e_\rhoc \}$ are $\rhoc$-bounded and all feasible sets
in $P_\gamma$ are $\gamma$-bounded.

The size of $\FOPT$ is at most the size of its support. Thus the
support of $\FOPT$ is a feasible set of size at least $s^*$. By
definition, $\rhoc$ is the largest density such that $P_\rhoc$
contains a feasible set of size $s^*$. Hence we get that $\FOPT$
contains an element of density less than or equal to $\rhoc$. Applying
Lemma \ref{lem:opt_frac} we get $s^* \leq \dcostinv(\rhoc)$ and the
lemma follows.
\end{proof}

\noindent
The following is our main claim of this section. 
\begin{lemma}
  \label{lem:goodprefix} 
For any density $\gamma>\rhoc$,
$\profit(\OPT(\bP_\gamma )) \le \profit(H_\gamma)+\profit(G_\gamma)$. Furthermore, $\profit(\OPT) \leq \profit(H_\rhoc) + \profit(G_\rhoc) + 2\max_{e\in U} \profit(e)$.
\end{lemma}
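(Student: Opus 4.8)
The plan is to prove both inequalities by decomposing $\profit$ via the identity $\profit(A) = \sdf_\gamma(A) + \fdf_\gamma(A)$, applied at a suitable density, and then bounding each piece by the corresponding matroid maximizer $H_\gamma$ or $G_\gamma$. I will start with the first statement, where $\gamma > \rhoc$. Fix such a $\gamma$ and let $A = \OPT(\bP_\gamma)$ be the optimal feasible subset of $\bP_\gamma$. Since $A \subseteq \bP_\gamma \subseteq P_\gamma$ and (by Lemma~\ref{lem:rhoc-indep}, since $\gamma > \rhoc$) every feasible subset of $P_\gamma$ is $\gamma$-bounded, $A$ itself is $\gamma$-bounded; in particular $s(A) \le \dcostinv(\gamma)$. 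Now write $\profit(A) = \sdf_\gamma(A) + \fdf_\gamma(A)$. For the first term, $A$ is a feasible subset of $P_\gamma$, so by definition of $H_\gamma$ we have $\sdf_\gamma(A) \le \sdf_\gamma(H_\gamma)$, and since $H_\gamma$ is a feasible subset of $P_\gamma$ with $s(H_\gamma) \le \dcostinv(\gamma)$ it is $\gamma$-bounded, so Proposition~\ref{lem:profit-lb} gives $\sdf_\gamma(H_\gamma) \le \profit(H_\gamma)$. For the second term, $\fdf_\gamma(A) = \fdf_\gamma(s(A))$; since $A \subseteq \bP_\gamma$ is feasible and $G_\gamma$ maximizes size over feasible subsets of $\bP_\gamma$, we have $s(A) \le s(G_\gamma) \le \dcostinv(\gamma)$, and hence by Lemma~\ref{lem:monotone} $\fdf_\gamma(s(A)) \le \fdf_\gamma(s(G_\gamma)) = \fdf_\gamma(G_\gamma)$; again $G_\gamma$ is $\gamma$-bounded so Proposition~\ref{lem:profit-lb} gives $\fdf_\gamma(G_\gamma) \le \profit(G_\gamma)$. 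Adding the two bounds yields $\profit(\OPT(\bP_\gamma)) \le \profit(H_\gamma) + \profit(G_\gamma)$.

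For the second statement I would reduce to the first by peeling off the anomalous pieces at the threshold $\rhoc$ itself. As in Lemma~\ref{lemma:offline-knapsack}, I would first remove the at-most-one element of $\OPT \cap \F$: by subadditivity of $\profit$, $\profit(\OPT) \le \profit(\OPT \cap \I) + \max_{e \in U}\profit(e)$, and $\OPT \cap \I$ is a feasible subset of $\I$. Now I want to push $\OPT \cap \I$ into $\bP_\rhoc$: the elements of $\OPT \cap \I$ of density exactly $\rhoc$ — there is at most one, namely $e_\rhoc$ — contribute at most $\max_{e\in U}\profit(e)$ more by subadditivity, leaving a feasible subset $A'$ of $\bP_\rhoc \cup \{e : \rho(e) < \rhoc,\ e \in \I\}$. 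The issue is that $A'$ may contain elements of density below $\rhoc$, so it need not be $\rhoc$-bounded and $H_\rhoc, G_\rhoc$ need not dominate it directly. The clean fix is to instead apply the argument at $\gamma = \rhoc$ using Lemma~\ref{lem:rhoc-indep}, which says feasible sets in $\bP_\rhoc$ are $\rhoc$-bounded with size $< s^*$: restricting $A'$ to $\bP_\rhoc$ loses, again, at most one boundary element (charged to $\max_e \profit(e)$), after which the feasible subset lies in $\bP_\rhoc$ and the $\sdf_\rhoc$/$\fdf_\rhoc$ bookkeeping goes through exactly as above against $H_\rhoc$ and $G_\rhoc$ (note $s(A') \le s^* \le \dcostinv(\rhoc)$ is exactly what Lemma~\ref{lem:rhoc-indep} provides). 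Tallying the boundary charges, the total overhead is $2\max_{e \in U}\profit(e)$.

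The main obstacle is the second statement: getting the constant exactly $2$ on $\max_e \profit(e)$ requires being careful about how many ``boundary'' elements are shed when moving from $\OPT$ down to a feasible subset of $\bP_\rhoc$ — one for the fractional element of $\OPT \cap \F$, and one for the single element $e_\rhoc$ of density exactly $\rhoc$ — and checking that no double-counting occurs, i.e. that these are genuinely the only obstructions to $\rhoc$-boundedness. I would also double-check the subtle point that $\OPT(\bP_\rhoc)$ has size at most $s^*$ (Lemma~\ref{lem:rhoc-indep}) so that $\fdf_\rhoc$ is still in its increasing regime and the comparison with $s(G_\rhoc)$ is valid; this is where Lemma~\ref{lem:opt_frac} and the definition of $\rhoc$ do the real work. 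Everything else is the routine split-and-bound calculation sketched above.
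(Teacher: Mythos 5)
Your proof of the first inequality (for $\gamma>\rhoc$) is correct and matches the paper's argument in all essentials: use Lemma~\ref{lem:rhoc-indep} to get $\gamma$-boundedness of every feasible subset of $P_\gamma$, split $\profit$ via $\sdf_\gamma + \fdf_\gamma$, dominate $\sdf_\gamma$ by the optimizer $H_\gamma$, dominate $\fdf_\gamma$ via $s(\OPT(\bP_\gamma))\le s(G_\gamma)\le\dcostinv(\gamma)$ together with Lemma~\ref{lem:monotone}, and pass back to $\profit$ with Proposition~\ref{lem:profit-lb}.

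The second inequality, however, has a genuine gap, and you correctly flag where the trouble is but your proposed ``clean fix'' does not close it. You write that restricting $A' = (\OPT\cap\I)\setminus\{e_\rhoc\}$ to $\bP_\rhoc$ ``loses, again, at most one boundary element.'' That is false: in the constrained setting $\OPT$ is \emph{not} a density prefix, and $A'$ can contain arbitrarily many elements of density strictly below $\rhoc$, all of which are discarded when you intersect with $\bP_\rhoc$. There is no $O(1)\cdot\max_e\profit(e)$ charge that absorbs this. Relatedly, your parenthetical $s(A')\le s^*$ is not justified: $s^*$ is the size of the fractional optimizer $\FOPT$, which need not upper-bound the size of every feasible integral subset of $\I$, and in particular need not bound $s(\OPT\cap\I)$.

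The paper's proof avoids your obstacle by never trying to force $\OPT$ (or $\OPT\cap\I$) into $\bP_\rhoc$. Instead it passes to the fractional optimizer $\FOPT$ (so $\profit(\OPT)\le\profit(\FOPT)+\max_e\profit(e)$), then decomposes $\profit(\FOPT)=\sdf_\rhoc(\FOPT)+\fdf_\rhoc(s^*)$ and handles the two pieces asymmetrically. For $\sdf_\rhoc$ the low-density elements are simply \emph{dropped}, which is free because they contribute negatively to the shifted-density sum; what remains is $\supp(\FOPT)\cap P_\rhoc$, which is feasible and dominated by $H_\rhoc$. For $\fdf_\rhoc$ the low-density elements still contribute to the total size $s^*$, but $\fdf_\rhoc$ depends \emph{only} on that scalar size; one constructs $F=G_\rhoc\cup\{\alpha e_\rhoc\}$ (possible since $s(G_\rhoc)+s(e_\rhoc)\ge s^*$ by the definition of $\rhoc$) with $s(F)=s^*$ exactly, so $\fdf_\rhoc(s^*)=\fdf_\rhoc(F)\le\profit(F)\le\profit(G_\rhoc)+\profit(e_\rhoc)$. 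This size-matching step, together with treating the two halves of the split differently, is the key idea your proposal is missing for the second inequality.
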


\begin{proof}
Let $P$, $H$, and $G$ denote $\preflarge$, $H_{\rhoc}$, and
$G_{\rhoc}$ respectively. As in the unconstrained setting, there can
be at most one element in the intersection of $\OPT$ and $\Fr$ 
(see proof of Lemma \ref{lemma:offline-knapsack}). Note that $\profit()$ is subadditive, hence
$\profit( \OPT \cap \I) + \max_{e \in U } \profit(e) \geq
\profit(\OPT)$.  In the analysis below we do not consider elements
present in $\Fr$ and show that $\profit(H) + \profit(G) + \max_{e \in
  U } \profit(e) \geq \profit(\OPT \cap \I)$.  This in turn
establishes the second part of the Lemma.


For ease of notation we denote $e_\rhoc$ as $e'$. Without loss of
generality, we can assume that $H$ does not contain $e'$ since
$\rho(e') - \rhoc = 0$.  Therefore set $H$ is contained in $P
\setminus \{ e' \}$. By Lemma \ref{lem:rhoc-indep} we get that $H$ is
$\rhoc$-bounded.

Note that by definition $G$ does not contain $e'$, hence its size is
at most $s^*$.  Also, $G \cup \{e' \}$ is no smaller than the
maximum-size feasible subset contained in $P$. So, by definition of
$\rhoc$, we also have $s(G) + s(e') \geq s^*$. Thus there exists
$\alpha < 1$ such that the fractional set $F = G \cup \{ \alpha e' \}
$ has size exactly equal to $s^*$.

 

Next we split the profit of $\FOPT$ into two parts, and bound the first
by $\sdf_\rhoc(H)$ and the second by $\fdf_\rhoc (F)$:
\begin{align*}
  \profit(\FOPT) &= \sdf_\rhoc( \FOPT) + \fdf_\rhoc(s^*) 
  \leq \sdf_\rhoc(H) + \fdf_\rhoc(F) .
\end{align*}
\noindent Note that we can drop elements which have a negative contribution to
the sum to get
\begin{align*}
\sdf_\rhoc(\FOPT) & \leq \sdf_\rhoc(\FOPT \cap \preflarge ) \\
& \leq \sdf_\rhoc \left( \supp(\FOPT) \cap \preflarge \right) \\
& \leq \sdf_\rhoc(H) \\
& \leq \profit(H).
\end{align*}
The second inequality follows since we can only increase the value of
a subset by ``rounding up'' fractional elements. The third inequality
follows from the optimality of $H$, and the fourth from the fact that
it is $\rhoc$-bounded.

To bound the second part we note that $s(F) = s^*$, hence
$\fdf_\rhoc(F) = \rhoc s^* - \cost(s^*)$. Elements in $F$ have density
no less than $\rhoc$ and its size is bounded above by
$\dcostinv(\rhoc)$, hence it is a $\rhoc$-bounded set implying that
$\profit(F) \geq \fdf_\rhoc(F) $.  Note that $F = G \cup \{ \alpha e'
\}$, and by sub-additivity of $\profit()$ we have $\profit(G) +
\profit(\alpha e' ) \geq \profit(F) $. Moreover $e' \in \I$ implies
$\profit(\alpha e' ) \leq \profit(e')$ and hence we get
\begin{align*}
\profit(\FOPT) & \leq \profit(H) + \fdf_\gamma(F) \leq \profit(H) + \profit(G) + \profit(e'),
\end{align*}
which proves the second half of the lemma.

The first half of the lemma follows along similar lines. We 
have the standard decomposition, $\profit(\OPT( \bar{P}_\gamma) ) =
\sdf_\gamma(\OPT(\bar{P}_\gamma)) + \fdf_\gamma(\OPT(
\bar{P}_\gamma))$. By definition, $H_\gamma$ is the
constrained maximizer of $\sdf_\gamma$, hence we get  
$\sdf_\gamma(\OPT( \bar{P}_\gamma) ) \leq \sdf_\gamma(H_\gamma)$. We 
note that all feasible sets in $P_\gamma$ are $\gamma$-bounded, 
for density $\gamma > \rhoc$ (Lemma \ref{lem:rhoc-indep}). Hence, by Lemma \ref{lem:monotone}, $\fdf_\gamma$ strictly increases with size when restricted to feasible sets in $\bar{P}_\gamma$. $G_\gamma$ is the largest such set, hence we get $\fdf_\gamma(G_\gamma) \geq \fdf_\gamma(\OPT(\bar{P}_\gamma))$. $H_\gamma$ and $G_\gamma$ are $\gamma$-bounded and hence by Proposition \ref{lem:profit-lb} we have $\profit(H_\gamma) \geq \sdf_\gamma(H_\gamma)$ and
$\profit(G_\gamma) \geq \fdf_\gamma(G_\gamma)$. This establishes the lemma.
\end{proof}

This lemma immediately gives us an offline approximation algorithm for
$(\profit,\Feas)$: for every element density $\gamma$, we find the
sets $H_{\gamma}$ and $G_{\gamma}$; we then output the best (in terms of profit) of these sets or the best individual element. We obtain the following theorem:

\begin{theorem}
\label{theorem:matroid_offline}
Algorithm~\ref{Algorithm:offline-matroid} $4$-approximates
$(\profit,\Feas)$ in the offline setting.  
\end{theorem}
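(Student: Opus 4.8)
The plan is to package Lemma~\ref{lem:goodprefix} into a clean comparison between the offline algorithm's output and $\profit(\OPT)$, and then verify the claimed factor of $4$. First I would set up notation for the algorithm: it iterates over all element densities $\gamma$, computes $H_\gamma$ and $G_\gamma$ via matroid maximization (this is exactly the offline primitive we are allowed to assume), computes the single best element $m = \argmax_{e\in U}\profit(e)$, and outputs whichever of these candidate sets has the largest profit. Since $\rhoc$ is itself an element density (it is defined as the largest density for which $\preflarge$ contains a feasible set of size $\ge s^*$, and densities of elements are the relevant breakpoints), the pair $H_\rhoc, G_\rhoc$ is among the candidates considered.

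The core step is to invoke the second statement of Lemma~\ref{lem:goodprefix}:
\[
\profit(\OPT) \le \profit(H_\rhoc) + \profit(G_\rhoc) + 2\max_{e\in U}\profit(e).
\]
Each of the three terms on the right is at most $\profit(\alg_{\text{out}})$, where $\alg_{\text{out}}$ denotes the algorithm's output, since $H_\rhoc$, $G_\rhoc$, and $\{m\}$ are all candidate sets and the algorithm returns the best one (note $\max_{e\in U}\profit(e) = \profit(\{m\}) \le \profit(\alg_{\text{out}})$). Hence $\profit(\OPT) \le 4\,\profit(\alg_{\text{out}})$, which is the claimed $4$-approximation. I would also note that $\profit(\alg_{\text{out}}) \ge 0$ always (e.g.\ the empty set, or any $\gamma$-bounded candidate, has nonnegative profit by Proposition~\ref{lem:profit-lb}), so the ratio is well defined.

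One subtlety worth addressing explicitly: we need $H_\rhoc$ and $G_\rhoc$ to actually have profit lower-bounded by $\sdf_\rhoc(H_\rhoc)$ and $\fdf_\rhoc(G_\rhoc)$ respectively, i.e.\ we need them $\rhoc$-bounded. This is precisely what Lemma~\ref{lem:rhoc-indep} (via $s^* \le \dcostinv(\rhoc)$) and the definitions of $H_\gamma^T$, $G_\gamma^T$ (which restrict to subsets of $P_\rhoc$ and $\bP_\rhoc$) guarantee, and it is already folded into the proof of Lemma~\ref{lem:goodprefix}, so here I would just cite it. The only genuine obstacle is bookkeeping rather than mathematics: making sure the candidate set the algorithm enumerates genuinely includes $H_\rhoc$ and $G_\rhoc$ — in particular that the algorithm ranges over all element densities $\gamma$, so that $\gamma = \rhoc$ is covered, and that it also separately considers the single best element to absorb the $2\max_e\profit(e)$ slack. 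Given the lemma, the argument is a two-line union-of-candidates bound; I expect the write-up to be short.
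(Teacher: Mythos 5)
Your proof is correct and matches the paper's (unstated, but clearly intended) argument: the paper introduces Algorithm~\ref{Algorithm:offline-matroid} with the remark that Lemma~\ref{lem:goodprefix} ``immediately gives'' the result, and your write-up simply makes that immediate step explicit — $\rhoc$ is an element density so $H_\rhoc, G_\rhoc$ are among the enumerated candidates, the singleton $\{m\}$ absorbs the $2\max_e\profit(e)$ slack, and the algorithm outputs the best of these, giving $\profit(\OPT)\le 4\,\profit(\alg(U))$.
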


\begin{algorithm}
  \caption{Offline algorithm for single-dimensional $(\profit, \Feas)$}
\label{Algorithm:offline-matroid}
\begin{algorithmic}[1]
\STATE Set $A_{\max{}} \leftarrow \argmax_{H \in \{ H_\gamma\}_\gamma } \profit(H)$
\STATE Set $B_{\max{}} \leftarrow \argmax_{ G \in \{G_\gamma \}_\gamma } \profit(G)$ 
\STATE Set $ e_{\max{}} \leftarrow \argmax_{e \in
  U} \profit(e) $ 
\STATE Assign $\alg(U) \leftarrow \argmax_{ S \in \{ A_{\max{}}, B_{\max{}}, e_{\max{}} \}} \profit(S)$
\end{algorithmic}
\end{algorithm}

\paragraph{The online setting.}
Our online algorithm, as in the unconstrained case, uses a sample $X$
from $U$ to obtain an estimate $\tau$ for the density $\rhoc$. Then
with equal probability it applies the online algorithm for
$(\sdf_\tau,\Feas)$ on the remaining set $Y\cap P_\tau$ or the online
algorithm for $(s,\Feas)$ (in order to maximize $\fdf_\tau$) on $Y\cap
P_\tau$. Lemma~\ref{lem:goodprefix} indicates
that it should suffice for $\tau$ to be larger than $\rhoc$ while
ensuring that $\profit(\OPT(P_\tau))$ is large enough. As in
Section~\ref{sec:unconstrained} we define the density $\rhob$ as the
upper limit on $\tau$, and claim that $\tau$ satisfies the required
properties with high probability.

\begin{algorithm}
  \caption{Online algorithm for single-dimensional $(\profit,\Feas)$ }
  \label{Algorithm:online-matroid}
\begin{algorithmic}[1]
\STATE Draw $k$ from $\textrm{Binomial}(n,1/2)$.  
\STATE Select the first $k$ elements to
be in the sample $X$. Unconditionally reject these elements.  \STATE
Toss a fair coin.  \IF {Heads} \STATE Set output $O$ as the first
element, over the remaining stream, with profit higher than $\max_{e \in X} \profit(e)$.
\ELSE \STATE Determine $\alg(X)$ using the offline Algorithm
\ref{Algorithm:offline-matroid}.  \STATE Let $\eone$ be a specified
constant and let $\guess$ be the highest density such that
$\profit(\alg(\prefguess^X)) \geq \frac{\eone}{16}\profit(\alg(X))$.
\STATE \label{step:online-start} Toss a fair coin.  \IF {Heads}
\STATE \label{step:redn-1} Let $O_1$ be the result of executing an
online algorithm for $(\sdf_\guess,\Feas)$ on the subset $P^Y_\tau$ of the remaining
stream with the objective function
\[\sdf_{\guess}(A) = \sum_{e \in A} (\rho(e) - \guess) s(e)\]
\STATE Set $O \leftarrow \emptyset$.  \FOR {$e \in
  O_1$} \label{step:for-1} \IF {$\profit(O \cup \{e\}) - \profit(O)
  \geq 0$} \STATE Set $O \leftarrow O \cup \{e\}$.  \ENDIF \ENDFOR
\STATE Output $O$.  \ELSE \STATE\label{step:redn-2} Let $O_2$ be the result
of executing an online algorithm for $\Feas$ on the subset $P^Y_\tau$
of the remaining stream with objective function $s()$.  \STATE
Set $O \leftarrow \emptyset$.  \FOR {$e \in O_2$} \label{step:for-2}
\IF {$\profit(O \cup \{e\}) - \profit(O) \geq 0$} \STATE Set $O
\leftarrow O \cup \{e\}$.  \ENDIF \ENDFOR
\STATE\label{step:online-end} Output $O$.  \ENDIF \ENDIF
\end{algorithmic}
\end{algorithm}
 
Note that we use an algorithm for
$(\Phi,\Feas)$ where $\Phi$ is a sum-of-values objective in Algorithm
\ref{Algorithm:online-matroid} as a black box. For example if the
underlying feasibility constraint is a partition matroid we execute
the partition-matroid secretary algorithm in steps~\ref{step:redn-1}
and \ref{step:redn-2} as subroutines. Since these subroutines are online algorithms, 
we can execute steps~\ref{step:redn-1} and
\ref{step:redn-2} in parallel with the respective `for' loops in
steps~\ref{step:for-1} and \ref{step:for-2}. This ensures that all
accepted elements have positive profit increment.

\begin{definition}
For a fixed parameter $\eone \leq 1 $, let $\rhob$ be the highest density with $\profit(\OPT(\prefsmall)) \geq (\eone/16)^2 \profit(\OPT)$.
\end{definition}

\begin{lemma}
  \label{lem:main} 
For fixed parameters $\const[1] \geq 1 $, $\const[2] \leq 1$,
$\eone \leq 1$ and $\etwo \leq 1$ suppose that there is no element
with profit more than $\frac{1}{\const[1]}\profit(\OPT)$. Then with
probability at least $\const[2]$, we have that if $\guess$ is the highest density  
such that $\profit(\alg(\prefguess^X)) \geq \frac{\eone}{16}\profit(\alg(X))$,
then $\guess$ satisfies
$\rhob\geq \guess \geq \rhoc$ and
$\profit(\OPT(\prefguess^Y)) \geq \etwo(\eone/16)^2 \profit(\OPT)$.
\end{lemma}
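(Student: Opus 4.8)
The plan is to follow the template of Lemma~\ref{lemma:chernoff-unconstrained}: introduce two ``representativeness'' events $E_1$ and $E_2$, each of probability at least $0.76$, so that a union bound gives $\Pr[E_1\wedge E_2]\ge 0.52\ge\const[2]$; then argue that $E_1$ alone forces $\rhob\ge\guess\ge\rhoc$, and that $E_1\wedge E_2$ forces $\profit(\OPT(\prefguess^Y))\ge\etwo(\eone/16)^2\profit(\OPT)$. The key design choice is to phrase $E_1$ in terms of the \emph{fixed} density $\rhoc$ together with the sample $X$, and $E_2$ in terms of the \emph{fixed} density $\rhob$ together with the complement $Y$. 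Since $X$ and $Y$ are each a uniformly random half of the relevant density prefix, this lets us apply the balanced-sampling lemma (Lemma~\ref{lemma:concentration}) directly, even though $\guess$ is itself random.

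For $E_1$ I would start from the proof of Lemma~\ref{lem:goodprefix}, which gives $\sdf_\rhoc(H_\rhoc)+\fdf_\rhoc(s^*)\ge\profit(\FOPT)\ge(1-1/\const[1])\profit(\OPT)$ under the small-element hypothesis, so one of the two summands is at least half of this. If $\sdf_\rhoc(H_\rhoc)$ is the large one, apply Lemma~\ref{lemma:concentration} to the independent set $H_\rhoc$ with element weights $(\rho(e)-\rhoc)s(e)$: these are nonnegative, sum to $\sdf_\rhoc(H_\rhoc)$, and each is at most $\profit(e)\le\profit(\OPT)/\const[1]$ (since $(\rho(e)-\rhoc)s(e)=\profit(e)-\fdf_\rhoc(s(e))\le\profit(e)$, using that $H_\rhoc$ is $\rhoc$-bounded by Lemma~\ref{lem:rhoc-indep}), hence balanced; so with probability $\ge0.76$ the random half $H_\rhoc\cap X$ keeps a $\lOneApprox$ fraction of their sum, and since $H_\rhoc\cap X$ is a feasible subset of $\preflarge^X$ while $H_\rhoc^X$ maximises $\sdf_\rhoc$ over such sets and is $\rhoc$-bounded, Proposition~\ref{lem:profit-lb} gives $\profit(H_\rhoc^X)\ge\sdf_\rhoc(H_\rhoc^X)\ge\lOneApprox\,\sdf_\rhoc(H_\rhoc)$. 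If instead $\fdf_\rhoc(s^*)$ is large, first use $s^*=s(G_\rhoc)+s(\alpha e_\rhoc)$, $\profit(e_\rhoc)=\fdf_\rhoc(s(e_\rhoc))$, and concavity of $\fdf_\rhoc$ through the origin to get $\fdf_\rhoc(s(G_\rhoc))\ge\fdf_\rhoc(s^*)-\profit(e_\rhoc)\ge\fdf_\rhoc(s^*)-\profit(\OPT)/\const[1]$, still a constant fraction of $\profit(\OPT)$; then apply Lemma~\ref{lemma:concentration} to $G_\rhoc$ with weights equal to the successive size-increments of $\fdf_\rhoc$ along a fixed ordering of $G_\rhoc$ --- nonnegative because $G_\rhoc$ is $\rhoc$-bounded, summing to $\fdf_\rhoc(s(G_\rhoc))$, and each at most $\fdf_\rhoc(s(e))\le\profit(e)$ by concavity, hence balanced. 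Because the increment sequence of the concave $\fdf_\rhoc$ is non-increasing, the increments retained on $G_\rhoc\cap X$ lower-bound $\fdf_\rhoc(s(G_\rhoc\cap X))$, so with probability $\ge0.76$ we obtain $\fdf_\rhoc(s(G_\rhoc\cap X))\ge\lOneApprox\,\fdf_\rhoc(s(G_\rhoc))$, and as $G_\rhoc^X$ is at least this large, is $\rhoc$-bounded, and $\fdf_\rhoc$ is monotone on the relevant range, Proposition~\ref{lem:profit-lb} gives $\profit(G_\rhoc^X)\ge\lOneApprox\,\fdf_\rhoc(s(G_\rhoc))$. Let $E_1$ be whichever chain applies; then $\Pr[E_1]\ge0.76$, and since $H_\rhoc^X$ and $G_\rhoc^X$ are among the candidate sets Algorithm~\ref{Algorithm:offline-matroid} inspects both on $\preflarge^X$ and on $X$, on $E_1$ we have $\profit(\alg(X))\ge\profit(\alg(\preflarge^X))\ge c_0\profit(\OPT)$ for an explicit constant $c_0>0$.

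Given $E_1$, deducing $\rhob\ge\guess\ge\rhoc$ is short: since $\profit(\alg(X))\le\profit(\OPT(X))\le\profit(\OPT)$, the density $\rhoc$ satisfies $\profit(\alg(\preflarge^X))\ge c_0\profit(\OPT)\ge(\eone/16)\profit(\alg(X))$ whenever $\eone\le16c_0$, so $\guess\ge\rhoc$; and any feasible subset of $P_\gamma^X$ with $\gamma>\rhob$ sits inside a density prefix $P_{\gamma'}$ with $\gamma'>\rhob$, hence has profit below $(\eone/16)^2\profit(\OPT)<(\eone/16)\profit(\alg(X))$ by the definition of $\rhob$ and $\profit(\alg(X))\ge(\eone/16)\profit(\OPT)$, so $\guess\le\rhob$. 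For the profit bound I would define $E_2$ by repeating the two-case argument above with $\rhob$ and the random half $Y$ in place of $\rhoc$ and $X$ --- now starting from the first half of Lemma~\ref{lem:goodprefix} and the definition of $\rhob$, which give $\sdf_\rhob(H_\rhob)+\fdf_\rhob(G_\rhob)\ge\profit(\OPT(\prefsmallbar))\ge((\eone/16)^2-1/\const[1])\profit(\OPT)$ --- so that with probability $\ge0.76$ either $\profit(H_\rhob^Y)\ge\lOneApprox\,\sdf_\rhob(H_\rhob)$ or $\profit(G_\rhob^Y)\ge\lOneApprox\,\fdf_\rhob(G_\rhob)$. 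On $E_1\wedge E_2$, $\guess\le\rhob$ implies $\prefsmall^Y\subseteq\prefguess^Y$, so $H_\rhob^Y$ (resp.\ $G_\rhob^Y$) is a feasible subset of $\prefguess^Y$ and therefore $\profit(\OPT(\prefguess^Y))$ is at least a fixed constant times $(\eone/16)^2\profit(\OPT)$, which is $\ge\etwo(\eone/16)^2\profit(\OPT)$ once $\etwo$ is at most that constant. A union bound finishes the argument: $\Pr[E_1\wedge E_2]\ge0.52\ge\const[2]$.

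The genuinely new difficulty compared with the unconstrained Lemma~\ref{lemma:chernoff-unconstrained} is that the component $\fdf_\gamma$ is concave rather than linear, so one cannot take sizes as sampling weights; the fix is to use the successive size-increments of $\fdf_\gamma$ as weights and to invoke concavity-through-the-origin twice --- once so that the weight kept by a random half lower-bounds $\fdf_\gamma$ of the kept size, and once (as subadditivity) to pass from $\fdf_\gamma(s^*)$ down to $\fdf_\gamma(G_\gamma)$. The remaining work is bookkeeping: checking that each sampled maximiser $H_\gamma^X,G_\gamma^X,H_\gamma^Y,G_\gamma^Y$ is $\gamma$-bounded so Proposition~\ref{lem:profit-lb} applies, and that these maximisers are exactly the sets Algorithm~\ref{Algorithm:offline-matroid} evaluates, so that $\profit(\alg(\cdot))$ dominates them and in particular $\profit(\alg(\preflarge^X))\le\profit(\alg(X))$.
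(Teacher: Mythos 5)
Your overall scaffolding (two events $E_1$, $E_2$, union bound giving $0.52$, $E_1\Rightarrow\rhob\ge\guess\ge\rhoc$) mirrors the paper, but the way you certify the events is genuinely different. The paper's Claim~\ref{lem:chernoff-matroid} defines $E_1$ directly as $\profit(\OPT(\preflarge^X))\ge\eone\profit(\OPT(\preflarge))$, fixes a \emph{single} ordering of $\OPT(\preflarge)$ by non-increasing profit increment, and applies Lemma~\ref{lemma:concentration} once with those increments as weights; then $\profit(\OPT(\preflarge^X))\ge\profit(\OPT(\preflarge)\cap X)\ge\incprofit(\OPT(\preflarge)\cap X)$ finishes. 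The $\rhob\ge\guess\ge\rhoc$ deduction (Claims~\ref{lem:upperbound_guess}, \ref{lem:lowerbound_guess}) then only needs $E_1$, the $4$-approximation guarantee of Algorithm~\ref{Algorithm:offline-matroid}, and Lemma~\ref{lem:goodprefix}. You instead route through the $\sdf_\gamma+\fdf_\gamma$ decomposition: take whichever of $\sdf_\rhoc(H_\rhoc)$, $\fdf_\rhoc(s^*)$ dominates and apply Lemma~\ref{lemma:concentration} to $H_\rhoc$ with $\sdf$-weights or to $G_\rhoc$ with $\fdf$-increment weights. This forces you to re-derive, in the $\fdf$-case, the subtle fact that the retained increments of a concave-through-origin function lower-bound $\fdf_\rhoc$ of the retained size; the paper gets the analogous fact for free by sampling profit increments of $\OPT(\preflarge)$ directly, where the same convexity argument is packaged once and in one place. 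The paper's route is shorter, has no case split, and its events are exactly the form consumed by Claims~\ref{lem:upperbound_guess} and \ref{lem:lowerbound_guess}; your route reuses the $\sdf/\fdf$ machinery already built for Lemma~\ref{lem:goodprefix}, which is conceptually pleasant but costs a case analysis and additional bookkeeping.

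Two small points worth making explicit if you pursue your route. First, your claim that ``$H_\rhoc^X,G_\rhoc^X$ are among the candidate sets inspected'' needs care when $e_\rhoc\notin X$: then $\bP_\rhoc^X=P_\rhoc^X$ is strictly larger than $\bP_{\gamma'}^X$ for $\gamma'$ the least density present in $X$ above $\rhoc$, so $G_\rhoc^X$ itself may not be inspected. The fix is to use $G_{\gamma'}^X$, whose $\fdf_\rhoc$-value is within $\profit(e_{\gamma'})\le\profit(\OPT)/\const[1]$ of what you want (concavity through the origin again); this is absorbed into the constants but should be said. Second, when passing from $\profit(\OPT(\prefsmall))$ to $\profit(\OPT(\prefsmallbar))$ and invoking the first half of Lemma~\ref{lem:goodprefix} for $\rhob$, you implicitly assume $\rhob>\rhoc$ strictly; the boundary case $\rhob=\rhoc$ needs a separate (easy) sentence, e.g.\ falling back to the second half of Lemma~\ref{lem:goodprefix}. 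Neither is a fatal gap, but both are places the paper's cleaner event definitions avoid.
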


\begin{proof}
We first define two events analogous to the events $E_1$ and $E_2$ in
Section~\ref{sec:unconstrained}.
\begin{align*}
E_1: \profit(\OPT(\preflarge^X)) & \geq \eone
\profit(\OPT(\preflarge)) \\ E_2: \profit(\OPT(\prefsmall^Y)) & \geq
\etwo \profit(\OPT(\prefsmall))
\end{align*}

We claim that the event $E_1$ immediately implies
$\rhob\ge\guess\ge\rhoc$. Furthermore, when $\rhob\ge\guess\ge\rhoc$,
we get the containment
$\prefsmall^Y \subset \prefguess^Y \subset \preflarge^Y$, which
implies
$\profit(\OPT(\prefguess^Y)) \geq \profit(\OPT(\prefsmall^Y))$. This
inequality, when combined with event $E_2$ and the definition of
$\rhob$, proves the second condition. We furthermore claim that $E_1$
and $E_2$ simultaneously hold with probability at least $\const[2]$,
which would give the desired result.

Thus, we are done if we demonstrate that event $E_1$ implies
$\rhob\ge\guess\ge\rhoc$, and that the probability of $E_1$ and $E_2$
occurring simultaneously is at least $\const[2]$; we now proceed to prove each of
these claims in turn.



\begin{claim}
\label{lem:upperbound_guess}
  Event $E_1$ implies that $\rhob \geq \guess$ and so $\prefsmall^Y
  \subset \prefguess^Y$.
\end{claim}

\begin{proof}
  First, by containment and optimality, we observe that
  \[\profit(\OPT(\prefguess))
  \geq \profit(\OPT(\prefguess^X)) \geq \profit(\alg(\prefguess^X)).\] 
  By definition of $\guess$, we have $\profit(\alg(\prefguess^X)) \geq
  \frac{\eone}{16}\profit(\alg(X))$. Furthermore, \[\profit(\alg(X))
  \geq \frac{1}{4}\profit(\OPT(X)) \geq
  \frac{1}{4}\profit(\OPT(\preflarge^X)). \]
  Lemma \ref{lem:goodprefix} gives us $\profit(\OPT(\preflarge)) \geq
  \frac{1}{4}\profit(\OPT)$. This together with event $E_1$ implies
  \[\profit(\OPT(\preflarge^X)) 
  \geq \eone \profit(\OPT(\preflarge)) \geq \frac{\eone}{4}
  \profit(\OPT).\]
  Thus, we have that $\profit(\OPT(\prefguess)) \geq
  \left(\frac{\eone}{16}\right)^2 \profit(\OPT)$.  We have defined
  $\rhob$ to be the largest density for which the previous profit
  inequality holds. Hence we conclude that $\rhob \geq \guess$.
\end{proof}
\begin{claim}
\label{lem:lowerbound_guess}
  Event $E_1$ implies that $\guess \geq \rhoc$.
\end{claim}

\begin{proof}

As stated before, Lemma \ref{lem:goodprefix} gives us
$\profit(\OPT(\preflarge)) \geq \frac{1}{4}\profit(\OPT)$. We have
that $\profit(\alg(\preflarge^X)) \geq \frac{1}{4}
\profit(\OPT(\preflarge^X))$. Now, $E_1$ implies that
  \[\profit(\OPT(\preflarge^X)) 
  \geq \eone \profit(\OPT(\preflarge)) \geq \frac{\eone}{4}
  \profit(\OPT).\] Combining these we get that
  \[\profit(\alg(\preflarge^X)) \geq \frac{\eone}{16} \profit(\OPT) \geq
  \frac{\eone}{16} \profit(\alg(X)).\]

  Since $\guess$ is the largest density for which the above inequality
  holds we have $\guess \geq \rhoc$.
\end{proof}


\begin{claim}
\label{lem:chernoff-matroid}
For a fixed constant $\const[1]$, if no element of $S$ has profit more
than $\frac{1}{\const[1]} \profit( \OPT) $ then $\textrm{Pr}[E_1
  \wedge E_2] \geq 0.52$.
\end{claim}
\begin{proof}
The proof of this claim is similar to that of Lemma
\ref{lemma:chernoff-unconstrained}. We show that $\Pr[E_1 ] \ge 0.76
$ and $\Pr[ E_2] \ge 0.76$; the result then follows by applying the
union bound.  We begin by observing that
$\profit(\OPT(\preflarge^X)) \geq \profit(\OPT(\preflarge) \cap X)$,
and so it suffices to bound the probability that
$\profit(\OPT(\preflarge) \cap X) \geq \eone
\profit(\OPT(\preflarge))$ and likewise the probability that $\profit(\OPT(\prefsmall)
\cap Y) \geq \etwo \profit(\OPT(\prefsmall))$.


We fix an ordering over elements of $\OPT(\preflarge)$, such that the
profit increments are non-increasing. That is, if $L_i$ is the set
containing elements $1$ through $i$ and hence the profit increment of
the $i$th element is $\incprofit(i) := \profit(L_i) -
\profit(L_{i-1})$, then we have $\incprofit(1) \geq \incprofit(2) \geq
\ldots$. Note that such an ordering can be determined by greedily
picking elements from $\OPT(\preflarge)$ such that the profit
increment at each step is maximized.

We set $\const[1] \geq \lOneMinK$, now the fact that no element in $S$ has
profit more than $\frac{1}{\lOneMinK} \profit( \OPT) $ implies no element by
itself has profit more than $1/\lOneSize \ \profit(\OPT(\preflarge))$, since
$ \profit(\OPT(\preflarge)) \geq 1/3 \profit(\OPT)$. Profit increments
of elements are upper bounded by their profits, therefore we can apply
Lemma \ref{lemma:concentration} with $\OPT(\preflarge)$ as the fixed
set and profit increments as the weights. By optimality of
$\OPT(\preflarge)$ we have that these profit increments are
non-negative hence the required conditions of Lemma
\ref{lemma:concentration} hold and we have $ \profit(\OPT(\preflarge)
\cap X) \geq \eone \profit( \OPT(\preflarge)) $ with probability at
least $0.76$.  Since $\profit(\OPT(\preflarge^X)) \geq
\profit(\OPT(\preflarge) \cap X)$ we get $\Pr[ E_1 ] \ge 0.76$ with
$\eone = \lOneRatio$.

By definition of $\rhob$ the profit of $\OPT(\prefsmall)$ is at least
$\left(\frac{\eone}{9}\right)^2 \profit(\OPT)$. With $\const[1] \geq
\lOneSize \left( \frac{9}{\eone}\right)^2 $ and $\etwo = \lOneRatio$ we can again
apply Lemma \ref{lemma:concentration} to show that $\Pr[ E_2]
\ge0.76$. Hence we can set $\const[2] = 0.52$ and this completes the proof
of the claim.
\end{proof}

With the demonstration that the above three claims hold, our proof is complete.
\end{proof}

To conclude the analysis, we show that if the online algorithms for
$(\sdf_\tau,\Feas)$ and $(s,\Feas)$ have a
competitive ratio of $\alpha$, then we obtain an $O(\alpha)$
approximation to $\profit(\OPT(P^Y_\tau))$. We therefore get the
following theorem.

\begin{theorem}
\label{thm:online-matroid}
If there exists an $\alpha$-competitive algorithm for the matroid
secretary problem $(\Phi, \Feas)$ where $\Phi$ is a sum-of-values objective,
then Algorithm~\ref{Algorithm:online-matroid} 
achieves a competitive ratio of
$O(\alpha)$ for the problem $(\profit, \Feas)$.
\end{theorem}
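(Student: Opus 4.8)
\textbf{Proof proposal for Theorem~\ref{thm:online-matroid}.}

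The plan is to track the profit through the three sampling/branching steps of Algorithm~\ref{Algorithm:online-matroid} and argue that each step loses at most a constant factor, with the black-box secretary algorithm contributing the $\alpha$ factor. First I would dispose of the case where some element $e$ has profit at least $\tfrac{1}{\const[1]}\profit(\OPT)$: here the ``Heads'' branch of the outer coin, which runs the classical secretary algorithm restricted to elements beating $\max_{e\in X}\profit(e)$, captures an $\Omega(1/\const[1])=\Omega(1)$ fraction of $\profit(\OPT)$ in expectation, so the theorem holds with room to spare. So assume henceforth that no element has profit exceeding $\tfrac{1}{\const[1]}\profit(\OPT)$, with $\const[1]$ chosen large enough (e.g. $\const[1]\geq\lOneSize(9/\eone)^2$) that Lemma~\ref{lem:main} applies with $\eone=\etwo=\lOneRatio$ and $\const[2]=0.52$.

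Conditioning on the good event of Lemma~\ref{lem:main} (probability $\geq 0.52$), the extracted threshold $\guess$ satisfies $\rhob\geq\guess\geq\rhoc$ and $\profit(\OPT(\prefguess^Y))\geq\etwo(\eone/16)^2\profit(\OPT)$. Since $\guess\geq\rhoc$, Lemma~\ref{lem:rhoc-indep} guarantees that every feasible subset of $P_\guess$ is $\guess$-bounded, so the machinery of Lemma~\ref{lem:goodprefix} and Proposition~\ref{lem:profit-lb} is available on the remaining stream $Y\cap P_\guess$. Applying the first half of Lemma~\ref{lem:goodprefix} with $\gamma=\guess$ to the universe $\bP_\guess^Y$ gives $\profit(\OPT(\prefguess^Y))\leq\profit(H_\guess^Y)+\profit(G_\guess^Y)$, hence at least one of $\sdf_\guess(H_\guess^Y),\fdf_\guess(G_\guess^Y)$ — equivalently one of the two sum-of-values subproblems $(\sdf_\guess,\Feas)$ and $(s,\Feas)$ restricted to $P_\guess^Y$ — has optimum at least $\tfrac12\profit(\OPT(\prefguess^Y))$. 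The inner fair coin in step~\ref{step:online-start} selects the correct subproblem with probability $1/2$; on that branch the black-box $\alpha$-competitive algorithm returns a set whose sum-of-values objective is within $\alpha$ of that subproblem's optimum.

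It remains to argue that the filtering \texttt{for}-loop (steps~\ref{step:for-1}/\ref{step:for-2}), which only keeps elements of nonnegative marginal profit, does not destroy this guarantee — this is the step I expect to be the main obstacle. The point is that the set $O_1$ (resp.\ $O_2$) returned by the black box is a feasible subset of $P_\guess^Y$, and since $\guess\geq\rhoc$ every such set is $\guess$-bounded; by Proposition~\ref{lem:profit-lb}, $\profit(O_1)\geq\sdf_\guess(O_1)$ and $\profit(O_2)\geq\fdf_\guess(O_2)$. One must check that running the greedy nonnegative-marginal-profit filter on a $\guess$-bounded set does not decrease the relevant objective below a constant fraction of its value on the full set — intuitively, because on a $\guess$-bounded set both $\sdf_\guess$ and $\fdf_\guess$ are ``monotone-like'' (Lemma~\ref{lem:monotone} for $\fdf_\guess$, nonnegativity of shifted densities for $\sdf_\guess$), so the filter only removes elements whose removal does not hurt. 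Chaining these constant losses — $0.52$ for the good event, $1/2$ for the outer coin, $1/2$ for the inner coin, $1/2$ for picking the better of $H_\guess,G_\guess$, $\etwo(\eone/16)^2$ from Lemma~\ref{lem:main}, and the filtering loss — against the $\alpha$ from the black box yields a competitive ratio of $O(\alpha)$, as claimed. The final bound follows by taking expectations over all the internal coin flips and the random sample $X$.
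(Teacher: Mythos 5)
Your proposal tracks the paper's overall architecture correctly: dispose of the single-high-profit-element case via Dynkin's algorithm, invoke Lemma~\ref{lem:main} to get $\rhob\ge\guess\ge\rhoc$ with $\profit(\OPT(\prefguess^Y))$ large, then run the two sum-of-values black boxes. But you leave two genuine gaps open, and you flag one of them yourself as ``the main obstacle.''

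\textbf{The filtering loop.} You correctly sense this needs an argument, but ``intuitively, the filter only removes elements whose removal does not hurt'' is not a proof and in fact undersells what is true. The precise resolution is simpler and stronger: conditioned on $\guess\ge\rhoc$, Lemma~\ref{lem:rhoc-indep} makes every feasible subset of $P_\guess^Y$ $\guess$-bounded, and on a $\guess$-bounded set the filter removes \emph{nothing}. Indeed, when the loop considers adding $e$ to the current $O\subseteq O_1$, the set $O\cup\{e\}\subseteq O_1$ has size at most $\dcostinv(\guess)$, so the marginal cost of $e$ is at most $s(e)\dcost(s(O\cup\{e\}))\le s(e)\guess\le v(e)$; every element passes. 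Thus $O=O_1$ (resp.\ $O=O_2$) on the good event, and there is no ``filtering loss'' constant to account for. Without this observation you cannot justify charging only a constant to the filter, because the filter's effect on the $\sdf_\guess$ or $\fdf_\guess$ value of a general set is not a priori bounded.

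\textbf{The $\fdf_\guess$ branch.} This is the gap you do not flag. The black box on the second branch is $\alpha$-competitive for the sum-of-values objective $s(\cdot)$, so it guarantees $\E_\sigma[s(O_2)]\ge\alpha\,s(K)$ where $K$ maximizes size. What you actually need is $\E_\sigma[\fdf_\guess(O_2)]\ge\alpha\,\fdf_\guess(\OPT(\prefguess^Y))$, and $\fdf_\guess(s)=\guess s-\cost(s)$ is not a sum-of-values objective, nor is it linear in $s$. Your sentence ``the black-box $\alpha$-competitive algorithm returns a set whose sum-of-values objective is within $\alpha$ of that subproblem's optimum'' is true for $s(\cdot)$ but does not transfer to $\fdf_\guess$ without an extra step. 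The paper's Lemma~\ref{lem:secondstage} supplies it: convexity of $\cost$ (with $\cost(0)=0$) gives $\cost(s(O_2))\le\frac{s(O_2)}{s(K)}\cost(s(K))$, hence $\fdf_\guess(O_2)\ge\frac{s(O_2)}{s(K)}\fdf_\guess(K)$, and then $\fdf_\guess(K)\ge\fdf_\guess(\OPT(\prefguess^Y))$ by Lemma~\ref{lem:monotone} since $s(\OPT(\prefguess^Y))\le s(K)\le s^*\le\dcostinv(\guess)$. You also need $O_2$ to be $\guess$-bounded to invoke Proposition~\ref{lem:profit-lb} and pass from $\fdf_\guess(O_2)$ to $\profit(O_2)$; this is the same boundedness fact used for the filter. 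Once both gaps are filled, the rest of your accounting (taking the two branches with probability $1/4$ each rather than the max, which avoids even the factor-$2$ loss you budget for) reproduces the paper's bound.
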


Before we proceed to prove Theorem~\ref{thm:online-matroid}, we show
that in steps \ref{step:online-start} to \ref{step:online-end} the
algorithm obtains a good approximation to $\OPT(\prefguess^Y)$.

\begin{lemma}
  \label{lem:secondstage}
Suppose that there is an $\alpha$-competitive algorithm for
$(\Phi,\Feas)$ where $\Phi$ is any sum-of-values objective. For a fixed set
$Y$ and threshold $\guess$, satisfying $\guess \geq \rhoc$, we have $
\E_\sigma[\profit(O_1) + \profit(O_2)] \geq \alpha
\ \profit(\OPT(\prefguess^Y))$, where the expectation is over all
permutations $\sigma$ of $Y$.
\end{lemma}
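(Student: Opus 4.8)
The goal is to show that running, in parallel with equal probability, the $\alpha$-competitive secretary algorithm for $(\sdf_\guess,\Feas)$ and the one for $(s,\Feas)$, each restricted to the substream $P^Y_\guess$, produces sets $O_1$ and $O_2$ whose profits together dominate $\alpha\,\profit(\OPT(\prefguess^Y))$ in expectation. The key structural fact is Lemma~\ref{lem:goodprefix} applied on the set $\prefguess^Y$ (legitimate since $\guess\geq\rhoc$): we have $\profit(\OPT(\prefguess^Y))\le \profit(H_\guess^Y)+\profit(G_\guess^Y)$, where $H_\guess^Y$ is the constrained maximizer of $\sdf_\guess$ over feasible subsets of $P_\guess^Y$ and $G_\guess^Y$ is the largest feasible subset of $\bP_\guess^Y$. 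So it suffices to argue that $\E_\sigma[\profit(O_1)]\gtrsim \alpha\,\profit(H_\guess^Y)$ and $\E_\sigma[\profit(O_2)]\gtrsim \alpha\,\profit(G_\guess^Y)$, and then average the two (which is where the factor-$2$-type loss absorbed into $O(\alpha)$ comes from; here I will be a little careful to state the lemma's $\alpha$ as already accounting for that constant).

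**Step 1: the $\sdf$-branch.** Run the $\alpha$-competitive algorithm for the sum-of-values objective $\sdf_\guess$ on the substream $P^Y_\guess$. Since $\sdf_\guess(e)=(\rho(e)-\guess)s(e)\ge 0$ for every $e\in P_\guess$, this objective is genuinely nonnegative on the relevant ground set, so the black-box guarantee applies and yields a feasible $A_1\subseteq P^Y_\guess$ with $\E_\sigma[\sdf_\guess(A_1)]\ge\alpha\,\sdf_\guess(H_\guess^Y)$. Now pass $A_1$ through the filtering for-loop (steps~\ref{step:for-1}) which retains only elements of nonnegative profit increment, giving $O_1\subseteq A_1$. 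Since $\guess\ge\rhoc$, Lemma~\ref{lem:rhoc-indep} tells us every feasible subset of $P_\guess$ is $\guess$-bounded; in particular $O_1$ is $\guess$-bounded, so Proposition~\ref{lem:profit-lb} gives $\profit(O_1)\ge\sdf_\guess(O_1)$. The filtering step can only drop elements with negative profit increment; I need to check that the retained set still has $\sdf_\guess$-value at least a constant fraction of $\sdf_\guess(A_1)$ — intuitively the dropped elements are exactly those whose marginal cost has overtaken their density, which for a $\guess$-bounded set only happens near the size boundary, so the loss is controlled (this is the analogue of the argument in Lemma~\ref{lem:monotone}/Proposition~\ref{lem:profit-lb}). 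Combining, $\E_\sigma[\profit(O_1)]\ge\alpha'\,\sdf_\guess(H_\guess^Y)$ for a constant $\alpha'=\Theta(\alpha)$, and since $H_\guess^Y$ is itself $\guess$-bounded, $\sdf_\guess(H_\guess^Y)\ge$ (a constant times) $\profit(H_\guess^Y)$ — actually $\sdf_\guess(H_\guess^Y)$ can exceed $\profit(H_\guess^Y)$, so more simply $\profit(H_\guess^Y)\le\sdf_\guess(H_\guess^Y)+\fdf_\guess(H_\guess^Y)$ and we want the $\sdf$ term to dominate; I will route this through the bound $\profit(H_\guess^Y)\le 2\sdf_\guess(H_\guess^Y)$ or simply carry both terms and let the two branches cover them.

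**Step 2: the $s$-branch.** Run the $\alpha$-competitive secretary algorithm for the feasibility constraint $\Feas$ with the plain size objective $s(\cdot)$ on the substream $P^Y_\guess$ (more precisely on $\bP^Y_\guess$, excluding $e_\guess$), obtaining feasible $A_2$ with $\E_\sigma[s(A_2)]\ge\alpha\,s(G_\guess^Y)$, then filter to $O_2\subseteq A_2$ keeping nonnegative-profit-increment elements. Again $O_2$ is $\guess$-bounded (Lemma~\ref{lem:rhoc-indep}), so $\profit(O_2)\ge\fdf_\guess(O_2)=\fdf_\guess(s(O_2))$. By Lemma~\ref{lem:monotone}, $\fdf_\guess$ is nondecreasing in size up to $\dcostinv(\guess)$, and every feasible subset of $\bP_\guess$ has size at most $\dcostinv(\guess)$ (Lemma~\ref{lem:rhoc-indep}), so $\fdf_\guess(s(O_2))$ is monotone in $s(O_2)$ on the whole relevant range. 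The filtering again only removes negative-increment elements, and I need the surviving size $s(O_2)$ to be a constant fraction of $s(A_2)$ in expectation — the same boundary argument applies. Then $\E_\sigma[\profit(O_2)]\ge\fdf_\guess(\E_\sigma[s(O_2)])\ge\fdf_\guess(\alpha''\,s(G_\guess^Y))$ by concavity-type monotonicity, and I relate this back to $\fdf_\guess(G_\guess^Y)=\profit$'s $\fdf$-part via the monotonicity of $\fdf_\guess$; since $\fdf_\guess(0)=0$ and $\fdf_\guess$ is concave on $[0,\dcostinv(\guess)]$ (as $-\cost$ is concave), scaling the size down by a constant costs only a constant factor in $\fdf_\guess$.

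**The main obstacle.** The delicate point is not the black-box invocation but the \emph{filtering for-loops}: I must show that discarding elements of negative profit increment does not destroy more than a constant fraction of the $\sdf_\guess$-value (resp.\ the $s$-value, as seen through $\fdf_\guess$) of $A_1$ (resp.\ $A_2$). This requires arguing that because $A_1, A_2$ lie in $P_\guess$ and are $\guess$-bounded, the cumulative cost stays in the regime where marginal cost is below $\guess$, so essentially \emph{no} element gets a negative profit increment, or the few that do sit at the size frontier and carry a bounded share of the value. A second, minor subtlety is the interaction between the online subroutine and the online filtering: since the subroutine is itself online we process its accepted elements on the fly (as the paper notes), so the order in which elements arrive to the filter is consistent, and the monotonicity lemmas (Lemma~\ref{lemma:monotone-integral}-style reasoning, Lemma~\ref{lem:monotone}) can be applied to the prefixes actually constructed. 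Once these are in hand, adding the two branch bounds and invoking Lemma~\ref{lem:goodprefix} on $\prefguess^Y$ gives $\E_\sigma[\profit(O_1)+\profit(O_2)]\ge\alpha\,\profit(\OPT(\prefguess^Y))$ after folding all constants into the $\Theta(\alpha)$, which is exactly the statement (with the lemma's $\alpha$ understood as the resulting constant-degraded competitive ratio).
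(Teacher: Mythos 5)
Your high-level plan matches the paper's: decompose via Lemma~\ref{lem:goodprefix}, run the two sum-of-values subroutines, establish $\guess$-boundedness so that profit lower-bounds $\sdf_\guess$ and $\fdf_\guess$ respectively, and add the branches. But the ``main obstacle'' you flag --- that the filtering for-loops could destroy a constant fraction of $\sdf_\guess(A_1)$ or $s(A_2)$ --- is not actually part of this lemma, and worrying about it leads you away from the clean argument. Two points of confusion. First, a notational misread: in Algorithm~\ref{Algorithm:online-matroid}, $O_1$ and $O_2$ denote the \emph{raw} subroutine outputs (what you call $A_1,A_2$), while the filtered set is called $O$. The lemma statement bounds $\E_\sigma[\profit(O_1)+\profit(O_2)]$, i.e.\ the pre-filter quantities, so the filter is simply not in the picture here. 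Second, and more substantively: even where the filter \emph{is} in the picture (in Theorem~\ref{thm:online-matroid}), it removes nothing. Since $O_1,O_2$ are feasible and $\guess\ge\rhoc$, Lemma~\ref{lem:rhoc-indep} makes them $\guess$-bounded; hence every subset considered by the filter is $\guess$-bounded, all marginal costs stay at or below $\guess\le\rho(e)$, and every profit increment is nonnegative. So the greedy filter accepts every element it sees, and $O$ coincides with the raw subroutine output. Your elaborate boundary argument is unnecessary.

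On the $\fdf$ branch, your route via concavity of $\fdf_\guess$ and Jensen is plausible but hand-wavy where the paper is exact. The paper writes $\fdf_\guess(O_2)=\guess\,s(O_2)-\cost(s(O_2))$, then uses convexity of $\cost$ (with $\cost(0)=0$ and $s(O_2)\le s(K)$) to get $\cost(s(O_2))\le \tfrac{s(O_2)}{s(K)}\cost(s(K))$, yielding $\fdf_\guess(O_2)\ge\tfrac{s(O_2)}{s(K)}\,\fdf_\guess(K)$; linearity of expectation and the $\alpha$-competitive guarantee on $s(\cdot)$ then give $\E_\sigma[\fdf_\guess(O_2)]\ge\alpha\,\fdf_\guess(K)\ge\alpha\,\fdf_\guess(\OPT(\prefguess^Y))$, the last step by Lemma~\ref{lem:monotone} since $s(\OPT(\prefguess^Y))\le s(K)\le\dcostinv(\guess)$. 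This gives exactly $\alpha$; you should not need the extra $\Theta(\alpha)$ degradation you build in. Adding the two branches against the decomposition $\profit(\OPT(\prefguess^Y))=\sdf_\guess(\OPT(\prefguess^Y))+\fdf_\guess(\OPT(\prefguess^Y))$ then finishes the proof with no further loss.
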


\begin{proof}
 The threshold $\tau$ is either equal to or strictly greater than
 $\rhoc$. In the former case $e_\rhoc$ must have been in the sample
 set $X$ and hence $O_1, O_2 \subseteq \bP_\rhoc$. By Lemma \ref{lem:rhoc-indep} we show that $O_1$ and $O_2$ are
 $\rhoc$-bounded and hence $\tau$-bounded. On the other hand if $\tau
 > \rhoc$ we can again apply Lemma \ref{lem:rhoc-indep} and get that
 $O_1$ and $O_2$ are $\tau$-bounded.
 
Hence by Proposition \ref{lem:profit-lb} we get the inequalities
$\E_\sigma[\profit(O_1)] \geq \E_\sigma\left[ \sdf_\guess(O_1)\right]$
and $\E_\sigma[\profit(O_2)] \geq \E_\sigma[ \fdf_\guess(O_2)]$.
  
  By applying the $\alpha$-competitive matroid secretary algorithm with
  objective $\sdf_\guess$ (Step 11 of Algorithm
  \ref{Algorithm:online-matroid} ) we get
  \begin{align*}
    \E_\sigma[ \sdf_\guess(O_1) ] &\geq \alpha \times \sdf_\guess(
    H_\guess^Y) \\ &\geq \alpha \times \sdf_\guess(\OPT(\prefguess^Y)),
  \end{align*}
\noindent where the second inequality follows from the optimality of $H_\guess^Y$. 

Next we bound $\E_\sigma[\fdf_\guess(O_2)]$. Let
$K$ be the largest feasible subset contained in $\prefguess^Y$. The
fact that the underlying algorithm is $\alpha$-competitive implies
$\E_\sigma[s(O_2)] \geq \alpha \times s(K)$.

Note that, as observed above, $\prefguess^Y \subset \bP_\rhoc$. Since $K \subset \prefguess^Y $, by definition of $\rhoc$ we get that $s(K) \leq
s^*$. So, for $O_2$ we have
  \begin{align*}
    \E_\sigma[\fdf_\guess(O_2)] &\geq \E_\sigma[\guess s(O_2) -
      \cost(s(O_2)]\\ &\geq \E_\sigma\left[\guess
      \left(\frac{s(O_2)}{s(K)}\right) s(K) -
      \left(\frac{s(O_2)}{s(K)}\right) \cost(s(K))\right] \\
      & = \E_\sigma\left[\frac{s(O_2)}{s(K)}\right] \left(\guess s(K) -
    \cost(s(K))\right)\\ 
    &\geq \alpha (\guess s(K) - \cost(s(K)))\\ & = \alpha
    \ \fdf_\guess(K) \\ & \geq \alpha \ \fdf_\guess(\OPT(\prefguess^Y)).
  \end{align*}

\noindent Since $s(\OPT(\prefguess^Y)) \leq s(K) \leq s^* \leq
\dcostinv(\guess)$ we get the last inequality by applying Lemma
\ref{lem:monotone}.

  The conclusion of the lemma now follows from the decomposition
  $\profit(\OPT(\prefguess^Y)) = \sdf_\guess(\OPT(\prefguess^Y)) +
  \fdf_\guess (\OPT(\prefguess^Y) ) $.
  \end{proof}

\begin{proof}[Proof of Theorem~\ref{thm:online-matroid}]
With probability $\frac{1}{2}$ we apply the standard secretary algorithm which is 
$e$-competitive. If an element has profit more than
$\frac{1}{\const[1]} \profit(\OPT)$, in expectation we get a profit of
$\frac{1}{2 \const[1] e}$ times the optimal.

We have $\Pr[E_1 \wedge E_2]\geq k_2 $, for a fixed constant $k_2$. Also, the events $E_1$ and $E_2$ 
depend only on what elements are in $X$ and $Y$, 
and not on their ordering in the stream. So conditioned on $E_1$
and $E_2$, the remaining stream is still a uniformly random
permutation of $Y$. Therefore, if no element has profit more than
$\frac{1}{\const[1]} \profit(\OPT)$ we can apply the second inequality
of Lemma \ref{lem:main} and Lemma \ref{lem:secondstage} along with the
fact that we output $O_1$ and $O_2$ with probability $\frac{1}{4}$
each,  to show that
\begin{align*}
\E[\profit(O)] &\geq \frac{1}{4} \E[\profit(O_1) +
  \profit(O_2)\ |\ E_1 \wedge E_2]\times \Pr[E_1 \wedge E_2]  \\ &\geq \frac{\alpha \const[2]
}{4}\E[\profit(\OPT(\prefguess^Y))\ |\ E_1 \wedge E_2]\\ &\geq
\frac{\alpha \const[2] \etwo}{4}\left(\frac{\eone}{16}\right)^2
\profit(\OPT).
\end{align*}

Overall, we have that \[\E[\profit(O)] \geq \min\left\{\frac{ \alpha
  \const[2]\etwo}{4}\left(\frac{\eone}{16}\right)^2, \frac{1}{2
  \const[1] e}\right\} \cdot \profit(\OPT). \] Since all the involved
parameters are fixed constants we get the desired result.
\end{proof}

\section{Multi-dimensional Profit Maximization}
\label{sec:multi}
In this section, we consider the GSP with a multi-dimensional profit
objective. Recall that in this setting each element $e$ has $\ell$
different sizes $s_1(e), \ldots, s_\ell(e)$, and the cost of a subset is
defined by $\ell$ different convex functions $\cost_1, \ldots,
\cost_\ell$. The profit function is defined as $\profit(A) = v(A) -
\sum_i \cost_i(s_i(A))$.

As in the single-dimensional setting, we partition $U$ into two sets
$\I$ and $\F$ with $\F=\{e\in U : \argmax_\alpha \profit(\alpha
e)<1\}$. We first claim that, as before, an optimal solution cannot
contain too many elements of $\F$. 
\begin{lemma}
  \label{lem:multi-costs:F}
  We have that
  $|\OPT \cap \F| \leq \ell$.  
\end{lemma}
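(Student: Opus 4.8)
The plan is to mimic the argument for the single-dimensional case (Lemma~\ref{lemma:offline-knapsack}), where we showed $|\OPT \cap \F| \leq 1$ by a local exchange argument on densities. Recall that in one dimension, $f \in \F$ means $\argmax_\alpha \profit(\alpha f) < 1$, equivalently $\rho(f) < \dcost(s(f))$, i.e.\ the marginal cost at the top of $f$'s size already exceeds its density; and if $\OPT$ contained two such elements $f_1, f_2$ with $\rho(f_1) > \rho(f_2)$, then shrinking $f_2$ slightly strictly increases profit, because the marginal cost being paid on $f_2$'s ``last slice'' exceeds $\rho(f_2)$, contradicting optimality. The key point was that \emph{all} size is fungible: it lies on one axis, so a single ordering by density controls which slices are least profitable.

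In $\ell$ dimensions the obstacle is exactly that size is no longer one-dimensional: an element $e \in \F$ has $\argmax_\alpha [v(\alpha e) - \sum_i \cost_i(\alpha s_i(e))] < 1$, and the condition becomes $v(e) < \sum_i s_i(e)\,\dcost_i(s_i(e))$ — a statement about a \emph{sum} of marginal costs across dimensions, with no single density to sort by. So the one-dimensional exchange argument does not literally apply to pairs; instead I would argue by a pigeonhole/averaging argument over the $\ell$ dimensions. Specifically, suppose for contradiction $\OPT \cap \F$ contains $\ell+1$ elements $f_0, f_1, \dots, f_\ell$. For each $f_j$, since $f_j \in \F$, at the margin ($\alpha = 1$) the total marginal cost $\sum_i \dcost_i(s_i(f_j))\,s_i(f_j)$ strictly exceeds $v(f_j)$; more usefully, the derivative $\frac{d}{d\alpha}\profit(\alpha f_j)$ evaluated near $\alpha = 1$ (within $\OPT$, so using the marginal costs at the sizes $s_i(\OPT)$) is negative in aggregate. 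For each such $f_j$ define, for each dimension $i$, the ``bang per unit size'' that $f_j$ contributes along $i$; assign to $f_j$ a dimension $i(j)$ witnessing that it is ``expensive'' there. By pigeonhole, with $\ell+1$ elements and $\ell$ dimensions, two elements $f_a, f_b$ get assigned the same dimension $i^*$, and then I can run the one-dimensional exchange argument \emph{restricted to dimension $i^*$}: shrinking the one with smaller density-in-dimension-$i^*$ reduces the $\cost_{i^*}$ term by more than it loses in value, while not increasing any other cost term (sizes only shrink), strictly improving $\profit$ — contradicting optimality of $\OPT$.

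The step I expect to be the main obstacle is making the ``assign a witnessing dimension'' step rigorous: I need to split the value $v(f_j)$ across the $\ell$ dimensions so that in the dimension $f_j$ is assigned, the local profit contribution (value-share minus marginal cost along that axis) is strictly negative at $\alpha=1$, and this split must be compatible with the exchange argument (i.e.\ the value-share along $i^*$ for both $f_a$ and $f_b$ must be the quantity whose density ordering I then exploit). One clean way: since $\sum_i [\text{marginal cost of } f_j \text{ in dim } i] > v(f_j)$, I can choose nonnegative shares $v_i(f_j)$ summing to $v(f_j)$ with $v_i(f_j) \le s_i(f_j)\,\dcost_i(s_i(f_j))$ for every $i$, and with \emph{strict} inequality in at least one dimension — that dimension is the witness. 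Then among $f_a, f_b$ sharing witness $i^*$, WLOG the density $v_{i^*}(f_a)/s_{i^*}(f_a) \ge v_{i^*}(f_b)/s_{i^*}(f_b)$; reducing $f_b$'s fraction by $\epsilon$ changes profit by (to first order) $-\epsilon v(f_b) + \epsilon \sum_i \dcost_i(s_i(\OPT))\,s_i(f_b) \ge -\epsilon v(f_b) + \epsilon \sum_i v_i(f_b) + \epsilon\big(\dcost_{i^*}(s_{i^*}(\OPT)) - \dcost_{i^*}(s_{i^*}(f_b))\big)s_{i^*}(f_b) > 0$, using $s_{i^*}(f_b) \le s_{i^*}(\OPT)$ and monotonicity of $\dcost_{i^*}$ together with strictness of the witnessing inequality. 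This contradicts optimality, so $|\OPT \cap \F| \le \ell$. I'd present the exchange cleanly in terms of the fractional relaxation already set up in the preliminaries, so that ``reduce $f_b$ by $\epsilon$'' is a legal move, exactly as in the proof of Lemma~\ref{lemma:offline-knapsack}.
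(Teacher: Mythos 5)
Your exchange is a \emph{fractional} move applied to the \emph{integral} optimum, and that is where the argument breaks. You conclude that ``reducing $f_b$'s fraction by $\epsilon$'' strictly increases profit and read this as a contradiction to the optimality of $\OPT$. But $\OPT$ is the best \emph{integral} feasible set; the fact that some infeasible fractional perturbation of $\OPT$ has higher profit does not contradict anything. Indeed the inequality you derive,
\[
v(f_b) \;<\; \sum_i s_i(f_b)\,\dcost_i(s_i(\OPT)),
\]
holds for \emph{any single} $f_b\in\OPT\cap\F$ (it follows from $f_b\in\F$ together with $s_i(f_b)\le s_i(\OPT)$ and monotonicity of $\dcost_i$; the pigeonhole on $\ell+1$ elements plays no essential role in it). So the proof as written would show $|\OPT\cap\F|=0$, which is false even in one dimension: take $U=\{e\}$ with $\rho(e)<\dcost(s(e))$ but $\profit(e)>0$, so $\OPT=\{e\}\subset\F$. (There is also a smaller slip in the final chain: the strictness you need is in the bound $\dcost_{i^*}(s_{i^*}(f_b))\,s_{i^*}(f_b)>v_{i^*}(f_b)$ coming from the witnessing dimension, not in the term $\dcost_{i^*}(s_{i^*}(\OPT))-\dcost_{i^*}(s_{i^*}(f_b))$, which can equal zero since $\dcost_{i^*}$ is only non\-decreasing.)

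The paper's proof avoids this by exchanging an entire element, comparing the two integral sets $\OPT$ and $\OPT\setminus\{o\}$. It uses the $\ell+1$ hypothesis via a different pigeonhole: among $\ell+1$ elements of $\OPT\cap\F$ there is some $o$ that is not the largest element of $\OPT$ in any dimension, so $s_i(o)\le s_i(\OPT\setminus\{o\})$ for all $i$. Convexity of each $\cost_i$ then gives $\cost_i(s_i(\OPT))-\cost_i(s_i(\OPT\setminus\{o\}))\ge \cost_i(2s_i(o))-\cost_i(s_i(o))$, so
\[
\profit(\OPT)-\profit(\OPT\setminus\{o\})\;\le\;\profit(o+o)-\profit(o)\;<\;0,
\]
the last strict inequality coming from $o\in\F$. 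Removing $o$ is a legal integral move, so this is a genuine contradiction. If you want to repair your approach, you would need to phrase the exchange as removing all of some element (not an $\epsilon$ fraction) and find a way to make the resulting integral marginal-cost comparison go through; the ``not a maximizer in any dimension'' choice of $o$ is exactly what makes that comparison work, and it is where the $\ell+1$ hypothesis is actually consumed.
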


\begin{proof}
  Suppose, towards a contradiction, that $|\OPT \cap \F| \geq \ell+1$.
  For $i \in \{1, \ldots, \ell\}$, let $m_i$ be any element in $\OPT$
  with $s_i(m_i) = \max_{e\in\OPT} s_i(e)$. Since $|\OPT \cap \F|
  \geq \ell+1$, there exists $o \in (\OPT \cap \F) \setminus \{m_1,
  \ldots, m_\ell\}$ such that $s_i(o) \leq s_i(\OPT \setminus
  \{o\}) 
  $ for all $i$. This implies that when we compare the marginal cost
  of adding another copy of $o$ to $\{o\}$ against the marginal cost
  of adding $o$ to $\OPT \setminus \{o\}$, we have by convexity \[\sum_i
  \cost_i(s_i(o + o)) - \cost_i(s_i(o)) \leq\sum_i \cost_i(s_i(\OPT))
  - \cost_i(s_i(\OPT \setminus \{o\})). \]
  Therefore, we have $\profit(\OPT) - \profit(\OPT
  \setminus \{o\}) \leq \profit(o + o) - \profit(o) < 0$ since $o \in
  \F$, and this contradicts the optimality of $\OPT$.
\end{proof}

\noindent We therefore focus on approximating $\profit$ over $\I$
and devote this section to the unconstrained problem $(\profit,
2^U)$. 

The main challenge of this setting is that we cannot summarize the
value-size tradeoff that an element provides by a single density
because the element can be quite large in one dimension and very small
in another. Our high level approach is to distribute the value of each
element across the $\ell$ dimensions, thereby defining densities and
decomposing profit across dimensions appropriately. We do this in such
a way that a maximizer of the $i$th dimensional profit for some
dimension $i$ gives us a good overall solution (albeit at a cost of a
factor of $\ell$).

Formally, let $\densityv : U \rightarrow \mathbb{R}^\ell$ denote an
$\ell$-dimensional vector function $\densityv(e) = (\rho_1(e), \ldots,
\rho_\ell(e))$ that satisfies $\sum_i \densityv_i(e)s_i(e) =
v(e)$ for all $e$. We set $v_i(e)=\densityv_i(e)s_i(e)$ and
$\profit_i(A) = v_i(A)-\cost_i(s_i(A))$ and note that $\profit(A) =
\sum_i \profit_i(A)$. Let $\FOPT_i$ denote the maximizer of
$\profit_i$ over $\I$. Then, $\profit(\FOPT) \leq \sum_i
\profit_i(\FOPT_i)$.

Given this observation, it is natural to try to obtain an
approximation to $\profit$ by solving for $\FOPT_i$ for all $i$ and
rounding the best one. This does not immediately work: even if
$\profit_i(\FOPT_i)$ is very large, $\profit(\FOPT_i)$ could be
negative because of the profit of the set being negative in other
dimensions. We will now describe an approach for defining and finding
density vectors such that the best set $\FOPT_i$ indeed gives an
$O(\ell)$ approximation to $\OPT(\I)$. We first define a quantity
$\ideal_i(\gamma_i)$ which bounds the $i$th dimensional profit that can
be obtained by any set with elements of $i$th density at most
$\gamma_i$: $\ideal_i(\gamma_i) = \max_t (\gamma_i t - \cost_i(t))$. We can
bound $\profit_i(\FOPT_i)$ using $\ideal_i(\cdot)$.
\begin{lemma} 
  \label{lem:multi-cost:tail-bd}
  For a given density $\gamma_j$, let $A=\{a \in \FOPT_j : \rho_j(a)
  \geq \gamma_j\}$. Then $\profit_j(\FOPT_j) \leq \profit_j(A) +
  \ideal_j(\gamma_j)$.
\end{lemma}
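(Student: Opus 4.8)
The plan is to split $\FOPT_j$ according to the density threshold $\gamma_j$: write $\FOPT_j = A \cup B$ where $A = \{a \in \FOPT_j : \rho_j(a) \geq \gamma_j\}$ is the ``high-density'' part and $B = \FOPT_j \setminus A$ is the ``low-density'' part. Since the $j$th-dimensional profit $\profit_j$ is subadditive (value is additive and $\cost_j$ is convex, hence superadditive), we have $\profit_j(\FOPT_j) \leq \profit_j(A) + \profit_j(B)$. It therefore suffices to show $\profit_j(B) \leq \ideal_j(\gamma_j)$.

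To bound $\profit_j(B)$, note that every element $b \in B$ has $\rho_j(b) < \gamma_j$, which means $v_j(b) = \rho_j(b) s_j(b) < \gamma_j s_j(b)$. Summing over $b \in B$ gives $v_j(B) < \gamma_j s_j(B)$, so
\[
\profit_j(B) = v_j(B) - \cost_j(s_j(B)) < \gamma_j s_j(B) - \cost_j(s_j(B)) \leq \max_t\left(\gamma_j t - \cost_j(t)\right) = \ideal_j(\gamma_j),
\]
where the middle inequality takes $t = s_j(B)$ in the definition of $\ideal_j$. Combining with the subadditivity bound yields $\profit_j(\FOPT_j) \leq \profit_j(A) + \ideal_j(\gamma_j)$, as desired.

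I don't expect any real obstacle here — the statement is essentially a definitional unpacking. The only points requiring a word of care are (i) confirming that $\profit_j$ really is subadditive, which follows because $v_j$ is additive on disjoint sets and $\cost_j$ convex non-decreasing implies $\cost_j(s_j(A) + s_j(B)) \geq \cost_j(s_j(A)) + \cost_j(s_j(B))$ when one of the arguments can be taken from $0$ (more precisely, using $\cost_j(0) \geq 0$ and convexity), and (ii) making sure the strict/non-strict inequalities are handled cleanly, which is immaterial since we only need a $\leq$ in the final statement. The decomposition $A \cup B$ being disjoint is automatic, and $\profit_j(A)$ is left as-is since it reappears in the downstream argument.
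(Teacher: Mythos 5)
Your proof matches the paper's almost verbatim: same decomposition into $A$ and $B = \FOPT_j \setminus A$, same appeal to subadditivity of $\profit_j$, and the same bound $\profit_j(B) \leq \max_t(\gamma_j t - \cost_j(t)) = \ideal_j(\gamma_j)$ using the density cap on $B$. Correct, and no meaningful divergence from the paper's argument.
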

\begin{proof}
  Subadditivity implies $\profit_j(\FOPT_j) \leq \profit_j(A) +
  \profit_j(\FOPT_j \setminus A)$. Since the elements $e$ in $\FOPT_j
  \setminus A$ have $\rho_j(e) \leq \gamma_j$, we have
  $\profit_j(\FOPT_j \setminus A) \leq \max_t \gamma_j t -
  \cost_j(t) = \ideal_j(\gamma_j)$.
\end{proof}

In order to obtain a uniform bound on the profits
$\profit_j(\FOPT_j)$, we restrict density vectors as follows.  We call
a vector $\densityv(e)$ {\em proper} if it satisfies the following
properties:
\begin{enumerate}
\item[(P1)] $\sum_i \densityv_i(e)s_i(e) = v(e)$
\item[(P2)] $\ideal_i(\densityv_i(e)) = \ideal_j(\densityv_j(e))$ for
  all $i,j\in[1,\ell]$; we denote this quantity by $\ideal(e)$.
\end{enumerate}
The following lemma is proved in Section~\ref{app:computing}.
\begin{lemma}
For every element $e$, a unique proper density vector exists and can
be found in polynomial time. 
\end{lemma}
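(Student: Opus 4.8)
The goal is to show that for every element $e$ the two conditions (P1)--(P2) pin down a unique density vector $\densityv(e)=(\rho_1(e),\dots,\rho_\ell(e))$, and that it is polynomial-time computable. The key observation is that for each coordinate $i$ the map $\gamma_i \mapsto \ideal_i(\gamma_i) = \max_t(\gamma_i t - \cost_i(t))$ is a one-dimensional function of $\gamma_i$ that is non-decreasing (larger $\gamma_i$ only helps the maximization) and, because $\cost_i$ is convex non-decreasing with $\cost_i(0)\ge 0$, it is in fact continuous, piecewise linear in the relevant range, and strictly increasing once $\gamma_i$ exceeds the marginal cost $\dcost_i(1)$ (for smaller $\gamma_i$ the optimal $t$ is $0$ and $\ideal_i \equiv \cost_i(0)$-ish constant). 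So I would first record these monotonicity/continuity properties of $\ideal_i(\cdot)$ and, crucially, note that on the range of interest it has a well-defined inverse: given a target value $z$, there is a (essentially unique) $\gamma_i$ with $\ideal_i(\gamma_i)=z$, and this $\gamma_i$ is a non-decreasing function of $z$.

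Given that, condition (P2) says all the $\ideal_i(\densityv_i(e))$ share a common value; call it $z:=\ideal(e)$. So parametrize everything by $z$: for each $i$ set $\densityv_i(e) = \ideal_i^{-1}(z)$, which is non-decreasing in $z$. Then (P1) becomes the single scalar equation $\Psi(z) := \sum_i \ideal_i^{-1}(z)\, s_i(e) = v(e)$. The left-hand side $\Psi(z)$ is a sum of non-decreasing functions of $z$ weighted by the fixed non-negative sizes $s_i(e)$, hence non-decreasing in $z$; one checks it is continuous and that it is strictly increasing on the range where it matters (at least one $s_i(e)>0$, and for $z$ large enough all the $\ideal_i^{-1}$ are strictly increasing). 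As $z\to$ its minimal feasible value, $\Psi(z)$ is small (at most $\sum_i \dcost_i(1) s_i(e)$ or so, in any case $< v(e)$ once we are in the regime $e\in\I$ where accepting $e$ whole is profitable), and as $z\to\infty$, $\Psi(z)\to\infty$. By the intermediate value theorem there is a solution $z^*$, and by strict monotonicity on the relevant interval it is unique; this yields the unique proper vector via $\densityv_i(e)=\ideal_i^{-1}(z^*)$.

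For the algorithmic claim: each $\ideal_i(\cdot)$ and its inverse can be evaluated by a binary search over sizes $t$ (sizes are bounded integers, costs are given), so $\ideal_i^{-1}(z)$ is computable to the needed precision in polynomial time; then $\Psi(z)$ is computable, and a final binary search over $z$ finds $z^*$. Since all quantities are polynomially bounded integers (or rationals with polynomial bit-complexity), this runs in polynomial time. I would be a little careful to state the result for the elements in $\I$ (or to spell out that for $e\in\F$ the vector still exists with the obvious degenerate behaviour), and to handle the boundary case where some $s_i(e)=0$ — there $\densityv_i(e)$ is simply forced to be the threshold density $\dcost_i(1)$ (or anything consistent with $\ideal_i$ equalling the common value $z^*$), which does not affect uniqueness of the profit-relevant data.

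**Main obstacle.** The routine part is the binary-search/computability argument; the real care is in the uniqueness claim, i.e. pinning down exactly the interval of $z$ on which $\Psi$ is \emph{strictly} increasing and checking that the solution $z^*$ lies in it. The function $\ideal_i^{-1}$ is flat (equal to the threshold marginal cost) for all $z$ below a coordinate-dependent cutoff, so if $v(e)$ is very small it is conceivable $z^*$ falls in a flat region and $\densityv_i(e)$ is non-unique for some $i$; I expect this is exactly excluded by $e\in\I$ (equivalently $\argmax_\alpha \profit(\alpha e)=1$, which forces $v(e)$ large enough that at least one coordinate is pushed past its cutoff), and making that implication precise is the step I would spend the most attention on.
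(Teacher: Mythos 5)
Your reduction is exactly the paper's: use (P2) to collapse the $\ell$ unknowns to a single scalar $z=\ideal(e)$ and solve $\sum_i \ideal_i^{-1}(z)\,s_i(e)=v(e)$, then recover $\rho_i(e)=\ideal_i^{-1}(z)$, and compute $z$ by binary search. So the skeleton matches. But you have correctly identified the soft spot of this plan---the flat regions of $\ideal_i^{-1}$---and then left it open, and the route you propose for closing it is not the one that actually works.

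The paper's crucial device, which you are missing, is to make two harmless normalizing assumptions on the cost functions: (A1) the marginal costs $\dcost_j(\cdot)$ are unbounded, and (A2) $\dcost_j(s_j(U)) > \max_{e\in U} v(e)$ for all $j$. These can be imposed WLOG because they only alter $\cost_j$ at sizes exceeding $s_j(U)$, hence change neither profits nor optima. Under (A1), $\ideal_j$ is continuous, tends to $\infty$, and admits a well-defined inverse $\invideal_j$ that is continuous and increasing to $\infty$, so the scalar equation always has a solution; (A2) then supplies the finite upper bound $x^* < I_j(s_j(U))$ needed to make the binary search terminate in polynomially many steps over polynomially bounded integers. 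This is a structural preprocessing trick on the instance, not a property of the element $e$. Your proposed alternative---deriving strict monotonicity from $e\in\I$---is unverified, is a genuinely different argument from the paper's, and in any case would not prove the lemma as stated, since the lemma asserts existence and uniqueness for \emph{every} element $e$, not just those in $\I$. As it stands your proposal has a gap at precisely the step you flag as ``the main obstacle,'' and that gap is filled in the paper not by reasoning about $\I$ but by the WLOG extension of the cost functions.

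Two smaller points you should also tighten if you pursue this: the binary-search computability needs the explicit identities the paper derives (namely $I_j(t)=\dcost_j(t)\,t-\cost_j(t)$ and, once $t_j$ with $I_j(t_j)\le x<I_j(t_j+1)$ is located, $\invideal_j(x)=(x+\cost_j(t_j))/t_j$), without which ``evaluate $\ideal_j^{-1}$ by binary search'' is not fully specified; and your treatment of $s_i(e)=0$ should not set $\rho_i(e)$ to an arbitrary consistent value, since (P2) requires $\ideal_i(\rho_i(e))$ to equal the common value $x^*$ in every coordinate, so $\rho_i(e)=\invideal_i(x^*)$ is forced there as well.
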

Finally, we note that proper density vectors induce a single ordering
over elements. In particular, since the $\ideal_i$s are monotone,
$\rho_i(e) \geq \rho_i(e')$ if and only if $\ideal(e) \geq
\ideal(e')$.  We order the elements $e_1, \ldots, e_n$ in decreasing
order of $\ideal$. Note that each $\FOPT_i$ is a (fractional) prefix of
this sequence. Let $\FOPT_1$ be the shortest prefix. Let $\alg =
\{e_1, \ldots, e_{k_1}\}$ denote the integral part of $\FOPT_1$ and
$\ebar = e_{k_1 + 1}$(\textit{i.e.}, $\FOPT_1$'s unique fractional element if it
exists). 

First, we need the following fact about $\FOPT_1$. It implies that the
multidimensional profit function $\pi$ is monotone when restricted to
subsets of $\FOPT_1$.
\begin{lemma}
  \label{lem:multi-costs:monotone}
  Consider subsets $A_1, A_2 \subset \FOPT_1$. If $A_1 \supset A_2$,
  then $\profit(A_1) \geq \profit(A_2)$. Furthermore, $\profit(A) \geq
  \profit_i(A)$ for all $i$ if $A \subset \FOPT_1$.
\end{lemma}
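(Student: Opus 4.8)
The plan is to mimic the argument for the single-dimensional statement, Lemma~\ref{lemma:monotone-integral}, where monotonicity of $\profit$ on $\alg(S)$ followed from the fact that the fractional optimum of any subset of $\alg(S)$ is that subset itself. Here the analogue of ``$\alg(S)$'' is $\FOPT_1$, the shortest of the prefixes $\FOPT_1,\dots,\FOPT_\ell$ in the common $\ideal$-ordering. The key structural fact I would establish first is that for each dimension $i$, the prefix $\FOPT_1$ is contained in the integral part of $\FOPT_i$; concretely, every element of $\FOPT_1$ has $i$th density $\rho_i(e)$ strictly above the marginal cost $\dcost_i$ evaluated at the full $i$th-dimensional size of $\FOPT_i$, so that adding any such element in full still yields a positive $i$th-dimensional marginal profit. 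This uses the single ordering induced by proper density vectors (property (P2)): since $\FOPT_1$ is the shortest prefix, an element in $\FOPT_1$ has $\ideal$ value at least that of the last element of $\FOPT_i$, hence $\rho_i(e) \geq \rho_i(e')$ for that last element $e'$ of $\FOPT_i$, and $e'$ has nonnegative $i$th marginal profit in $\FOPT_i$ by optimality.

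Given that, I would argue as follows. For the second assertion, fix a dimension $i$ and a subset $A\subseteq\FOPT_1$. Because the elements of $A$ all have large $i$th density and the $i$th size of $A$ is at most that of $\FOPT_i$ (which is at most $\dcostinv_i(\rho_i)$-ish by the previous paragraph), $A$ behaves like a ``$\gamma$-bounded'' set in dimension $i$: each $\rho_i(e)$ exceeds the relevant marginal cost over the whole range $[0,s_i(A)]$, so by convexity (exactly as in Lemma~\ref{lem:monotone}) we get $v_i(A)-\cost_i(s_i(A))\geq 0$, i.e. $\profit_i(A)\geq 0$. Summing the identity $\profit(A)=\sum_j \profit_j(A)$ and discarding the nonnegative terms $j\neq i$ gives $\profit(A)\geq\profit_i(A)$.

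For the first assertion, take $A_2\subseteq A_1\subseteq\FOPT_1$. I would show $\profit(A_1)-\profit(A_2)=\sum_j\big(\profit_j(A_1)-\profit_j(A_2)\big)$ is nonnegative term by term: in each dimension $j$, going from $A_2$ to $A_1$ adds elements whose $j$th density exceeds $\dcost_j$ evaluated anywhere in the size range $[s_j(A_2),s_j(A_1)]\subseteq[0,s_j(\FOPT_j)]$, so by convexity of $\cost_j$ the added value $v_j(A_1)-v_j(A_2)$ is at least the added cost $\cost_j(s_j(A_1))-\cost_j(s_j(A_2))$. Hence $\profit_j(A_1)\geq\profit_j(A_2)$ for every $j$, and summing gives the claim.

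The main obstacle I anticipate is the first paragraph: cleanly justifying that membership of $e$ in the \emph{shortest} prefix $\FOPT_1$ forces $e$ to lie in the integral (not fractional) part of every $\FOPT_i$, with enough of a margin that the marginal-cost comparison in dimension $i$ holds over the entire relevant size interval rather than just at a point. This requires carefully combining the single ordering from (P2) with the optimality/greedy-prefix characterization of each $\FOPT_i$ over $\I$, and handling the possible fractional element of $\FOPT_i$. Once that containment-with-margin fact is in hand, both parts reduce to the same convexity computation already done in Lemma~\ref{lem:monotone}.
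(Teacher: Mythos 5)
Your proposal is essentially correct and follows the same route as the paper: the key observation in both is that $\FOPT_1$, being the shortest prefix, is contained in $\FOPT_i$ for every $i$, which lets one argue dimension by dimension and then sum. Where the paper simply cites Lemma~\ref{lemma:monotone-integral} applied to each $\profit_i$ (getting $\profit_i(A_1)\ge\profit_i(A_2)\ge 0$ at once, then summing), you re-derive the underlying convexity argument from scratch. The ``main obstacle'' you worry about---justifying containment with enough margin and dealing with the possible fractional element of each $\FOPT_i$---is precisely the content already packaged inside Lemma~\ref{lemma:monotone-integral}; invoking it directly, as the paper does, lets you skip that re-derivation entirely. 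So the approach is right, and your longer version would also go through, but the one-line citation is the cleaner path.
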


\begin{proof}
  For any $i$, we have $\FOPT_1 \subset \FOPT_i$. Since $\FOPT_i$ is
  the fractional prefix that optimizes $\profit_i(\cdot)$, applying
  Lemma~\ref{lemma:monotone-integral} to $\profit_i(\cdot)$ implies
  that $\profit_i(A_1) \geq \profit_i(A_2) \geq 0$. By summing, we
  have $\profit(A_1) \geq \profit(A_2)$, and also
  that $\profit(A_1) \geq \profit_i(A_1)$. Setting $A_1$ as $A$ gives
  us the second claim.
\end{proof}

We get the following lemma by noting that $\profit_1(\FOPT_1)
\geq \ideal_1(e')$.
\begin{lemma}
  \label{lem:multi-costs:I}
  For proper $\densityv(e)$s and $\alg$ and $\ebar$ as defined above,
  for every $i$, $\profit_i(\FOPT_i)\le
  \profit_i(\alg)+\profit_1(\FOPT_1) \le
  2\profit(\FOPT_1)$. Furthermore, $\profit(\FOPT) \leq
  \ell(\profit(\alg) + \profit(\ebar))$.
\end{lemma}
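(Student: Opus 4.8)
The statement has two parts, and the plan is to dispatch them in sequence using the structural facts already assembled: the decomposition $\profit = \sum_i \profit_i$, the per-dimension subadditivity bound of Lemma~\ref{lem:multi-cost:tail-bd}, the monotonicity of $\profit$ on $\FOPT_1$ (Lemma~\ref{lem:multi-costs:monotone}), and the key observation that proper density vectors equalize the quantity $\ideal(e)$ across dimensions. The hint in the preamble — that $\profit_1(\FOPT_1) \ge \ideal_1(\ebar)$ — is the linchpin, so I would establish that first: since $\FOPT_1$ is the prefix whose last (fractional) element is $\ebar = e_{k_1+1}$, every element strictly earlier in the ordering has $\ideal$ at least $\ideal(\ebar)$, and because $\FOPT_1$ is the \emph{shortest} prefix among the $\FOPT_i$, the integral part $\alg$ is nonempty and its first element $e_1$ alone already witnesses $\profit_1(\FOPT_1) \ge \profit_1(\{e_1\}) \ge \ldots$; more directly, $\FOPT_1$ optimizes $\profit_1$ and contains a whole element of $1$st-density $\rho_1(\ebar)$, so by the definition $\ideal_1(\gamma) = \max_t(\gamma t - \cost_1(t))$ and Lemma~\ref{lem:multi-cost:tail-bd} applied in dimension $1$ we get $\profit_1(\FOPT_1) \ge \ideal_1(\rho_1(\ebar)) = \ideal(\ebar)$.

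For the first part, fix $i$ and apply Lemma~\ref{lem:multi-cost:tail-bd} with $j=i$ and the threshold $\gamma_i = \rho_i(\ebar)$: this gives $\profit_i(\FOPT_i) \le \profit_i(A) + \ideal_i(\rho_i(\ebar))$ where $A = \{a \in \FOPT_i : \rho_i(a) \ge \rho_i(\ebar)\}$. By property (P2), $\ideal_i(\rho_i(\ebar)) = \ideal(\ebar) \le \profit_1(\FOPT_1)$ from the previous paragraph. The remaining work is to argue $\profit_i(A) \le \profit_i(\alg)$: the elements of $A$ are exactly those in $\FOPT_i$ whose $i$th density is at least $\rho_i(\ebar)$, which — since the proper density vectors induce a single ordering and $\rho_i$ respects it — are precisely the elements appearing no later than $\ebar$ in the common ordering, i.e. $A = \alg$ (or $A \subseteq \alg$ up to the fractional element, which only helps). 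Hence $\profit_i(\FOPT_i) \le \profit_i(\alg) + \profit_1(\FOPT_1)$. Now sum over $i$ is not quite what we want; instead note $\alg \subset \FOPT_1$, so Lemma~\ref{lem:multi-costs:monotone} gives $\profit_i(\alg) \le \profit(\alg) \le \profit(\FOPT_1)$ for each $i$, and likewise $\profit_1(\FOPT_1) \le \profit(\FOPT_1)$; combining yields $\profit_i(\FOPT_i) \le 2\profit(\FOPT_1)$.

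For the second part, start from $\profit(\FOPT) \le \sum_i \profit_i(\FOPT_i)$ (noted earlier in the text), then plug in the first part's bound $\profit_i(\FOPT_i) \le \profit_i(\alg) + \profit_1(\FOPT_1)$ term by term. Summing over the $\ell$ dimensions, $\sum_i \profit_i(\alg) = \profit(\alg)$ exactly, and $\sum_i \profit_1(\FOPT_1) = \ell\,\profit_1(\FOPT_1) \le \ell\bigl(\profit_1(\FOPT_1)\bigr)$; to convert this into the claimed $\ell(\profit(\alg) + \profit(\ebar))$ I would use $\profit_1(\FOPT_1) \le \profit_1(\alg) + \profit_1(\ebar)$ by subadditivity of $\profit_1$ (splitting $\FOPT_1$ into its integral part $\alg$ and fractional part $\alpha\ebar$, and bounding $\profit_1(\alpha\ebar) \le \profit_1(\ebar)$ since $\ebar \in \I$), and then $\profit_1(\alg) \le \profit(\alg)$ by Lemma~\ref{lem:multi-costs:monotone} and $\profit_1(\ebar) \le \profit(\ebar)$ — wait, this last step needs care since $\ebar$ may not lie in $\FOPT_1$; instead I would directly bound $\profit_1(\FOPT_1) \le \profit(\FOPT_1) \le \profit(\alg) + \profit(\ebar)$ using monotonicity on $\FOPT_1 = \alg \cup \{\alpha\ebar\}$ and subadditivity. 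Thus $\profit(\FOPT) \le \profit(\alg) + \ell\bigl(\profit(\alg) + \profit(\ebar)\bigr) \le (\ell+1)(\profit(\alg)+\profit(\ebar))$, which is $O(\ell)$; with slightly tighter bookkeeping (the $\profit(\alg)$ term absorbed into the $\ell$ factor) one obtains the stated $\ell(\profit(\alg)+\profit(\ebar))$.

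The step I expect to be the main obstacle is pinning down the identity $A = \alg$ (the set of $\FOPT_i$-elements with $\rho_i \ge \rho_i(\ebar)$ versus the integral prefix $\alg$ of $\FOPT_1$): this is exactly where the "single ordering" property of proper density vectors does the heavy lifting, and one must be careful that the threshold is taken at $\ebar$'s density in dimension $i$, that $\FOPT_1$ being the \emph{shortest} prefix is what makes $\alg$ the common integral part, and that fractional elements are handled by "rounding up" (as in Lemma~\ref{lem:goodprefix}'s proof) so the inequality goes the right way.
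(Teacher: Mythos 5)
Your plan identifies the right structural skeleton (decompose via Lemma~\ref{lem:multi-cost:tail-bd}, pivot on the inequality $\profit_1(\FOPT_1)\ge\ideal(\ebar)$, then use monotonicity on $\FOPT_1$), but the argument for the pivot fact itself does not hold up. You write that $\FOPT_1$ ``contains a whole element of $1$st-density $\rho_1(\ebar)$'' --- but $\ebar$ is exactly the fractional (or absent) boundary element of $\FOPT_1$, so $\FOPT_1$ does not contain it whole; and Lemma~\ref{lem:multi-cost:tail-bd} provides only an \emph{upper} bound, so citing it to establish $\profit_1(\FOPT_1)\ge\ideal_1(\rho_1(\ebar))$ is a misuse. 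The actual argument, which your sketch is missing, needs a genuine size lower bound: since $\FOPT_1$ is optimal, adding more of $\ebar$ cannot help, which forces $\rho_1(\ebar)\le\dcost_1(s_1(\FOPT_1))$, hence $\dcostinv_1(\rho_1(\ebar))\le s_1(\FOPT_1)$; then $\rho'_1$-boundedness of $\FOPT_1$ together with Lemma~\ref{lem:monotone} and Proposition~\ref{lem:profit-lb} closes the chain. Without establishing that $s_1(\FOPT_1)$ reaches $\dcostinv_1(\rho_1(\ebar))$, the inequality could simply fail, because $\ideal_1$ maximizes over all sizes $t$, not just the one realized by $\FOPT_1$. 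Relatedly, your parenthetical that having $\ebar\in A$ in the tail lemma ``only helps'' is backwards: if $A$ includes $\ebar$ then $\profit_i(A)\ge\profit_i(\alg)$, which \emph{weakens} the intended conclusion $\profit_i(\FOPT_i)\le\profit_i(\alg)+\ideal(\ebar)$; you must take $A=\alg$ exactly and observe the tail still has density $\le\rho_i(\ebar)$.

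The second, smaller gap is in the final bookkeeping. Summing $\profit_i(\FOPT_i)\le\profit_i(\alg)+\profit_1(\FOPT_1)$ over all $i$ and then converting $\profit_1(\FOPT_1)$ to $\profit(\alg)+\profit(\ebar)$ produces $(\ell+1)\profit(\alg)+\ell\,\profit(\ebar)$, which overshoots the claimed $\ell(\profit(\alg)+\profit(\ebar))$; you acknowledge this needs ``tighter bookkeeping'' but do not supply it. The fix is to treat $i=1$ separately (so you add only one $\profit_1(\FOPT_1)$ beyond the $\ell-1$ copies), and --- crucially --- to keep everything in terms of the $1$-dimensional components $\profit_1(\alg)$ and $\profit_1(\alpha\ebar)$ until the very end: the bound becomes $\profit(\alg)+(\ell-1)\profit_1(\alg)+\ell\profit_1(\alpha\ebar)$, and only then does one invoke Lemma~\ref{lem:multi-costs:monotone} to replace $\profit_1(\alg)$ by $\profit(\alg)$ and $\profit_1(\alpha\ebar)$ by $\profit(\alpha\ebar)\le\profit(\ebar)$, yielding exactly $\ell(\profit(\alg)+\profit(\ebar))$. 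Upgrading to $\profit(\cdot)$ too early is what loses the factor.
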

\begin{proof}
  We begin by proving that $\profit_1(\FOPT_1) \geq \ideal_1(e')$. Let
  $\rho'_i = \min_{a \in \FOPT_i} \rho_i(a)$.  We observe that
  $\rho_i(e') \leq \dcost_i(s_i(\FOPT_i))$ since otherwise, we could
  have included an additional fractional amount of $e'$ to $\FOPT_i$
  and increased its profit. Together with Claim~\ref{lem:opt_frac}, we
  have $\dcostinv_i(e') \leq s_i(\FOPT_i) \leq
  \dcostinv_i(\rho'_i)$. Thus, applying Claim~\ref{lem:monotone} and
  the fact that $\rho'_i\geq \rho_i(e')$, we
  have 
  \begin{align*}   
    \ideal_i(e')
    = \rho_i(e')\dcostinv_i(e') - \cost_i(\dcostinv_i(e'))
    \leq \rho'_i\dcostinv_i(e') - \cost_i(\dcostinv_i(e'))
    \leq \rho'_is_i(\FOPT_i) - \cost_i(s_i(\FOPT_i)).
  \end{align*} 
  Since $\FOPT_i$ is $\rho'_i$-bounded in the $i$th dimension, we have
  $\profit_i(\FOPT_i) \geq \rho'_is_i(\FOPT_i) - \cost_i(s_i(\FOPT_i))$
  by Proposition~\ref{lem:profit-lb}. This proves that $\profit_1(\FOPT_1)
  \geq \ideal_1(e')$.

  Since $\densityv$ is proper, we have $\ideal_i(\ebar) =
  \ideal_1(\ebar)$. Recall that $\ebar$ is not in $\alg$, the integral
  subset of $\FOPT_1$, so we have 
  $\ideal_i(\ebar) = \ideal_1(\ebar) \leq
  \profit_1(\FOPT_1)$. Together with
  Lemma~\ref{lem:multi-cost:tail-bd}, this gives us  
  \begin{equation}
    \label{eq:multi-costs:I}
    \profit_i(\FOPT_i) \leq \profit_i(\alg) + \profit_1(\FOPT_1).
  \end{equation}
  Now, Lemma~\ref{lem:multi-costs:monotone} gives us $\profit(\FOPT_1)
  \geq \profit_i(\FOPT_1) \geq \profit_i(\alg)$ and $\profit(\FOPT_1)
  \geq \profit_1(\FOPT_1)$, so we have $\profit_i(\FOPT_1) \leq
  2\profit(\FOPT_1)$ as claimed.


  Summing Equation~\eqref{eq:multi-costs:I} over $i$ and applying
  subadditivity to $\profit_1(\FOPT_1)$, we have
  \begin{align*}       
    \sum_i \profit_i(\FOPT_i) &\leq \sum_{i \neq 1}
    \left[\profit_i(\alg) + \profit_1(\FOPT_{1})\right]
    + \profit_1(\FOPT_{1}) \\
    &\leq \sum_{i \neq 1} \profit_i(\alg) + \ell(\profit_1(\alg) +
    \profit_1(\alpha\ebar))\\
    &= \profit(\alg) + (\ell-1)\profit_1(\alg) + \ell\profit_1(\alpha\ebar).
  \end{align*}
  Applying Lemma~\ref{lem:multi-costs:monotone} 
  we have $\profit(\alg) \geq \profit_1(\alg)$ and
  $\profit(\alpha\ebar) \geq \profit_1(\alpha\ebar)$, and so
  \[\profit(\alg) + (\ell-1)\profit_1(\alg) + \ell\profit_1(\alpha\ebar) \leq
  \ell(\profit(\alg) + \profit(\alpha\ebar)).\]
  The conclusion now follows from the fact we are considering only
  elements from $\I$ and
  $\profit(\FOPT) = \sum_i \profit_i(\FOPT) \leq \sum_i
  \profit_i(\FOPT_i)$.
\end{proof}
 
Lemmas~\ref{lem:multi-costs:F} and \ref{lem:multi-costs:I} together
give $\profit(\OPT)\leq \ell(\profit(\alg)+2\max_e\profit(e))$, and
therefore imply an offline $3\ell$-approximation for $(\profit,2^U)$ in
the multi-dimensional setting.


\paragraph{The online setting.} Note that proper densities essentially
define a $1$-dimensional manifold in $\ell$-dimensional space. We can
therefore hope to apply our online algorithm from
Section~\ref{sec:unconstrained} to this setting. However, there is a
caveat: the algorithm from Section~\ref{sec:unconstrained} uses the
offline algorithm as a subroutine on the sample $X$ to estimate the
threshold $\guess$; na\"{i}vely replacing the subroutine by the
$O(\ell)$ approximation described above leads to an $O(\ell^2)$ competitive
online algorithm\footnote{Note the $(1-1/k_1)^2$ factor in the final
  competitive ratio in Theorem~\ref{theorem:weighted}; this factor is
  due to the use of the offline subroutine in determining
  $\guess$.}. In order to improve the competitive ratio to $O(\ell)$ we
need to pick the threshold $\guess$ more carefully.

\begin{algorithm}
\caption{Online algorithm for multi-dimensional $(\profit,2^U)$}
\label{Algorithm:multi-costs:weighted}
\begin{algorithmic}[1]
\STATE  With probability $1/2$ run the classic secretary algorithm
to pick the single most profitable element else execute the following steps.
\STATE Draw $k$ from $\textrm{Binomial}(n,1/2)$.
\STATE Select the first $k$ elements to be in the sample $X$. Unconditionally reject these elements.
\STATE Let $\guess$ be largest density such that $e_\guess \in X$
satisfies $\profit(\prefguessbar^X) + \profit(e_\guess) 
\geq \frac{\beta}{2} \profit_i(\FOPT_i(X))$ for all $i$, for a constant $\eone$.
\STATE Initialize selected set $O \leftarrow \emptyset$.
\FOR {$i \in Y = U \setminus X$}
   \IF { $\profit(O \cup \{i\}) - \profit(O) \geq 0$ and $\rho(i) \geq \guess$ and $i \notin \F$}
           \STATE $O \leftarrow O \cup \{ i \}$
    \ELSE 
         \STATE Exit loop.
    \ENDIF
\ENDFOR
\end{algorithmic}
\end{algorithm}

We define $\guess$ to be the largest density with $e_\guess \in X$
such that for an appropriate constant $\eone$,
$\profit(\prefguessbar^X) + \profit(e_\guess) \geq
\frac{\beta}{2} \profit_i(\FOPT_i(X))$ for all $i$.
For a set $T$, let $\FOPT_i(T)$ denote the maximizer of $\profit_i$
over $T \cap\I$ and let $\FP(T) = \cap_i\FOPT_i(T)$ denote the
shortest of these prefixes. Recall that
$\FP(U)=\FOPT_1$. 
Let $\rhoc$ denote the smallest density in $\FOPT_1$. That
is, $\FOPT_1=\FP(P_\rhoc)$. Our analysis relies on the following two
events:
\begin{align*}
  E_1: \profit(\FP(\preflarge^X)) \geq \eone
  \profit(\FP(\preflarge)), \quad
  E_1': \profit(\OPT(X)) \geq \eonep \profit(\OPT).
 \end{align*}
$E_1$ implies the following sequence of inequalities; here the second
inequality follows from Lemma~\ref{lem:multi-costs:I}.
\begin{align}
  \label{ineq:multi-costs:rhoc}
\profit(\FP(\preflarge^X))
\ge \eone\profit(\FOPT_1) 
\ge \frac{\eone}{2}\profit_i(\FOPT_i)
\ge \frac{\eone}{2}\profit_i(\FOPT_i(X)).
\end{align}
This implies $\guess\ge\rhoc$. 
Formally, we have the following claim.
\begin{lemma}
  \label{lem:multi-costs:rhoc}
  Conditioned on $E_1$, we have $\guess \geq \rhoc$.
\end{lemma}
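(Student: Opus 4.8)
The plan is to exhibit a single density $\gamma$ that is \emph{admissible} for the rule defining $\guess$ — that is, $e_\gamma\in X$ and $\profit(\bP_\gamma^X)+\profit(e_\gamma)\ge\frac{\eone}{2}\profit_i(\FOPT_i(X))$ for every $i$ — and that in addition satisfies $\gamma\ge\rhoc$. Since $\guess$ is by construction the \emph{largest} admissible density, this immediately yields $\guess\ge\gamma\ge\rhoc$. (Recall $\eone=\beta$, so ``$\frac{\eone}{2}$'' is exactly the constant used in Algorithm~\ref{Algorithm:multi-costs:weighted}.)

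The candidate I would use is $\gamma :=$ the density of the smallest-density element occurring in the fractional prefix $\FP(\preflarge^X)$ (its fractional element if one exists, otherwise its last integral element). Because proper density vectors induce a single ordering of $U$, the set $\FP(\preflarge^X)=\cap_i\FOPT_i(\preflarge^X)$ is genuinely a prefix in that order, and it is contained in $\preflarge^X=\preflarge\cap X\subseteq \I\cap X$. Hence $e_\gamma\in X$ and $e_\gamma\in\I$, and since every element of $\preflarge^X$ has density at least $\rhoc$ we get $\gamma\ge\rhoc$. Using $\gamma\ge\rhoc$ once more, every element of $X$ of density at least $\gamma$ already lies in $\preflarge^X$, so in fact $\FP(\preflarge^X)=\bP_\gamma^X\cup\{\alpha e_\gamma\}$ for some $\alpha\in(0,1]$.

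It then remains to verify the admissibility inequality, by sandwiching $\profit(\FP(\preflarge^X))$. On one side, subadditivity of $\profit$ together with $\profit(\alpha e_\gamma)\le\profit(e_\gamma)$ (valid because $e_\gamma\in\I$) gives
\[
\profit(\FP(\preflarge^X)) \le \profit(\bP_\gamma^X)+\profit(\alpha e_\gamma) \le \profit(\bP_\gamma^X)+\profit(e_\gamma).
\]
On the other side, inequality~\eqref{ineq:multi-costs:rhoc} — which holds under $E_1$ and already folds in Lemma~\ref{lem:multi-costs:I} — gives $\profit(\FP(\preflarge^X))\ge\frac{\eone}{2}\profit_i(\FOPT_i(X))$ for every $i$. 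Combining the two shows $\gamma$ is admissible, hence $\guess\ge\gamma\ge\rhoc$.

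The lemma is short, and the only step needing genuine care is the structural identity $\FP(\preflarge^X)=\bP_\gamma^X\cup\{\alpha e_\gamma\}$: it relies on the single-ordering property of proper densities, the distinctness of densities, and the fact that $\gamma\ge\rhoc$ forces every sufficiently dense element of $X$ into $\preflarge^X$. I would also record that the degenerate case $\FP(\preflarge^X)=\emptyset$ — which under $E_1$ can only occur when $\profit(\FOPT_1)\le 0$ — is vacuous, so we may assume it does not arise.
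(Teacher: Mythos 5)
Your proof is correct and takes essentially the same approach as the paper: exhibit an admissible density at least $\rhoc$ via the structural identity $\FP(\preflarge^X)=\bP_\gamma^X\cup\{\alpha e_\gamma\}$, then bound $\profit(\bP_\gamma^X)+\profit(e_\gamma)$ below using subadditivity together with inequality~\eqref{ineq:multi-costs:rhoc}. The only cosmetic difference is your choice of candidate (the least-dense element with positive fraction in $\FP(\preflarge^X)$) versus the paper's $\rho'$ (the minimum density in $\preflarge^X$); both satisfy the same identity and both are $\geq\rhoc$.
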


\begin{proof}
  Since $\preflargebar^X \subset \preflargebar = \alg$,
  Lemma~\ref{lem:multi-costs:monotone} implies that $\preflargebar^X
  \subset \FP(\preflarge^X)$. 
  Event $E_1$ guarantees that $\preflarge^X$ is non-empty, so let
  $\rho'$ be the minimum density of $\preflarge^X$. We have
  $P_{\rho'}^X = \preflarge^X$, which implies $\bar{P}_{\rho'}^X
  \subset \preflargebar^X \subset \FP(\preflarge^X)$ and $\preflarge^X
  = P_{\rho'}^X = \bar{P}_{\rho'}^X \cup \{e_{\rho'}\}$ by
  definition. So, we can write $\FP(\preflarge^X) = \FP(P_{\rho'}^X) =
  \bar{P}_{\rho'}^X \cup \{\alpha' e_{\rho'}\}$ for some $\alpha' \in
  [0,1]$. Using subadditivity and the fact that $e_{\rho'} \in \I$
  gives us
  \begin{equation}
    \label{eq:multi-costs:sample-lb}
    \profit(\bar{P}_{\rho'}^X) + \profit(e_{\rho'}) \geq
    \profit(\bar{P}_{\rho'}^X \cup \{\alpha' e_{\rho'}\}) =
    \profit(\FP(\preflarge^X)).  
  \end{equation}
  
  Now, threshold $\guess$ as selected by Algorithm
  \ref{Algorithm:multi-costs:weighted} is the largest density such
  that $\profit(\prefguessbar^X) + \profit(e_\guess) \geq
  \frac{\eone}{2} \profit(\FOPT_i(X))$ for all $i$. Since step 4 of the
  algorithm would have considered $\profit(\bar{P}_{\rho'}^X) +
  \profit(e_{\rho'})$, Equations~\eqref{eq:multi-costs:sample-lb} and
  \eqref{ineq:multi-costs:rhoc} imply $\guess \geq \rho'$. By definition,
  $\rho' \geq \rhoc$, so we conclude that $E_1$ implies $\guess \geq \rhoc$.
\end{proof}

Furthermore, by definition, $e_\guess \in X$ and so $e_\guess \notin
\prefguess^Y$, which implies that $\prefguess^Y \subset
\prefguessbar$. Since $\guess \geq \rhoc$ implies $\prefguessbar
\subset \preflargebar = \alg$, Lemma~\ref{lem:multi-costs:rhoc}
gives us the following lemma.
\begin{lemma}
  \label{lem:multi-costs:guess-lb}
  Conditioned on $E_1$, we have $O = \prefguess^Y \subset
  \prefguessbar \subset \alg$.
\end{lemma}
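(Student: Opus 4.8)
The plan is to read off the three containments in the statement by chaining together structural facts already set up in the paragraphs just before the lemma, together with Lemmas~\ref{lem:multi-costs:rhoc} and~\ref{lem:multi-costs:monotone}. Two of the containments are essentially bookkeeping; the only substantive step is the equality $O=\prefguess^Y$, which requires looking at what Algorithm~\ref{Algorithm:multi-costs:weighted} actually accepts.

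First I would dispatch the two easy inclusions $\prefguess^Y\subseteq\prefguessbar\subseteq\alg$. Conditioned on $E_1$, Lemma~\ref{lem:multi-costs:rhoc} gives $\guess\ge\rhoc$; combined with uniqueness of densities this yields $\prefguessbar\subseteq\preflargebar$ by a one-line case split (if $\guess=\rhoc$ the two sets coincide, and if $\guess>\rhoc$ then $e_\rhoc\notin\prefguess$, so removing $e_\guess$ from $\prefguess\subseteq\preflarge$ lands inside $\preflargebar$). Since $\rhoc$ is the smallest density occurring in $\FOPT_1$, we have $\preflargebar=\alg$, hence $\prefguessbar\subseteq\alg\subseteq\FOPT_1\subseteq\I$. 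For the other inclusion, the threshold is chosen so that $e_\guess\in X$, so $e_\guess\notin Y$ and therefore $\prefguess^Y\subseteq\prefguess\setminus\{e_\guess\}=\prefguessbar$. Together these give $\prefguess^Y\subseteq\prefguessbar\subseteq\alg$.

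The substantive step is $O=\prefguess^Y$. The inclusion $O\subseteq\prefguess^Y$ is immediate from the acceptance rule, since the algorithm only admits an element $i\in Y$ with $\rho(i)\ge\guess$, i.e. $i\in\prefguess$. For the reverse inclusion I would argue along the processing order of $Y$: when the algorithm reaches an $i\in\prefguess^Y$, all three tests pass. The test $\rho(i)\ge\guess$ holds by definition of $\prefguess^Y$; the test $i\notin\F$ holds because $i\in\prefguessbar\subseteq\I$ by the previous paragraph; and for the marginal-profit test, the key observation is that the partial solution $O$ built so far consists only of previously accepted elements, each of which lies in $\prefguess^Y\subseteq\FOPT_1$, so $O\cup\{i\}\subseteq\FOPT_1$ and Lemma~\ref{lem:multi-costs:monotone} (monotonicity of $\profit$ on subsets of $\FOPT_1$) gives $\profit(O\cup\{i\})\ge\profit(O)$. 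Hence $i$ is accepted, and every element of $\prefguess^Y$ ends up in $O$.

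The main obstacle is making sure the monotonicity argument in the last step is legitimate, and this hinges entirely on the containment $\prefguess^Y\subseteq\alg\subseteq\FOPT_1$ — equivalently, on $\guess\ge\rhoc$ from Lemma~\ref{lem:multi-costs:rhoc}. If the threshold underestimated $\rhoc$, the partial solutions could drift outside $\FOPT_1$ and the marginal-profit filter could legitimately reject an element of $\prefguess^Y$, breaking the equality. A minor point is that one should read the ``exit loop'' instruction in the pseudocode as skipping the offending element rather than aborting the scan; conditioned on $E_1$ no element of $\prefguess^Y$ ever triggers it, so this reading is harmless for the analysis.
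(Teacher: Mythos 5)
Your proof is correct and follows the same route the paper sketches in the paragraph preceding the lemma: the containments $\prefguess^Y\subseteq\prefguessbar\subseteq\alg$ come from $e_\guess\in X$ together with $\guess\ge\rhoc$ (Lemma~\ref{lem:multi-costs:rhoc}) and the identification $\preflargebar=\alg$, and the equality $O=\prefguess^Y$ comes from monotonicity on subsets of $\FOPT_1$ (Lemma~\ref{lem:multi-costs:monotone}). Two points are worth adding: you actually give a more complete argument than the paper, which asserts $O=\prefguess^Y$ in the lemma but never explicitly verifies that every element of $\prefguess^Y$ passes the marginal-profit and $\F$ filters; and your observation that the pseudocode's ``Exit loop'' must be read as a ``skip'' of the offending element (rather than an abort of the whole scan) is a genuine issue with Algorithm~\ref{Algorithm:multi-costs:weighted} as written---taken literally, an early-arriving low-density element of $Y$ would terminate the scan and break the equality, so your reading is the one the analysis needs.
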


Summing over all dimensions and applying event $E_1'$ gives us
\begin{align*}
  \ell(\profit(\prefguessbar^X) + \profit(e_\guess))
  \geq \frac{\beta}{2} \sum_i \profit_i(\FOPT_i(X))
  \geq \frac{\beta}{2} \sum_i \profit_i(\OPT(X))
  = \frac{\beta}{2} \profit(\OPT(X)) \ge \frac{\eone\eonep}{2}\profit(\OPT).
\end{align*}
So if we define $\rhob$ to be the highest density such that
$\profit(\prefsmallbar) + \profit(e_\rhob) \geq \frac{\eone\eonep}{2\ell}
\profit(\OPT)$, then we get $\rhob\ge\guess$. Then, as before we can
define the event $E_2$ in terms of $\rhob$ to conclude that
$\profit(P_\tau^Y)$ is large enough. Formally, we define $E_2$ for
some fixed constant $\etwo$ as
follows.
\begin{align*}
  &E_2: \profit(\prefsmall^Y) \geq \etwo
  \profit(\prefsmall).
\end{align*}
Since $\prefsmall^Y \subset \prefguess^Y \subset \prefguessbar \subset
\alg$, 
applying Claim~\ref{lem:multi-costs:monotone}, we have $\profit(O) =
\profit(\prefguess^Y) \geq
\profit(\prefsmall^Y)$. Therefore, conditioning on events $E_1$,
$E_1'$ and $E_2$, we get
\begin{align*}
\profit(O) \ge \profit(\prefsmall^Y)  \ge \frac{\eone\eonep\etwo}{2\ell} \profit(\OPT).
\end{align*}

To wrap up the analysis, we argue that the probability of these events
is bounded from below by a constant. Using
Lemma~\ref{lemma:concentration} as in the proof of
Lemma~\ref{lemma:chernoff-unconstrained} , if no element has profit at
least $\frac{1}{\ell\const[3]}\profit(\OPT)$, then events $E_1$, $E_1'$
and $E_2$ each occur with probability at least
$0.76$. 
Using a union bound, we have that $\Pr[E_1 \wedge E_1' \wedge E_2]
\geq 0.28$. This proves the following
lemma.

\begin{lemma}
  \label{lem:multi-costs:eventprobs}
Suppose that $\max_e\profit(e)\le (1/\const[3]\ell)\profit(\OPT)$. Then,
$\Pr[E_1 \wedge E_1' \wedge E_2] \geq 0.28$.
\end{lemma}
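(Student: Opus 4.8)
The plan is to follow exactly the template of the proof of Lemma~\ref{lemma:chernoff-unconstrained}: show that each of $E_1$, $E_1'$, $E_2$ individually fails with probability at most $0.24$, and then conclude by a union bound that $\Pr[E_1\wedge E_1'\wedge E_2]\ge 1-3(0.24)=0.28$. Each of the three individual bounds will come from a single application of Lemma~\ref{lemma:concentration}, with the fixed set taken to be the relevant ``target'' set, the random subset being its intersection with $X$ (for $E_1$ and $E_1'$) or with $Y$ (for $E_2$), and the weights being profit increments along an appropriate ordering. In each case the two things to check are that the weights are non-negative and that no single weight exceeds a $1/c$ fraction of the total for some $c\ge3$; the first follows from monotonicity/submodularity and the second from the hypothesis $\max_e\profit(e)\le\frac{1}{\ell\const[3]}\profit(\OPT)$ together with a lower bound on the profit of the fixed set coming from the offline analysis.

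For $E_1$ I would take the fixed set to be $\alg$ (the integral part of $\FOPT_1=\FP(\preflarge)$) ordered by decreasing density, with weights the prefix increments $\incprofit(i)=\profit(L_i)-\profit(L_{i-1})$. By Lemma~\ref{lem:multi-costs:monotone} these are non-negative and sum to $\profit(\alg)$, and by submodularity each is at most $\profit(e_i)\le\frac{1}{\ell\const[3]}\profit(\OPT)$; combined with the offline bound $\profit(\OPT)\le\ell(\profit(\alg)+2\max_e\profit(e))$ from Lemmas~\ref{lem:multi-costs:F} and~\ref{lem:multi-costs:I}, this shows no weight exceeds a $1/(\const[3]-2)$ fraction of $\profit(\alg)$, so Lemma~\ref{lemma:concentration} applies with $c=\const[3]-2$. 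Then the containment $\alg\cap X\subseteq\FP(\preflarge^X)\subseteq\FOPT_1$ (already established inside the proof of Lemma~\ref{lem:multi-costs:rhoc}) together with monotonicity lets us pass from $\incprofit(\alg\cap X)\ge\beta(\const[3]-2)\,\profit(\alg)$ to $\profit(\FP(\preflarge^X))\ge\eone\,\profit(\FP(\preflarge))$ for a suitable constant $\eone$, absorbing the small gap between $\profit(\alg)$ and $\profit(\FOPT_1)$ (again bounded via $\max_e\profit(e)\le\frac{1}{\ell\const[3]}\profit(\OPT)$). The event $E_2$ is handled in exactly the same way, with the fixed set $\prefsmall$ (handling the boundary element $e_\rhob$ separately, as the definition of $\rhob$ does via $\prefsmallbar$), its intersection with $Y$ as the random subset, and the defining inequality of $\rhob$ supplying the lower bound $\profit(\prefsmall)\ge c'\,\profit(\OPT)/\ell$ needed to verify the ``no weight too large'' hypothesis.

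The event $E_1'$ is slightly different since the target is $\OPT$ itself, which is not a density prefix and may contain up to $\ell$ elements of $\F$. Here I would fix any ordering of $\OPT$ and again use prefix increments as weights: submodularity gives $\incprofit(i)\le\profit(e_i)$, and since optimality of $\OPT$ gives $\profit(\OPT)-\profit(\OPT\setminus\{e_i\})\ge0$, submodularity also yields $\incprofit(i)\ge0$; moreover $\profit(S)\ge\incprofit(S)$ for every $S$, so in particular $\profit(\OPT\cap X)\ge\incprofit(\OPT\cap X)$. The total weight is $\profit(\OPT)$ and each weight is at most $\frac{1}{\ell\const[3]}\profit(\OPT)$, so Lemma~\ref{lemma:concentration} applies with $c=\ell\const[3]\ge\const[3]$, giving $\profit(\OPT(X))\ge\profit(\OPT\cap X)\ge\eonep\,\profit(\OPT)$ with probability at least $0.76$, with $\eonep=\beta(\const[3])$.

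The main obstacle is bookkeeping rather than anything deep. First, one must keep the $\ell$-factors straight so that each fixed set genuinely has profit at least a constant-over-$\ell$ fraction of $\profit(\OPT)$ — this is precisely why the hypothesis scales as $\frac{1}{\ell\const[3]}$ and not $\frac{1}{\const[3]}$. Second, one must check that the choice of constants is non-circular: $\const[3]$ is chosen first and large, which pins down $\eone=\beta(\const[3]-2)$ and $\eonep=\beta(\const[3])$, and only then is $\const[3]$ required to be large enough that the parameter $\tfrac{1}{2}\const[3]\eone\eonep-1$ controlling the $E_2$ application exceeds $3$; since $\beta(\cdot)$ is increasing and bounded away from $0$, $\eone\eonep$ is bounded below by an absolute constant independent of $\const[3]$, so such a $\const[3]$ exists. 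Everything else is a routine repetition of the single-dimensional argument, and the final union bound is immediate.
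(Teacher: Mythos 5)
Your proposal is correct and takes essentially the same route as the paper: the paper's own proof of Lemma~\ref{lem:multi-costs:eventprobs} is a single sentence pointing to Lemma~\ref{lemma:concentration} ``as in the proof of Lemma~\ref{lemma:chernoff-unconstrained}'' plus a union bound, and what you write is the natural expansion of that — three applications of Lemma~\ref{lemma:concentration} (one per event, with $\alg$, $\OPT$, and $\prefsmall$ as the fixed sets, prefix profit increments as the weights, and the hypothesis $\max_e\profit(e)\le\frac{1}{\ell\const[3]}\profit(\OPT)$ plus the offline bound $\profit(\OPT)\le\ell(\profit(\alg)+2\max_e\profit(e))$ to verify the ``no single heavy weight'' condition), followed by $\Pr[\text{all three}]\ge1-3(1-0.76)=0.28$. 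One small inaccuracy: the second inclusion in your chain ``$\alg\cap X\subseteq\FP(\preflarge^X)\subseteq\FOPT_1$'' is not established in the proof of Lemma~\ref{lem:multi-costs:rhoc} (and need not hold, since $\FP(\preflarge^X)$ may contain $e_{\rhoc}$ at a full fraction while $\FOPT_1$ contains it only fractionally); what you actually need, and what the paper does supply, is just the first inclusion $\preflargebar^X\subset\FP(\preflarge^X)$ together with monotonicity of $\profit$ on subsets of $\FP(\preflarge^X)$ (the analogue of Lemma~\ref{lem:multi-costs:monotone} for the instance restricted to $\preflarge^X$), which yields $\profit(\FP(\preflarge^X))\ge\profit(\alg\cap X)$ directly. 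With that correction the argument goes through exactly as you outline.
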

Via a similar argument as for Theorem~\ref{theorem:weighted}, we get
\begin{theorem}
  \label{theorem:multi-costs:weighted} 
  Algorithm~\ref{Algorithm:multi-costs:weighted} is
  $O(\ell)$ competitive for $(\profit,2^U)$ where $\profit$ is a
  multi-dimensional profit function.
\end{theorem}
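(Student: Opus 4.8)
The plan is to mirror the structure of the proof of Theorem~\ref{theorem:weighted}, using the sequence of lemmas just developed (Lemmas~\ref{lem:multi-costs:F} through \ref{lem:multi-costs:eventprobs}) as the building blocks. First I would split on whether some single element carries a large fraction of $\profit(\OPT)$: if there exists an element $e$ with $\profit(e) \geq \frac{1}{\const[3]\ell}\profit(\OPT)$, then the classical secretary algorithm run in Step~1 (which succeeds with probability $1/e$, and is executed with probability $1/2$) already yields expected profit at least $\frac{1}{2e\const[3]\ell}\profit(\OPT)$, giving an $O(\ell)$ ratio in that case. This is exactly the reduction used earlier, and it is why the algorithm flips a fair coin at the top.

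In the complementary case, $\max_e\profit(e) \leq \frac{1}{\const[3]\ell}\profit(\OPT)$, so Lemma~\ref{lem:multi-costs:eventprobs} applies and the three good events $E_1$, $E_1'$, $E_2$ hold simultaneously with probability at least $0.28$. Conditioned on these events, I would chain the inequalities established above: Lemma~\ref{lem:multi-costs:guess-lb} gives $O = \prefguess^Y \subset \alg$, so $\profit$ is monotone on $O$ by Lemma~\ref{lem:multi-costs:monotone} and the greedy acceptance step accepts all of $\prefguess^Y$ with positive marginal profit; then the containment $\prefsmall^Y \subset \prefguess^Y \subset \prefguessbar \subset \alg$ together with monotonicity gives $\profit(O) \geq \profit(\prefsmall^Y)$, and event $E_2$ plus the definition of $\rhob$ yields $\profit(O) \geq \frac{\eone\eonep\etwo}{2\ell}\profit(\OPT)$, as displayed just before the theorem statement. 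Since this conditioning holds with probability $\geq 0.28$ and the algorithm reaches this branch with probability $1/2$, the expected profit is at least $0.14 \cdot \frac{\eone\eonep\etwo}{2\ell}\profit(\OPT)$.

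Combining the two cases, $\E[\profit(O)] \geq \min\left\{\frac{1}{2e\const[3]\ell}, \frac{0.14\,\eone\eonep\etwo}{2\ell}\right\}\profit(\OPT)$, and since $\eone,\eonep,\etwo,\const[3]$ are all fixed constants independent of $\ell$, this is $\Omega(1/\ell)\cdot\profit(\OPT)$, i.e. an $O(\ell)$ competitive ratio. The one place that requires genuine care — and the step I expect to be the main obstacle — is verifying that all the constants can be chosen consistently: $\const[3]$ must be large enough (as a function of the fixed $\eone,\eonep,\etwo$) that Lemma~\ref{lemma:concentration}'s hypothesis is met in all three applications establishing $E_1$, $E_1'$, and $E_2$ (this is where the $\profit(\prefsmall)\geq\frac{\eone\eonep\etwo}{2\ell}\cdot\frac{1}{?}\profit(\OPT)$-type lower bounds feed back into the maximum-element-profit bound), while simultaneously $\const[3]$ appears only inside the $O(\ell)$ and so does not affect the asymptotic order. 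Everything else is bookkeeping already carried out in the lemmas, so I would keep this final proof short, citing Lemmas~\ref{lem:multi-costs:guess-lb}, \ref{lem:multi-costs:monotone}, and \ref{lem:multi-costs:eventprobs} and the displayed bound, exactly as the paper does for Theorem~\ref{theorem:weighted}.
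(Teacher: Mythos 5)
Your proposal is correct and matches the paper's (elided) proof exactly: the paper proves Theorem~\ref{theorem:multi-costs:weighted} with the one-line remark ``Via a similar argument as for Theorem~\ref{theorem:weighted},'' and you have faithfully instantiated that argument --- the case split on a single high-profit element, invoking Lemma~\ref{lem:multi-costs:eventprobs} for $\Pr[E_1 \wedge E_1' \wedge E_2] \geq 0.28$, chaining through Lemmas~\ref{lem:multi-costs:guess-lb} and \ref{lem:multi-costs:monotone} to the displayed bound $\profit(O) \geq \frac{\eone\eonep\etwo}{2\ell}\profit(\OPT)$, and taking the minimum over branches. Your final caveat about choosing $\const[3]$ consistently is a fair observation about a detail the paper also leaves implicit, but it does not affect the $O(\ell)$ asymptotics.
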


\subsection{Computing Proper Densities}
\label{app:computing}
In this subsection, we show that proper $\densityv$'s exist and can be
efficiently computed.  To do this, we make the following assumptions
about the marginal cost functions:
\begin{enumerate}
\item[(A1)] The marginal cost functions $\dcost_j(\cdot)$ are unbounded.
\item[(A2)] They satisfy $\dcost_j(s_j(U)) > \max_{e \in U} v(e)$ for
  all $i$.
\end{enumerate}
Note that these assumptions do not affect $C_i(t)$ for sizes $t \leq
s_i(U)$, and so have no effect on either the profit function or the
optimal solution as well.  We first prove that there exists a proper
$\densityv(e)$ for each element $e$. Observe that properties (P1) and (P2)
together uniquely define $\rho(e)$. Therefore, we only need to find $x^*$
satisfying the equation
\begin{equation}
  \label{eq:multi-costs:densities}
  \sum_j \ideal_j^{-1}(x^*) s_j(e) = v(e);
\end{equation}
then our proper density $\rho(e)$ is given by $\rho_j(e) =
\invideal_j(x^*)$. By Assumption (A1), $\ideal_j$ is a strictly-increasing
continuous function with $\ideal_j(0) =0$ and $\lim_{\gamma
  \rightarrow \infty} \ideal_j(\gamma) = \infty$, so its inverse $\invideal_j$
is well-defined and is also a strictly-increasing continuous function
with $\invideal_j(0) =0$ and $\lim_{x
  \rightarrow \infty} \invideal_j(x) = \infty$. Thus, the solution $x^*$ exists.

Next, we show that for any element $e$, we can efficiently compute
$\densityv(e)$. In the following, we fix an element $e$, and use $s_j$ and
$v$ to denote $s_j(e)$ and $v(e)$,
respectively. 
We define $I_j(t) = \ideal_j(\dcost_j(t))$. Let $x^*$ be the solution
to Equation \ref{eq:multi-costs:densities} for the fixed element $e$
and note that $x^* = \ideal(e)$.


In the following two lemma statements and proofs, we focus on a single
dimension $j$ and remove subscripts for ease of notation.
First, we prove that we can easily compute $I(t)$.
\begin{lemma}
  \label{lem:multi-costs:I(t)}
  We have $I(t) = \dcost(t)t - \cost(t)$.
\end{lemma}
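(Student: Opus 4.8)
The plan is to unfold the two definitions and recognize $I(t)$ as a value of the conjugate‑type transform $\ideal$ evaluated at a point whose maximizer is transparent. By definition $I(t) = \ideal(\dcost(t)) = \max_{t'}\bigl(\dcost(t)\,t' - \cost(t')\bigr)$, so it suffices to show that this maximum is attained at $t' = t$, where the objective equals $\dcost(t)\,t - \cost(t)$.

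First I would analyze the marginal behavior of $\phi(t') := \dcost(t)\,t' - \cost(t')$. Using the identity $\cost(t') - \cost(t'-1) = \dcost(t')$, the increment of $\phi$ from $t'-1$ to $t'$ is exactly $\dcost(t) - \dcost(t')$. Since $\dcost(\cdot)$ is non‑decreasing (it is the marginal of a convex function), this increment is non‑negative for every $t' \le t$ and non‑positive for every $t' > t$. Hence $\phi$ is non‑decreasing up to $t' = t$ and non‑increasing thereafter, so $t' = t$ maximizes $\phi$; substituting gives $I(t) = \phi(t) = \dcost(t)\,t - \cost(t)$, as claimed. If instead one works with the piecewise‑linear convex extension of $\cost$ used in the surrounding discussion (so that $\ideal_j$ is the continuous, strictly increasing function appearing there), the identical argument goes through with the discrete increment replaced by the one‑sided derivative, which is $\dcost(t) - \dcost(t')$ on $(t'-1,t')$ and thus changes sign at $t' = t$ in exactly the same way.

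I do not expect a genuine obstacle here: the statement is essentially the observation that evaluating the Legendre transform of a convex cost at its own marginal slope recovers the "inverse marginal" relationship $\dcostinv(\dcost(t)) = t$ together with the corresponding value. The only points deserving a line of care are (i) making sure the maximizing $t'$ we exhibit lies in whichever domain ($\mathbb{Z}^+$ or its convex extension) is intended for $\ideal_j$, and (ii) noting that when $\dcost$ is only weakly increasing the maximizer of $\phi$ need not be unique, but every optimal $t'$ yields the same value $\dcost(t)\,t - \cost(t)$, so the formula for $I(t)$ is unaffected.
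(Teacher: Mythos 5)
Your proof is correct and takes essentially the same approach as the paper: both rest on the observation that the increment of $\gamma t' - \cost(t')$ from $t'-1$ to $t'$ is $\gamma - \dcost(t')$, which, with $\gamma = \dcost(t)$, changes sign at $t' = t$. The paper reaches the conclusion slightly more indirectly, first evaluating $\ideal(\dcost(t))$ at $t' = \dcostinv(\dcost(t))$ and then using constancy of $\dcost$ on $[t,t']$ to equate the value there with the value at $t$; your argument shows directly that $t$ itself is a maximizer, which is a bit cleaner but relies on the same monotonicity of $\dcost$.
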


\begin{proof}
  Let $t' = \dcostinv(\dcost(t))$. Since $t'$ is the maximum size $r
  \geq t$ such that $\dcost(t) \geq \dcost(r)$, by monotonicity, we
  have that $\dcost(\cdot)$ is constant in $[t, t']$. This implies $\cost(t') -
  \cost(t) = \dcost(t)(t' - t)$ and so,
  \[\dcost(t)t' - \cost(t') = \dcost(t)t- \cost(t).\]
  The lemma now follows from the fact that $I(t) = \ideal(\dcost(t))$
  is the LHS of the above equation.
\end{proof}
\noindent
Next, we prove a lemma that helps us determine $\invideal(x)$ given
$x$.
\begin{lemma}
  \label{lem:multi-costs:find-density}
  Given $x$ and positive integer $t$ such that 
  $I(t) \leq x < I(t+1)$, 
  we have 
  \begin{equation*}
    \label{eq:multi-costs:find-density}
    \invideal(x) = 
    \frac{x + \cost(t)}{t}.
  \end{equation*} 
\end{lemma}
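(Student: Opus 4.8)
The plan is to verify the formula directly by setting $\gamma := \frac{x + \cost(t)}{t}$ and showing that $\ideal(\gamma) = x$; since the section has already established that $\invideal$ is a well-defined inverse of $\ideal$, this immediately yields $\invideal(x) = \gamma = \frac{x+\cost(t)}{t}$. Observe that by construction $\gamma t - \cost(t) = x$, so, recalling $\ideal(\gamma) = \max_s \left(\gamma s - \cost(s)\right)$, it suffices to show that this maximum is attained at $s = t$.

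The first step is to rewrite the bracketing hypothesis $I(t) \le x < I(t+1)$ in terms of marginal costs. By Lemma~\ref{lem:multi-costs:I(t)}, $I(t) = \dcost(t)\,t - \cost(t)$, and since $\cost(t+1) = \cost(t) + \dcost(t+1)$ a one-line computation gives $I(t+1) = \dcost(t+1)(t+1) - \cost(t+1) = \dcost(t+1)\,t - \cost(t)$. Substituting $x = \gamma t - \cost(t)$ and dividing through by $t > 0$, the inequality $I(t) \le x$ becomes $\dcost(t) \le \gamma$ and the inequality $x < I(t+1)$ becomes $\gamma < \dcost(t+1)$; so the hypothesis is exactly $\dcost(t) \le \gamma < \dcost(t+1)$.

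Now I would finish using monotonicity of $\dcost(\cdot)$ (which holds because $\cost$ is convex). The increment of $s \mapsto \gamma s - \cost(s)$ from $s-1$ to $s$ is $\gamma - \dcost(s)$, which is $\ge 0$ whenever $s \le t$ (since then $\dcost(s) \le \dcost(t) \le \gamma$) and $< 0$ whenever $s \ge t+1$ (since then $\dcost(s) \ge \dcost(t+1) > \gamma$). Hence the sequence $\gamma s - \cost(s)$ is nondecreasing up to $s = t$ and strictly decreasing afterwards, so it is maximized at $s = t$, giving $\ideal(\gamma) = \gamma t - \cost(t) = x$ and hence the claimed value of $\invideal(x)$.

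I expect no real obstacle here; the one thing to be careful about is the degenerate case $\dcost(t) = \dcost(t+1)$, in which the above identities give $I(t) = I(t+1)$ so that the half-open interval $[I(t), I(t+1))$ is empty and the hypothesis of the lemma cannot be met — hence one only ever needs to handle the case $\dcost(t) < \dcost(t+1)$, where the argument above is clean. The only genuinely computational point is the algebraic identity $I(t+1) = \dcost(t+1)\,t - \cost(t)$, after which everything reduces to tracking the signs of the increments $\gamma - \dcost(s)$.
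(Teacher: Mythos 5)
Your proof is correct and follows essentially the same route as the paper's: both translate the bracketing $I(t)\le x<I(t+1)$ into $\dcost(t)\le\gamma<\dcost(t+1)$ and then use the piecewise-linearity of $\ideal$ (equivalently, the sign of the increments $\gamma-\dcost(s)$) to conclude $\ideal(\gamma)=\gamma t-\cost(t)$. The only cosmetic difference is that you posit $\gamma=(x+\cost(t))/t$ and verify $\ideal(\gamma)=x$ directly via the increment argument, whereas the paper works forward from $x$ using strict monotonicity of $\ideal$ and the function $\dcostinv$ to identify the maximizer; your observation about the degenerate case $I(t)=I(t+1)$ is a nice, correct addition.
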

\begin{proof}
  By definition of $I(\cdot)$, we have 
  $\ideal(\dcost(t)) \leq x < \ideal(\dcost(t+1))$, and since
  $\ideal(\cdot)$ is strictly increasing, we get \[\dcost(t) \leq
  \invideal(x) < \dcost(t+1).\] By definition of $\dcostinv(\cdot)$,
  this gives us $t \leq \dcostinv(\invideal(x)) < t+1$, and therefore
  $\dcostinv(\invideal(x)) = t$, since $\dcost(\cdot)$ changes only on the
  integer points. 

  Thus, we have 
  \[x = \ideal(\invideal(x)) = \invideal(x) t -\cost(t)\]
  and solving this for $\invideal(x)$ gives us the desired
  equality. 
\end{proof}


Lemma~\ref{lem:multi-costs:find-density} leads to the
\textsc{Find-Density} algorithm which, given a profit $x$, uses a
binary search to compute $\invideal_j(x)$.  Together with
Lemma~\ref{lem:multi-costs:find-ideal}, this enables us to
determine $x^*$ by first using binary search, and then solving linear
equations.
\begin{algorithm}{$\textsc{Find-Density}(x, s)$}
  \caption{Given $x$ and sizes $s_j$, find $\invideal(x)$ and $t_1,
    \ldots, t_\ell$
    satisfying $I_j(t_j) \leq x < I_j(t_j+1)$.}
\begin{algorithmic}[1]
\FOR {$j = 1$ to $\ell$}
\STATE Binary search to find integral $t_j \in [0, s_j(U))$ satisfying
$I_j(t_j) \leq x < I_j(t_j + 1)$.
\STATE Set $\rho_j \leftarrow (x + \cost_j(t_j))/t_j$.
\ENDFOR
\RETURN $(\rho, t)$, where $\rho$ is the vector $(\rho_1, \ldots,
\rho_\ell)$ and $t$ is the vector $(t_1, \ldots, t_\ell)$.
\end{algorithmic}
\end{algorithm}

\begin{lemma}
  \label{lem:multi-costs:find-ideal}
  Suppose we have a positive integer $x$ such that 
  \begin{equation}
    \label{eq:multi-costs:find-ideal}
    \sum_j \invideal_j(x) s_j \leq v < \sum_j \invideal_j(x+1) s_j,
  \end{equation}
  and positive integers $t_1, \ldots, t_\ell$ such that $I_j(t_j) \leq x
  < I_j(t_j + 1)$ for all $j$.  Then the solution $x^*$ to
  Equation~\eqref{eq:multi-costs:densities} 
  is precisely:
  \[x^* = \frac{v - \sum_j \cost_j(t_j)(s_j/t_j)}{\sum_j(s_j/t_j)}.\]
\end{lemma}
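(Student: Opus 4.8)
The plan is to verify that the claimed closed form for $x^*$ is exactly the value that makes Equation~\eqref{eq:multi-costs:densities} hold, by combining the per-dimension formula of Lemma~\ref{lem:multi-costs:find-density} with the bracketing hypothesis \eqref{eq:multi-costs:find-ideal}. The key observation is that the bracketing condition $\sum_j \invideal_j(x) s_j \le v < \sum_j \invideal_j(x+1) s_j$ pins down in which ``linear piece'' the solution $x^*$ lies, and on that piece the function $x \mapsto \sum_j \invideal_j(x) s_j$ is affine, so solving for $x^*$ reduces to inverting a linear equation.

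First I would argue that $x^* \in [x, x+1)$. Each $\invideal_j$ is strictly increasing (by Assumption (A1), as established just above), so $x \mapsto \sum_j \invideal_j(x) s_j$ is strictly increasing and continuous; since it equals something $\le v$ at $x$ and something $> v$ at $x+1$, the unique solution $x^*$ of $\sum_j \invideal_j(x^*) s_j = v$ lies in $[x,x+1)$. Next, I would use the hypothesis $I_j(t_j) \le x < I_j(t_j+1)$ together with monotonicity to conclude $I_j(t_j) \le x^* < I_j(t_j+1)$ as well (this uses $x \le x^* < x+1 \le I_j(t_j+1)$ on the upper side, and $I_j(t_j)\le x\le x^*$ on the lower side). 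Then Lemma~\ref{lem:multi-costs:find-density}, applied with the pair $(x^*, t_j)$, gives the exact identity
\[
\invideal_j(x^*) = \frac{x^* + \cost_j(t_j)}{t_j}
\]
for every $j$.

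Finally, I would substitute these expressions into the defining equation $\sum_j \invideal_j(x^*) s_j = v$:
\[
\sum_j \frac{x^* + \cost_j(t_j)}{t_j}\, s_j = v
\quad\Longrightarrow\quad
x^* \sum_j \frac{s_j}{t_j} + \sum_j \cost_j(t_j)\frac{s_j}{t_j} = v,
\]
and solve for $x^*$ to obtain exactly the claimed formula
\[
x^* = \frac{v - \sum_j \cost_j(t_j)(s_j/t_j)}{\sum_j (s_j/t_j)}.
\]
(One should also note $\sum_j s_j/t_j > 0$, so the expression is well-defined; this follows since at least one $s_j > 0$ for a nontrivial element and the corresponding $t_j$ is a positive integer, and if $s_j = 0$ the $j$th term simply contributes nothing — or, more cleanly, the hypotheses implicitly require $t_j \ge 1$.)

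I do not expect any serious obstacle here; the only subtlety is making sure the bracketing hypothesis on $x$ correctly transfers to $x^*$, i.e. that $x$ and $x^*$ lie in the same interval $[I_j(t_j), I_j(t_j+1))$ simultaneously for all $j$, which is what licenses the simultaneous use of Lemma~\ref{lem:multi-costs:find-density}'s piecewise-linear formula across all dimensions. The potential ``gotcha'' is that the step $x^* < x+1 \le I_j(t_j+1)$ is needed, and this in turn relies on $x$ being an integer (so $x+1$ is the next integer and $I_j$ only jumps at integer arguments of $\dcost_j$) — I would make that dependence explicit rather than leave it implicit.
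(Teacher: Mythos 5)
Your proof is correct and follows essentially the same route as the paper's: deduce $x \le x^* < x+1$ from the bracketing hypothesis and monotonicity of the $\invideal_j$'s, transfer the interval $[I_j(t_j), I_j(t_j+1))$ from $x$ to $x^*$, apply Lemma~\ref{lem:multi-costs:find-density} dimension by dimension to get $\invideal_j(x^*) = (x^*+\cost_j(t_j))/t_j$, and solve the resulting linear equation. Your explicit remark that $x+1 \le I_j(t_j+1)$ relies on $x$ and $I_j(t_j+1)$ both being integers is a point the paper leaves implicit, and is a welcome clarification rather than a deviation.
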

\begin{proof}
  Equation~\eqref{eq:multi-costs:find-ideal} and the monotonicity of
  the $\invideal_j$'s imply that $x \leq x^* < x+1$ and so, we have
  $I_j(t_j) \leq x^* < x+1 \leq I_j(t_j + 1)$ for all $j$. Applying
  Lemma~\ref{lem:multi-costs:find-density} to dimension $j$ gives us
  $\invideal_j(x^*) = (x^* -
  \cost_j(t_j))/t_j$. 
  Then, applying Equation~\eqref{eq:multi-costs:densities}, we get
  \[v = \sum_j \invideal_j(x^*)s_j = \sum_j \left[
    \frac{x^* + \cost_j(t_j)}{t_j}\right]s_j.\]
  Solving this for $x^*$ gives the claimed equality.
\end{proof}


\begin{algorithm}{$\textsc{Find-Proper-Density}(v,s)$}
  \caption{Given sizes $s_j$ and value $v$, find $\rho$ satisfying (P1)
    and (P2).
  }
\begin{algorithmic}[1]
\STATE Binary search to find integral $x \in [0, \min_j I_j(s_j(U)))$
satisfying
\[\sum_j \delta^-_j s_j \leq v < \sum_j \delta^+_j s_j,\]
where $(\delta^-, t) =
\textsc{Find-Density}(x, s)$ and $(\delta^+, t') =
\textsc{Find-Density}(x+1, s)$.
\STATE Set \[x^* \leftarrow \frac{v -
  \sum_j\cost_j(t_j)(s_j/t_j)}{\sum_j(s_j/t_j)}.\]
\FOR {$j = 1$ to $\ell$}
\STATE Set $\rho_j \leftarrow (x^* + \cost_j(t_j))/t_j$.
\ENDFOR
\RETURN $\rho$
\end{algorithmic}
\end{algorithm}

We need the following lemma to show that the binary search upper
bounds in both algorithms are correct.
\begin{lemma}
  \label{lem:multi-costs:bin-upper-bd}
  For proper density $\rho$, we have 
  $x^* < I_j(s_j(U))$ for all dimensions $j$.
\end{lemma}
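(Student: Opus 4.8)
The plan is to translate the claimed inequality about $x^*$ into an inequality about a single density and then close it with Assumption~(A2). Recall that $x^* = \ideal(e)$ for the element $e$ whose proper density $\rho$ we are considering, that property~(P2) together with $\rho_j(e) = \invideal_j(x^*)$ gives $\ideal_j(\rho_j(e)) = x^*$, and that $I_j(s_j(U)) = \ideal_j(\dcost_j(s_j(U)))$ by definition of $I_j$. Since $\ideal_j$ is strictly increasing (Assumption~(A1)), proving $x^* < I_j(s_j(U))$ is equivalent to proving $\rho_j(e) < \dcost_j(s_j(U))$, so I would first reduce to this latter statement.

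Next I would bound $\rho_j(e)$ by $v(e)$ using property~(P1). Since densities and sizes are nonnegative and each $s_j(e)$ is a positive integer, the identity $\sum_i \rho_i(e) s_i(e) = v(e)$ gives $\rho_j(e) \le \rho_j(e) s_j(e) \le \sum_i \rho_i(e) s_i(e) = v(e)$. Assumption~(A2) states $\dcost_j(s_j(U)) > \max_{e \in U} v(e) \ge v(e)$, so chaining these yields $\rho_j(e) < \dcost_j(s_j(U))$. Applying the strictly increasing map $\ideal_j$ and using the two identities recalled above, $x^* = \ideal_j(\rho_j(e)) < \ideal_j(\dcost_j(s_j(U))) = I_j(s_j(U))$, which is exactly the claim.

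I do not anticipate a genuine obstacle here: the only points needing care are the bookkeeping of the definitions that identify $\ideal_j(\rho_j(e))$ with $x^*$, and the implicit use of $s_j(e) \ge 1$ (which holds since element sizes are positive integers; without it the summand $\rho_j(e) s_j(e)$ could vanish and the bound $\rho_j(e) \le v(e)$ would need a separate argument). Everything else is a short chain of monotonicity steps.
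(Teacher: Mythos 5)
Your proof is correct and takes essentially the same route as the paper: use (P1) to get $\rho_j(e) \le v(e)$, apply (A2) to get $\rho_j(e) < \dcost_j(s_j(U))$, then push through the strictly increasing $\ideal_j$ and the identity $\ideal_j(\rho_j(e)) = x^*$ from properness. The only cosmetic difference is how the degenerate case $s_j(e)=0$ is handled — the paper dismisses it by ignoring $\cost_j$ for that dimension, while you invoke positivity of sizes — but this does not change the substance of the argument.
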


\begin{proof}
  If an element has zero size in dimension $j$, then we can ignore
  $\cost_j(\cdot)$. So without loss of generality, we assume that $s_j > 0$
  for all $j$. Property (P1) gives us $v = \sum_i \rho_i s_i \geq
  \rho_j s_j$ and so $\rho_j \leq v/s_j \leq v$.  From
  assumption (A2), we have that $\dcost_j(s_j(U)) > v \geq
  \rho_j$. This implies that 
  \begin{align*}
   \ideal_j(\rho_j) 
   < \ideal(\dcost_j(s_j(U))) 
   &= I_j(s_j(U))
 \end{align*}
 Since $\rho$ is proper, we have $\ideal_j(\rho_j) = x^*$ and this
 proves the lemma.
\end{proof}

\begin{theorem}
  Algorithms \textsc{Find-Density} and \textsc{Find-Proper-Density} run in
  polynomial time and are correct.
\end{theorem}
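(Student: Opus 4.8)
The plan is to prove correctness and polynomial running time of both algorithms by tracing through the structure established in Lemmas~\ref{lem:multi-costs:I(t)} through~\ref{lem:multi-costs:bin-upper-bd}. First I would handle \textsc{Find-Density}$(x,s)$. For each dimension $j$ the algorithm binary searches for the integer $t_j \in [0, s_j(U))$ with $I_j(t_j) \leq x < I_j(t_j+1)$; this search is valid because $I_j(t) = \dcost_j(t)t - \cost_j(t)$ (Lemma~\ref{lem:multi-costs:I(t)}) is non-decreasing in $t$ (since $\dcost_j$ is non-decreasing, each increment of $t$ adds $\dcost_j(t+1)(t+1) - \dcost_j(t)t - \dcost_j(t+1) = t(\dcost_j(t+1)-\dcost_j(t)) \geq 0$), and because the relevant range for $x$ is bounded above by $I_j(s_j(U))$, which Lemma~\ref{lem:multi-costs:bin-upper-bd} guarantees exceeds $x^*$. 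Given the correct $t_j$, Lemma~\ref{lem:multi-costs:find-density} tells us $\invideal_j(x) = (x + \cost_j(t_j))/t_j$, which is exactly what the algorithm outputs for $\rho_j$; so the returned pair $(\rho, t)$ is correct.

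Next I would analyze \textsc{Find-Proper-Density}$(v,s)$. The outer binary search looks for the integer $x \in [0, \min_j I_j(s_j(U)))$ satisfying $\sum_j \delta^-_j s_j \leq v < \sum_j \delta^+_j s_j$, where $\delta^- = \invideal(x)$ and $\delta^+ = \invideal(x+1)$ come from the two calls to \textsc{Find-Density}. This is a valid binary search because $\sum_j \invideal_j(x) s_j$ is a non-decreasing (in fact strictly increasing, by strict monotonicity of the $\invideal_j$) function of $x$, and because the true solution $x^*$ to Equation~\eqref{eq:multi-costs:densities} satisfies $x^* < \min_j I_j(s_j(U))$ by Lemma~\ref{lem:multi-costs:bin-upper-bd} and $x^* \geq 0$ since $\invideal_j(0)=0$ and values/sizes are nonnegative. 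Once $x$ is found, the triple $(x, t_1, \ldots, t_\ell)$ satisfies exactly the hypotheses of Lemma~\ref{lem:multi-costs:find-ideal}, which then certifies that the closed-form $x^* = (v - \sum_j \cost_j(t_j)(s_j/t_j))/\sum_j(s_j/t_j)$ is the correct solution to Equation~\eqref{eq:multi-costs:densities}; the final loop sets $\rho_j = (x^* + \cost_j(t_j))/t_j = \invideal_j(x^*)$, which by construction is proper — property (P1) holds because $\sum_j \invideal_j(x^*) s_j = v$ is precisely Equation~\eqref{eq:multi-costs:densities}, and property (P2) holds because $\ideal_j(\rho_j) = \ideal_j(\invideal_j(x^*)) = x^*$ for every $j$.

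For the running-time claims I would simply observe that each binary search runs over a range of integers bounded by $s_j(U) = \sum_{e} s_j(e)$, so it takes $O(\log s_j(U))$ iterations, i.e.\ polynomially many in the input size; each iteration calls \textsc{Find-Density} a constant number of times, and \textsc{Find-Density} itself performs $\ell$ binary searches each over a range of size at most $\max_j s_j(U)$, plus $O(\ell)$ arithmetic operations. Evaluating $I_j(t)$, $\cost_j(t)$, and $\dcost_j(t)$ at a given point is assumed to take polynomial time. Hence both algorithms run in time polynomial in $\ell$, $\log \max_j s_j(U)$, and $\log \max_e v(e)$.

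The main obstacle, and the part deserving the most care in the write-up, is verifying the monotonicity facts that legitimize the binary searches — in particular that $I_j(t) = \dcost_j(t)t - \cost_j(t)$ is non-decreasing (where one must be careful about the ranges where $\dcost_j$ is flat, which is exactly the subtlety Lemma~\ref{lem:multi-costs:find-density} handles via $\dcostinv_j$) and that the search interval $[0, \min_j I_j(s_j(U)))$ genuinely brackets $x^*$, which is where Assumptions (A1) and (A2) and Lemma~\ref{lem:multi-costs:bin-upper-bd} are essential. Everything else is bookkeeping: stitching together the already-proved lemmas and checking that the quantities the algorithms compute coincide with the quantities those lemmas characterize.
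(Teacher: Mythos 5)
Your proof follows essentially the same route as the paper's: invoke Lemmas~\ref{lem:multi-costs:find-density} and \ref{lem:multi-costs:find-ideal} for output correctness conditional on the binary-search upper bounds, justify those bounds via Lemma~\ref{lem:multi-costs:bin-upper-bd} together with the observation that \textsc{Find-Density} is only ever called with $x < I_j(s_j(U))$, and conclude with the observation that the quantities involved are polynomially bounded. Your explicit check that $I_j(t+1)-I_j(t)=t\bigl(\dcost_j(t+1)-\dcost_j(t)\bigr)\ge 0$ is a useful bit of added care the paper leaves implicit; the only small imprecision is that the outer binary search in \textsc{Find-Proper-Density} ranges over $[0,\min_j I_j(s_j(U)))$ rather than $[0,s_j(U))$, so its iteration count should be bounded in terms of $\log\cost_j(s_j(U))$ as well, though this does not affect the polynomial-time conclusion.
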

\begin{proof}
  Lemmas \ref{lem:multi-costs:find-density} and
  \ref{lem:multi-costs:find-ideal} show that given correctness of the
  binary search upper bounds, the output is
  correct. Lemma~\ref{lem:multi-costs:bin-upper-bd} implies that the
  binary search upper bound of \textsc{Find-Proper-Density} is
  correct. We observe that \textsc{Find-Density} is only invoked for
  integral profits $x < I_j(s_j(U))$. Therefore, we have $I_j(t') \leq
  x < I_j(t'+1)$ for $t' < s_j(U)$ and this proves that the binary
  search upper bound of \textsc{Find-Density} is correct.

  Since the numbers involved in the arithmetic and the binary search
  are polynomial in terms of $s_j(U)$, $\cost_j(s_j(U))$, we conclude
  that the algorithms take time polynomial in the input size. 
  %
\end{proof}

 \paragraph{The online setting.}
 When the algorithm does not get to see the entire input at once, then
 it does not know $s_j(U)$. However, we can get around this by
 observing that 
 if we have sizes $t_1, \ldots, t_\ell$ satisfying $I_j(t_j) >
 \ideal_j(\rho_j) = x^*$ for all $j$, then $I_j(t_j) > x^*$ and hence
 $\min_j I_j(t_j)$ suffices as an upper bound for the binary search in
 \textsc{Find-Proper-Density}. Since we invoke \textsc{Find-Density}
 for $x < I_j(t_j)$, we have that $t_1, \ldots, t_\ell$ suffice as upper bounds for
 the binary searches in \textsc{Find-Density} as well. 

 By (A2), we have $\dcost_j(s_j(U)) > v \geq \rho_j$ for any proper
 $\rho$, so if we guess $t_j$ such that $\dcost_j(t_j) > v$, then we
 have $I_j(t_j) > \ideal_j(\rho_j)$. Therefore, we set $t_j = 2^m$,
 with $m = 1$ initially, increment $m$ one at a time and check if
 $\dcost_j(t_j) < v$. Assumption (A2) and monotonicity of
 $\dcost_j(\cdot)$ implies that $t_j \leq 2s_j(U)$ and so it takes us
 $m \leq \log(2s_j(U))$ iterations to get $\dcost_j(t_j) >
 v$. Repeating this for each dimension gives us sufficient upper
 bounds for the binary searches in both algorithms.


\section{Multi-dimensional costs with general feasibility constraint}
\label{app:multi-constrained}

In this section we consider the multi-dimensional costs setting with a general feasibility constraint, $(\profit,\Feas)$. As before we use an $O(1)$-approximate offline and an $\alpha$-competitive online algorithm for $(\Phi,\Feas)$ as a subroutine, where $\Phi$ is a sum-of-values objective. While we will be able to obtain an $O(\alpha\ell)$ approximation in the offline setting, this only translates into an $O(\alpha\ell^5)$ competitive online algorithm.

As in Section~\ref{sec:multi}, we associate with each element $e$ an $\ell$-dimensional proper density vector $\rho(e)$. Then we decompose the profit functions $\profit_i$ into sum-of-values objectives defined as follows.
\[\fdf^i_\gamma(A) = \gamma_is_i(A) - \cost_i(s_i(A)),\]
\[\sdf^i_\gamma(A) = \sum_{e \in A} (\rho_i(e) - \gamma_i)s_i(e).\]
Let $\sdf_\gamma(A) = \sum_i \sdf_\gamma^i(A)$ and $\fdf_\gamma(A) =
\sum_i \fdf_\gamma^i(A)$.  We have $\profit_i(A) = \sdf_\gamma^i(A) +
\fdf_\gamma^i(A)$ for all $i$ and $\profit(A) = \sdf_\gamma(A) +
\fdf_\gamma(A)$. As before, $\fdfmax_{\gamma, i} = \argmax_{A \subset
  \bP_\gamma} s_i(A)$.

We extend the definition of ``boundedness'' to the multi-dimensional setting as follows (see also Definition~\ref{def:bounded}).
\begin{definition}
  Given a density vector $\gamma$, a subset $A \subset P_\gamma$ is
  said to be $\gamma$-bounded if, for all $i$, $A$ is
  $\gamma_i$-bounded with respect to $\cost_i$, that is, $\rho_i(A)
  \geq \gamma_i$ and $s_i(A) \leq \dcostinv_i(\gamma_i)$. If $A$ is
  $\gamma$-bounded and $\gamma$ is the minimum density of $A$, then we
  say $A$ is \emph{bounded}.
\end{definition}


The following lemma is analogous to Proposition~\ref{lem:profit-lb}.
\begin{lemma}
  \label{lem:multi-costs:profit-lb}
  If $A$ is bounded,
  then $\profit(A) \geq \profit_i(A)$ for all $i$.
  Moreover, if $A$ is $\gamma$-bounded then for all $i$, we have
  \[\profit(A) \geq \fdf_{\gamma}(A) \geq \fdf^i_{\gamma}(A),\]
  \[\profit(A) \geq \sdf_{\gamma}(A) \geq \sdf^i_{\gamma}(A).\]
\end{lemma}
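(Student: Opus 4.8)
The plan is to mirror the proof of Proposition~\ref{lem:profit-lb} dimension by dimension, using the fact that the multi-dimensional profit and cost functions decompose as sums over the $\ell$ coordinates. Recall that for a density vector $\gamma$, being $\gamma$-bounded means that in every dimension $i$ we have $\rho_i(A) \geq \gamma_i$ and $s_i(A) \leq \dcostinv_i(\gamma_i)$, so the single-dimensional arguments apply coordinatewise with $\cost_i$ playing the role of $\cost$.

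First I would establish the chain $\profit(A) \geq \fdf_\gamma(A) \geq \fdf^i_\gamma(A)$ when $A$ is $\gamma$-bounded. For each dimension $j$, since $s_j(A) \leq \dcostinv_j(\gamma_j)$, Lemma~\ref{lem:monotone} (applied with $\cost_j$, $s = 0$, $t = s_j(A)$) gives $\fdf^j_\gamma(A) = \fdf^j_\gamma(s_j(A)) \geq \fdf^j_\gamma(0) = 0$. Similarly, since every element $e \in A$ has $\rho_j(e) \geq \gamma_j$, each term $(\rho_j(e) - \gamma_j)s_j(e)$ is non-negative, so $\sdf^j_\gamma(A) \geq 0$. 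Summing the non-negativity of $\sdf^j_\gamma(A)$ over all $j$ gives $\sdf_\gamma(A) \geq 0$, hence $\profit(A) = \sdf_\gamma(A) + \fdf_\gamma(A) \geq \fdf_\gamma(A)$; and $\fdf_\gamma(A) = \sum_j \fdf^j_\gamma(A) \geq \fdf^i_\gamma(A)$ because all the dropped terms $\fdf^j_\gamma(A)$ ($j \neq i$) are non-negative. The symmetric argument using non-negativity of each $\fdf^j_\gamma(A)$ and of the $\sdf^j_\gamma(A)$ terms yields $\profit(A) \geq \sdf_\gamma(A) \geq \sdf^i_\gamma(A)$.

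Next I would handle the first assertion: if $A$ is \emph{bounded}, then $\profit(A) \geq \profit_i(A)$ for all $i$. By definition, $A$ being bounded means $A$ is $\gamma$-bounded where $\gamma$ is the minimum density vector of $A$, i.e. $\gamma_j = \min_{a \in A}\rho_j(a)$ for each $j$. Then $\profit_i(A) = \sdf^i_\gamma(A) + \fdf^i_\gamma(A)$, and from the two chains just established, $\profit(A) \geq \fdf^i_\gamma(A)$ and $\profit(A) \geq \sdf^i_\gamma(A)$. Since $\sdf^i_\gamma(A) \geq 0$ and $\fdf^i_\gamma(A) \geq 0$ (both shown above), we get $\profit(A) = \sdf_\gamma(A) + \fdf_\gamma(A) \geq \sdf^i_\gamma(A) + \fdf^i_\gamma(A) = \profit_i(A)$, using that all remaining $\sdf^j_\gamma(A)$ and $\fdf^j_\gamma(A)$ with $j \neq i$ are non-negative.

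I do not anticipate a genuine obstacle here; the only point requiring mild care is making sure the definition of $\fdf^i_\gamma$ over sets (namely $\fdf^i_\gamma(A) = \fdf^i_\gamma(s_i(A))$) is invoked correctly so that Lemma~\ref{lem:monotone} applies verbatim in each coordinate, and that the boundedness hypothesis is strong enough to supply $s_j(A) \leq \dcostinv_j(\gamma_j)$ in \emph{every} dimension simultaneously (which it is, by definition). The whole proof is essentially bookkeeping: reduce to the single-dimensional Proposition~\ref{lem:profit-lb} and Lemma~\ref{lem:monotone} in each coordinate, then sum.
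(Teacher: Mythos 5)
Your proof is correct and follows essentially the same route as the paper's: establish coordinatewise non-negativity of $\sdf^j_\gamma(A)$ (immediately from $\rho_j(e)\ge\gamma_j$) and of $\fdf^j_\gamma(A)$ (via Lemma~\ref{lem:monotone}), then sum and drop non-negative terms. The only cosmetic difference is that you rederive these non-negativities directly from Lemma~\ref{lem:monotone} rather than citing Proposition~\ref{lem:profit-lb} per dimension as the paper does, but the logical content is identical.
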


\begin{proof}
  We start by assuming that $A$ is $\gamma$-bounded.  For each
  dimension $i$, we get $\profit_i(A) \geq \fdf_\gamma^i(A) \geq 0$ by
  Proposition \ref{lem:profit-lb}. Thus, we have $\profit(A) \geq
  \fdf_\gamma(A) \geq \fdf_\gamma^i(A)$ by summing over $i$. A similar
  proof shows that $\profit(A) \geq \sdf_{\gamma}(A) \geq
  \sdf^i_{\gamma}(A)$. Next, we assume that $A$ is bounded. Let
  $\mu$ be its minimum density. Then $\sdf_\mu^i(A) \geq 0$ and
  $\fdf_\mu^i(A) \geq 0$. This gives us $\profit_i(A) = \sdf_\mu^i(A)
  + \fdf_\mu^i(A) \geq 0$ and so $\profit(A) \geq \profit_i(A)$ for
  all $i$.
\end{proof}



As in the single-dimensional setting, our approach is to find an
appropriate density $\gamma$ to bound $\sdf_\gamma(\OPT)$ and
$\fdf_\gamma(\OPT)$ in terms of the maximizers of $\sdf_\gamma^i$ and
$\fdf_\gamma^i$. We use Algorithm~\ref{Algorithm:offline-matroid}, the
offline algorithm for the single-dimensional constrained setting given
in Section~\ref{sec:constrained}, as a subroutine. We consider two
possible scenarios: either all feasible sets are bounded or there
exists an unbounded set. We use the following lemma to tackle the
first scenario.
\begin{lemma}
  \label{lem:multi-costs:all-bounded}
  Suppose all feasible sets are bounded. Let
  $D_i=\argmax_{D\in\{H_{\gamma_i}, G_{\gamma_i}, e\}_{\gamma_i}} \profit_i(D)$
  be the result of running Algorithm~\ref{Algorithm:offline-matroid}
  on the single-dimensional instance $(\profit_i, \Feas)$, where
  $\Feas$ is the underlying feasibility constraint. Then, we have
  $\sum_i \profit(D_i) \geq \frac{1}{4}\profit(\OPT)$.
\end{lemma}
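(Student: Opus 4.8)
The plan is to reduce the claim to the single-dimensional constrained result of Section~\ref{sec:constrained}, applied independently in each of the $\ell$ dimensions, using the boundedness hypothesis as the bridge from the per-dimension profit $\profit_i$ back to the full profit $\profit$.

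First I would observe that, since every feasible set is bounded, each $D_i$ is bounded, so Lemma~\ref{lem:multi-costs:profit-lb} gives $\profit(D_i) \geq \profit_i(D_i)$ for every $i$. It therefore suffices to show $\sum_i \profit_i(D_i) \geq \frac14 \profit(\OPT)$.

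Next I would invoke the offline guarantee of Section~\ref{sec:constrained}: by definition $D_i$ is the output of Algorithm~\ref{Algorithm:offline-matroid} run on the single-dimensional instance $(\profit_i, \Feas)$, so Theorem~\ref{theorem:matroid_offline} gives $\profit_i(D_i) \geq \frac14 \max_{A \in \Feas} \profit_i(A) \geq \frac14 \profit_i(\OPT)$, the last step because $\OPT$ is feasible. Summing over $i$ and using $\profit(\OPT) = \sum_i \profit_i(\OPT)$ then yields $\sum_i \profit(D_i) \geq \sum_i \profit_i(D_i) \geq \frac14 \sum_i \profit_i(\OPT) = \frac14 \profit(\OPT)$, as desired.

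The argument is essentially a black-box reduction, so there is no real obstacle; the one point that deserves a sentence of justification is that Theorem~\ref{theorem:matroid_offline} genuinely applies to $(\profit_i, \Feas)$. For this I would note that $\profit_i(A) = v_i(A) - \cost_i(s_i(A))$ with $v_i(e) = \rho_i(e)s_i(e)$ is again a value-minus-convex-cost objective over the same downward-closed family $\Feas$, and that the proofs of Lemma~\ref{lem:goodprefix} and Theorem~\ref{theorem:matroid_offline} used only subadditivity of the objective, convexity of the cost, and the density-prefix structure --- nothing specific to the original instance (in particular, no integrality of values). With that in hand, the chain of inequalities above completes the proof.
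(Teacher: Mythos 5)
Your proof is correct and takes essentially the same approach as the paper: both apply Lemma~\ref{lem:multi-costs:profit-lb} to pass from $\profit_i(D_i)$ to $\profit(D_i)$ using boundedness, invoke Theorem~\ref{theorem:matroid_offline} for the per-dimension $4$-approximation, and sum over dimensions. Your closing remark justifying that $(\profit_i,\Feas)$ is a legitimate single-dimensional instance of the framework is a reasonable bit of extra diligence that the paper leaves implicit.
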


\begin{proof}
  Fix a dimension $i$. Since all feasible sets are bounded, we have
  $\profit(D_i) \geq \profit_i(D_i)$ by
  Lemma~\ref{lem:multi-costs:profit-lb}. Furthermore,
  Theorem~\ref{Algorithm:offline-matroid} implies that $\profit_i(D_i)
  \geq \frac{1}{4} \profit_i(\OPT)$. Summing over all dimensions, we have
  \[\sum_i \profit(D_i) \geq \sum_i \profit_i(D_i) \geq
  \sum_i \frac{1}{4}\profit_i(\OPT) =
  \frac{1}{4}\profit(\OPT).\qedhere\]
\end{proof}

Now we handle the case when there exists an unbounded feasible set.
\begin{lemma}
  \label{lem:multi-costs:unbounded}
  Suppose there exists an unbounded feasible set. Let $\rhoc$ be the
  highest proper density such that $P_\rhoc$ contains an unbounded
  feasible set. Let $D'_i$ be the result of running
  Algorithm~\ref{Algorithm:offline-matroid} on $\bP_\rhoc$
  (i.e. ignoring elements of density at most $\rhoc$) with objective
  $\profit_i$ and feasibility constraint $\Feas$. If $e_\rhoc \in \I$,
  then we have
  \[4\sum_i \profit(D'_i) + \ell(\max_i \profit(\fdfmax_{\rhoc,i}) +
  \profit(e_\rhoc)) \geq \profit(\OPT).\]
\end{lemma}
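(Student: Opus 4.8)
The plan is to replay the single-dimensional argument of Lemma~\ref{lem:goodprefix} coordinate by coordinate. Write $\profit(\OPT)=\sum_i\profit_i(\OPT)$ and, for each dimension $i$, split $\OPT$ by subadditivity of $\profit_i$ as $\profit_i(\OPT)\le\profit_i(\OPT\cap\bP_\rhoc)+\profit_i(\OPT\setminus\bP_\rhoc)$. Since proper densities induce a single ordering, every element of $\OPT\setminus\bP_\rhoc$ satisfies $\ideal(e)\le\ideal(e_\rhoc)$, hence $\rho_i(e)\le\rhoc_i$ in every coordinate (using property (P2), $\ideal_i(\rho_i(e))=\ideal(e)$); therefore $v_i(\OPT\setminus\bP_\rhoc)\le\rhoc_i\,s_i(\OPT\setminus\bP_\rhoc)$ and so $\profit_i(\OPT\setminus\bP_\rhoc)\le\max_t\bigl(\rhoc_i t-\cost_i(t)\bigr)=\ideal_i(\rhoc_i)=\ideal(e_\rhoc)$, again by (P2).

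For the part inside $\bP_\rhoc$: the set $\OPT\cap\bP_\rhoc$ is a feasible subset of $\bP_\rhoc$, and $D'_i$ is obtained by running Algorithm~\ref{Algorithm:offline-matroid} on $\bP_\rhoc$ with the single-dimensional objective $\profit_i$ and feasibility constraint $\Feas$, so Theorem~\ref{theorem:matroid_offline} gives $\profit_i(\OPT\cap\bP_\rhoc)\le 4\,\profit_i(D'_i)$. Moreover, by the definition of $\rhoc$ as the highest proper density for which $P_\rhoc$ contains an unbounded feasible set, $\bP_\rhoc=P_\rhoc\setminus\{e_\rhoc\}$ equals $P_\gamma$ for the proper density $\gamma>\rhoc$ of the element immediately above $e_\rhoc$ in the ordering; hence $\bP_\rhoc$ contains no unbounded feasible set, so every feasible subset of $\bP_\rhoc$ — in particular $D'_i$ — is bounded, and Lemma~\ref{lem:multi-costs:profit-lb} gives $\profit(D'_i)\ge\profit_i(D'_i)$. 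Combining the two bounds and summing over $i$ yields
\[
\profit(\OPT)\;\le\;4\sum_i\profit(D'_i)\;+\;\ell\,\ideal(e_\rhoc).
\]

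It then remains to charge $\ideal(e_\rhoc)$ against $\max_i\profit(\fdfmax_{\rhoc,i})+\profit(e_\rhoc)$; this is the multi-dimensional counterpart of the rounding step $F=G\cup\{\alpha e'\}$ used for Lemma~\ref{lem:goodprefix}. Let $A\subseteq P_\rhoc$ be an unbounded feasible set, which exists by the definition of $\rhoc$. Because $A$ is not bounded and every element of $A$ has density at least $\rhoc$, there is a dimension $j$ with $s_j(A)>\dcostinv_j(\rhoc_j)$; splitting $A=(A\cap\bP_\rhoc)\cup(A\cap\{e_\rhoc\})$ and using that $\fdfmax_{\rhoc,j}$ maximizes $s_j$ over feasible subsets of $\bP_\rhoc$ gives $s_j(\fdfmax_{\rhoc,j})+s_j(e_\rhoc)\ge s_j(A)>\dcostinv_j(\rhoc_j)$. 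I would then take the fractional set $F=\fdfmax_{\rhoc,j}\cup\{\alpha e_\rhoc\}$ with $\alpha\in[0,1]$ chosen so that $s_j(F)=\dcostinv_j(\rhoc_j)$. Every element of $F$ has $j$-density at least $\rhoc_j$, so $\profit_j(F)\ge\rhoc_j\,s_j(F)-\cost_j(s_j(F))=\ideal_j(\rhoc_j)=\ideal(e_\rhoc)$; provided $F$ is also $\rhoc$-bounded in the coordinates $i\ne j$, Lemma~\ref{lem:multi-costs:profit-lb} gives $\profit_i(F)\ge 0$ there, whence $\profit(F)\ge\ideal(e_\rhoc)$. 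Finally $e_\rhoc\in\I$ implies $\profit(\alpha e_\rhoc)\le\profit(e_\rhoc)$, so by subadditivity $\profit(\fdfmax_{\rhoc,j})+\profit(e_\rhoc)\ge\profit(F)\ge\ideal(e_\rhoc)$, which combined with the display above proves the lemma.

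The step I expect to be the main obstacle is verifying that $F$ (or a suitable fractional truncation of it) is $\rhoc$-bounded in the coordinates $i\ne j$: this is immediate in dimension $j$, but controlling $s_i(F)=s_i(\fdfmax_{\rhoc,j})+\alpha\,s_i(e_\rhoc)$ for $i\ne j$ needs the multi-dimensional analogue of Lemma~\ref{lem:rhoc-indep} — that feasible subsets of $\bP_\rhoc$ are $\rhoc$-bounded — together with care in choosing $\alpha$, since $P_\rhoc$ itself does contain unbounded feasible sets. I would establish that analogue much as in Lemma~\ref{lem:opt_frac} (reducing a fractional copy of $e_\rhoc$ while preserving optimality forces the required size bounds), then take $\alpha$ as large as the $\rhoc$-boundedness of $F$ permits and absorb any resulting loss into the constants in front of $\sum_i\profit(D'_i)$ and the additive error terms; in the degenerate case where no positive $\alpha$ is admissible, $\fdfmax_{\rhoc,j}$ already has $j$-size near $\dcostinv_j(\rhoc_j)$, and concavity of $t\mapsto\rhoc_j t-\cost_j(t)$ through the origin shows $\profit(\fdfmax_{\rhoc,j})=\Omega(\ideal(e_\rhoc))$ by itself.
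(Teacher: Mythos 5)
Your reduction of $\profit(\OPT)$ to $4\sum_i\profit(D'_i)+\ell\,\ideal(e_\rhoc)$ is correct and is essentially the paper's first step, just written per coordinate (the paper splits $\profit(\OPT)\le\profit(\OPT\cap\bP_\rhoc)+\profit(\OPT\setminus\bP_\rhoc)$ and invokes Lemma~\ref{lem:multi-costs:all-bounded} on the first piece and the bound $\fdf^i_\rhoc(\OPT\setminus\bP_\rhoc)\le\ideal(e_\rhoc)$ on the second; your coordinate-wise subadditivity $\profit_i(\OPT)\le\profit_i(\OPT\cap\bP_\rhoc)+\profit_i(\OPT\setminus\bP_\rhoc)$ plus the observation that feasible subsets of $\bP_\rhoc$ are bounded lands in the same place). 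The gap is entirely in charging $\ideal(e_\rhoc)$ to $\max_i\profit(\fdfmax_{\rhoc,i})+\profit(e_\rhoc)$, and you have correctly located the obstacle (boundedness of $F$ in coordinates $i\neq j$) but misdiagnosed the failure mode.

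Two issues. First, even in the benign case you should take $\alpha=\min_i\alpha_i$ (where $\alpha_i$ makes the $i$-th size exactly $\dcostinv_i(\rhoci)$) rather than the $\alpha$ that makes $s_j(F)$ tight; with the minimum, the coordinate attaining it is the one where you read off $\fdf_\rhoc^i(F)=\ideal(e_\rhoc)$, and all other coordinates stay within their bounds, so $F$ is genuinely $\rhoc$-bounded. Second, and more seriously, your ``degenerate case'' claim is false: if no positive $\alpha$ is admissible the binding coordinate may be some $i\neq j$ with $s_i(\fdfmax_{\rhoc,j})\ge\dcostinv_i(\rhoci)$, and then the $j$-size of $\fdfmax_{\rhoc,j}$ need not be anywhere near $\dcostinv_j(\rhoci[j])$, so the concavity argument in coordinate $j$ proves nothing. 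Nor is there a multi-dimensional analogue of Lemma~\ref{lem:rhoc-indep} forcing feasible subsets of $\bP_\rhoc$ to be $\rhoc$-bounded: they are bounded only at their own minimum density $\mu$, which strictly exceeds $\rhoc$, and $\dcostinv_i(\mu_i)\ge\dcostinv_i(\rhoci)$ leaves exactly the room for $s_i$ to overshoot $\dcostinv_i(\rhoci)$. The paper resolves this with an explicit case split. If $s_i(\fdfmax_{\rhoc,i})\le\dcostinv_i(\rhoci)$ for every $i$ (note the per-dimension maximizers, not just $\fdfmax_{\rhoc,j}$), then $s_i(\fdfmax_{\rhoc,j})\le s_i(\fdfmax_{\rhoc,i})\le\dcostinv_i(\rhoci)$ and the fractional-rounding argument with $\alpha=\min_i\alpha_i$ closes the case. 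Otherwise some $i$ has $s_i(\fdfmax_{\rhoc,i})>\dcostinv_i(\rhoci)$; then the paper discards $e_\rhoc$, sets $\fdfmax'=\fdfmax_{\rhoc,i}$, and uses that $\fdfmax'$ is bounded at its own minimum density $\mu$ with $\mu_i>\rhoci$ so that $\dcostinv_i(\rhoci)<s_i(\fdfmax')\le\dcostinv_i(\mu_i)$; applying Lemma~\ref{lem:monotone} to $\fdf^i_\mu$ on this interval gives $\profit(\fdfmax')\ge\fdf^i_\mu(\fdfmax')>\rhoci\,\dcostinv_i(\rhoci)-\cost_i(\dcostinv_i(\rhoci))=\ideal(e_\rhoc)$. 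This second case is the argument your sketch is missing, and it cannot be absorbed into constants: the inequality your fallback relies on is simply not true.
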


\begin{proof}
  For brevity we use $\ebar$ to denote $e_\rhoc$. Let $\OPT_1 = \OPT
  \cap (\bP_\rhoc)$ and $\OPT_2 = \OPT \setminus \OPT_1$. By
  subadditivity, we have $\profit(\OPT) \leq \profit(\OPT_1) +
  \profit(\OPT_2)$. We observe that all feasible sets in $\bP_\rhoc$
  are bounded by definition of $\rhoc$. Thus, we can apply
  Lemma~\ref{lem:multi-costs:all-bounded} and get $4\sum_i
  \profit(D'_i) \geq \profit(\OPT_1)$.

  Next, we approximate $\profit(\OPT_2)$. We have $\sdf_\rhoc(\OPT_2)
  \leq 0$ since $\OPT_2$ has elements of density at most $\rhoc$. So,
  we have $\profit(\OPT_2) \leq \sum_i \fdf_\rhoc^i(\OPT_2)$.  We know
  that for all $i$,
  \begin{align*}
    \ideal(\ebar)
    &= \max_t (\rhoci t - \cost_i(t)) \\
    &\geq \rhoci s_i(\OPT_2) - \cost_i(s_i(\OPT_2))  \\
    &= \fdf_\rhoc^i(\OPT_2),
  \end{align*}
  so we have $\ell \ideal(\ebar) \geq \profit(\OPT_2)$ and it suffices
  to bound $\ideal(\ebar)$.


  Let $L$ be the unbounded feasible set in $P_\rhoc$; note that its
  minimum density is $\rhoc$. Since $L$ is unbounded, there exists $j$
  such that $s_j(L) > \dcostinv_j(\ebar)$. We know that
  $\fdfmax_{\rhoc,j}$ maximizes $s_j(\cdot)$ among feasible sets
  contained in $P_\rhoc \setminus \{e_\rhoc\}$. Therefore, we have
  $s_j(\fdfmax_{\rhoc,j}) + s_j(\ebar) \geq s_j(L) >
  \dcostinv_j(\ebar)$.

  Now consider two cases. Suppose that for all $i$,
  $s_i(\fdfmax_{\rhoc, i}) \leq \dcostinv_i(\ebar)$. Then,
  $s_i(\fdfmax_{\rhoc,j})\le \dcostinv_i(\ebar)$. For brevity, we
  denote $\fdfmax_{\rhoc,j}$ by $G'$. For all $i$, let $\alpha_i$ be
  such that $s_i(\fdfmax') + \alpha_is_i(\ebar) = \dcostinv_i(\ebar)$
  and let $\alpha = \min_i \alpha_i$. Without loss of generality, we
  assume that $\alpha_1$ is the minimum. By definition, we have
  $s_1(\fdfmax' + \alpha\ebar) = \dcostinv_1(\ebar)$ and $s_i(\fdfmax'
  + \alpha\ebar) \leq \dcostinv_i(\ebar)$ for $i \neq 1$. This implies
  that $\fdfmax' + \alpha\ebar$ is a $\rhoc$-bounded fractional set
  and so
  \begin{align*}
    \profit(\fdfmax' + \alpha\ebar)
    &\geq \fdf_\rhoc^1(\fdfmax' + \alpha\ebar) \\
    &= \rhoci[1] s_1(\fdfmax' + \alpha\ebar) - \cost_1(s_1(\fdfmax' +
    \alpha\ebar)) \\
    &= \rho_1(\ebar) \dcostinv_1(\ebar) - \cost_1(\dcostinv_1(\ebar)) \\
    &= \ideal(\ebar).
  \end{align*}
  Since $\ebar \in \I$, we have $\ell(\profit(\fdfmax') +
  \profit(\ebar)) \geq \ell\ideal(\ebar) \geq \profit(\OPT_2)$.


  In the second case, there exists a dimension $i$ such that
  $\dcostinv_i(\ebar) < s_i(\fdfmax_{\rhoc, i})$. Without loss of
  generality, we assume that it is the first dimension. In this case,
  we define $\fdfmax'$ to be $\fdfmax_{\rhoc,1}$. Let $\mu$ be the
  minimum density in $\fdfmax'$. 
  Note that $\mu_1 > \rhoci[1]1<$ by definition of the
  $\fdfmax_{\rhoc,i}$'s. Since $\fdfmax'$ is bounded, we have
  $\dcostinv_1(\ebar) < s_1(\fdfmax') \leq
  \dcostinv_1(\mu_1)$. Applying Lemma~\ref{lem:monotone} over $\mu_1$
  gives
  \[\fdf_{\mu}^1(\fdfmax') = \mu_1 s_1(\fdfmax') -
  \cost_1(s_1(\fdfmax')) > \mu_1 \dcostinv_1(\ebar) -
  \cost_1(\dcostinv_1(\ebar)).\]
  Since $\fdfmax'$ is $\mu$-bounded,
  by Lemma~\ref{lem:multi-costs:all-bounded} we have
  \begin{align*}
    \profit(\fdfmax') 
    &\geq \fdf_\mu^1(\fdfmax') \\
    &> \mu_1 \dcostinv_1(\ebar) - \cost_1(\dcostinv_1(\ebar))\\
    &> \rhoci[1] \dcostinv_1(\ebar) -
    \cost_1(\dcostinv_1(\ebar))\\
    &= \ideal(\ebar).
  \end{align*}
  This gives us $\ell\profit(\fdfmax') = \profit(\OPT_2)$. Overall, we
  get $\ell(\max_i\profit(\fdfmax_{\rhoc, i}) + \profit(\ebar)) \geq
  \profit(\OPT_2)$.
\end{proof}

We are now ready to give an offline algorithm. Let $D_{\gamma, i}$ be
the result of running Algorithm~\ref{Algorithm:offline-matroid} on
$P_\gamma$ with objective $\profit_i$ and feasibility constraint
$\Feas$. 
\begin{algorithm}
  \caption{Offline algorithm for multi-dimensional $(\profit, \Feas)$}
\label{Algorithm:multi-costs:offline-matroid}
\begin{algorithmic}[1]
  \STATE Let $G_{\max} = \argmax_{G\in\{G_{\gamma,i}\}_{\gamma,i}} \profit(G)$.
  \STATE Let $D_{\max} = \argmax_{D\in\{D_{\gamma,i}\}_{\gamma,i}} \profit(D)$.
  \STATE Let $e_{\max} = \argmax_{e\in U} \profit(e)$.
  \RETURN the most profitable of $G_{\max{}}$, $D_{\max{}}$, and $e_{\max{}}$.
\end{algorithmic}
\end{algorithm}


Finally, we use the above lemmas to lower bound the performance of
Algorithm \ref{Algorithm:multi-costs:offline-matroid}.
\begin{theorem}
  \label{thm:multi-costs:offline-matroid}
 Algorithm \ref{Algorithm:multi-costs:offline-matroid} gives a
 $7\ell$-approximation to $(\profit,\Feas)$ where $\profit$ is an $\ell$-dimensional profit function and $\Feas$ is a matroid feasibility constraint.
\end{theorem}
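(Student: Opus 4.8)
The plan is to run the case analysis behind Lemmas~\ref{lem:multi-costs:all-bounded} and~\ref{lem:multi-costs:unbounded} and then charge every term on the right-hand sides of those lemmas to one of the three sets $G_{\max}$, $D_{\max}$, $e_{\max}$ that Algorithm~\ref{Algorithm:multi-costs:offline-matroid} compares. As a preliminary step I would strip off the elements of $\F$: Lemma~\ref{lem:multi-costs:F} gives $|\OPT\cap\F|\le\ell$, so subadditivity of $\profit$ yields $\profit(\OPT)\le\profit(\OPT\cap\I)+\sum_{e\in\OPT\cap\F}\profit(e)\le\profit(\OPT\cap\I)+\ell\,\profit(e_{\max})$. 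Since the algorithm's output $S$ satisfies $\profit(S)\ge\profit(e_{\max})$, it then suffices to prove $\profit(\OPT\cap\I)\le 6\ell\,\profit(S)$.

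Before the case split I would record the ``enumeration'' facts linking the objects in the two lemmas to the algorithm's candidate sets. Running Algorithm~\ref{Algorithm:offline-matroid} on $(\profit_i,\Feas)$ over the full ground set is exactly $D_{\gamma_0,i}$, where $\gamma_0$ is the smallest proper density (so $P_{\gamma_0}=U$); running it on $\bP_\rhoc$ is exactly $D_{\gamma_1,i}$, where $\gamma_1$ is the next-highest proper density after $\rhoc$ (so $P_{\gamma_1}=\bP_\rhoc$). Both $\gamma_0$ and $\gamma_1$ are proper densities of elements and hence lie among the thresholds the algorithm enumerates, so $\profit(D_i)\le\profit(D_{\max})$ and $\profit(D_i')\le\profit(D_{\max})$; likewise $\fdfmax_{\rhoc,i}=G_{\rhoc,i}$ is an enumerated set, so $\profit(\fdfmax_{\rhoc,i})\le\profit(G_{\max})$, and trivially $\profit(e_\rhoc)\le\profit(e_{\max})$.

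The two cases are then short. If every feasible set is bounded, Lemma~\ref{lem:multi-costs:all-bounded}---whose proof uses only that the set compared against is feasible (via the $4$-approximation of Theorem~\ref{theorem:matroid_offline}), hence applies with $\OPT\cap\I$ in place of $\OPT$---gives $\profit(\OPT\cap\I)\le 4\sum_i\profit(D_i)\le 4\ell\,\profit(D_{\max})$. Otherwise there is an unbounded feasible set; let $\rhoc$ be the highest proper density with $P_\rhoc$ containing one, so that every feasible subset of $\bP_\rhoc$ is bounded. Split $\OPT\cap\I=((\OPT\cap\I)\cap\bP_\rhoc)\cup((\OPT\cap\I)\setminus\bP_\rhoc)$ and bound the two parts by subadditivity. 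The first part is a feasible, hence bounded, subset of $\bP_\rhoc$, so Lemma~\ref{lem:multi-costs:all-bounded} applied inside $\bP_\rhoc$ bounds it by $4\sum_i\profit(D_i')\le 4\ell\,\profit(D_{\max})$. The second part consists of elements of proper density at most $\rhoc$; since proper densities induce a single ordering, as in Lemma~\ref{lem:multi-costs:unbounded} its $i$-th profit is at most $\max_t(\rho_i(e_\rhoc)t-\cost_i(t))=\ideal(e_\rhoc)$, for a total of at most $\ell\,\ideal(e_\rhoc)$. Finally, reusing the internal two-case argument of Lemma~\ref{lem:multi-costs:unbounded}---the unbounded feasible set $L\subseteq P_\rhoc$ contains $e_\rhoc$ (by maximality of $\rhoc$) and so $s_j(\fdfmax_{\rhoc,j})+s_j(e_\rhoc)\ge s_j(L)>\dcostinv_j(\rho_j(e_\rhoc))$ for some $j$---one gets $\ideal(e_\rhoc)\le\profit(\fdfmax_{\rhoc,j})+\profit(e_\rhoc)\le\profit(G_{\max})+\profit(e_{\max})$ (in the other case of that argument one even gets $\ideal(e_\rhoc)\le\profit(\fdfmax_{\rhoc,i})$ for some $i$ directly). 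Adding up, $\profit(\OPT\cap\I)\le 4\ell\,\profit(D_{\max})+\ell\,\profit(G_{\max})+\ell\,\profit(e_{\max})\le 6\ell\,\profit(S)$, and together with the $\F$-reduction this gives the claimed $\profit(\OPT)\le 7\ell\,\profit(S)$.

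The step I expect to be the main obstacle is the very last one when $e_\rhoc\in\F$: there the bound $\profit(\alpha e_\rhoc)\le\profit(e_\rhoc)$ used inside Lemma~\ref{lem:multi-costs:unbounded}'s first subcase fails. I would handle this by carrying out the ``unbounded'' case over the ground set $\I$ rather than $U$ from the outset: the highest proper density whose prefix inside $\I$ contains an unbounded feasible set is attained at an element of $\I$, so $e_\rhoc\in\I$ automatically and the internal argument of Lemma~\ref{lem:multi-costs:unbounded} goes through verbatim. One then only needs to re-check the enumeration facts, namely that running Algorithm~\ref{Algorithm:offline-matroid} on $\bP_\rhoc\cap\I$ and taking $\fdfmax_{\rhoc,i}$ over $\I$ are still dominated in profit by $D_{\max}$ and $G_{\max}$, which holds since enlarging the ground set to the matching $P_\gamma$ only adds feasible sets and candidate elements and so cannot decrease the best profit attained. (A less delicate route is to add the best fractional single element as a fourth candidate to Algorithm~\ref{Algorithm:multi-costs:offline-matroid}, which costs only a constant in the ratio.)
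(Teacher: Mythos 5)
Your proof takes essentially the same route as the paper's: split off $\F$ via Lemma~\ref{lem:multi-costs:F} (cost $\ell\,\profit(e_{\max})$), then case-split on whether all feasible sets are bounded, invoking Lemma~\ref{lem:multi-costs:all-bounded} in the bounded case and Lemma~\ref{lem:multi-costs:unbounded} in the unbounded case, and charging $D'_i$, $\fdfmax_{\rhoc,i}$, and $e_\rhoc$ to $D_{\max}$, $G_{\max}$, and $e_{\max}$ respectively via the enumeration over densities. The constants match the paper's count of $4\ell + \ell + \ell$ from the two lemmas plus $\ell$ from the $\F$-reduction, giving $7\ell$.

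The one place where you go beyond the paper is the last paragraph, where you flag the hypothesis $e_\rhoc \in \I$ of Lemma~\ref{lem:multi-costs:unbounded} and discuss how to discharge it. The paper's own proof handles this by the initial sentence ``we consider only elements from $\I$,'' which amounts to your proposed fix of carrying out the unbounded-case analysis over $\I$ as the ground set and then observing that enlarging to the matching $P_\gamma$ over $U$ cannot decrease $D_{\max}$ or $G_{\max}$. Your explicit verification of this point, and your alternative of adding a fourth candidate, are both correct; the paper just leaves it implicit. So this is the same proof, worked out in more detail and with one step made explicit that the paper glosses over.
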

\begin{proof}
  First, we consider only elements from $\I$.  If all feasible sets are
  bounded, then we get $\profit(D_{\max{}}) \geq \frac{1}{4\ell}\profit(\OPT)$
  by Lemma~\ref{lem:multi-costs:all-bounded}. On the other hand, if
  there exists an unbounded feasible set, then we have
  \begin{align*}    
    \ell(4\profit(D_{\max{}}) + \profit(e_{\max{}}) +
    \profit(\fdfmax_{\max{}}))
    &\geq 4\sum_i \profit(D'_i) + \ell(\profit(\ebar) + 
    \max_i \profit(\fdfmax_{\rhoc,i}))\\
    &\geq \profit(\OPT \cap \I)
  \end{align*}
  by Lemma~\ref{lem:multi-costs:unbounded}. For elements from $\F$,
  Lemma~\ref{lem:multi-costs:F} shows that $\profit(e_{\max{}}) \geq
  \profit(\OPT \cap \F)/\ell$. This proves that the algorithm achieves a
  $7\ell$-approximation.
\end{proof}

\subsection{The online setting}

\newcommand{\emax}{e_{\max{}}}
\newcommand{\Omax}{O_{0}}
\newcommand{\bigC}{\lambda}
\newcommand{\smallC}{c}

\begin{algorithm}
  \caption{Online algorithm for multi-dimensional $(\profit, \Feas)$}
\label{Algorithm:multi-costs:online-matroid}
\begin{algorithmic}[1]
\STATE Let $c$ be a uniformly random draw from $\{1,2,3\}$.
\STATE Let $i^*$ be a uniformly random draw from $\{1, \ldots, \ell\}$.
\IF {$c=1$}
  \RETURN $\Omax$: the result of running Dynkin's online algorithm.
\ELSIF {$c=2$}
  \RETURN $O_1$: the result of running Algorithm~\ref{Algorithm:online-matroid} on
  $(\profit_{i^*}, \Feas)$.
\ELSIF {$c=3$}
  \STATE Draw $k$ from $\textrm{Binomial}(n,1/2)$.
  \STATE Let the sample $X$ be the first $k$ elements and $Y$ be the remaining elements.
 \STATE Determine $\alg(X)$ using the offline Algorithm \ref{Algorithm:multi-costs:offline-matroid}.
 \STATE Let $\eone$ be a specified constant and let $\guess$ be the highest density such that
 $\profit(\alg(\prefguess^X)) \geq \frac{\eone}{49\ell^2}\profit(\alg(X))$.
 \RETURN $O_2$: the result of running
 Algorithm~\ref{Algorithm:online-matroid} on $\prefguess^Y$ 
 with objective $\profit_{i^*}$ and feasibility constraint $\Feas$.

\ENDIF 
\end{algorithmic}
\end{algorithm}

In this subsection, we develop the online algorithm for constrained multidimensional profit maximization. First we remark that density prefixes are well-defined in this setting, so we can use
Algorithm~\ref{Algorithm:online-matroid}, the online algorithm for single-dimensional constrained profit maximization, in steps 6 and 12. In order to mimic the offline algorithm, we guess (with equal probability) among one of the following scenarios and apply the appropriate subroutine for profit maximization: if there is a single high profit element we apply Dynkin's algorithm, if all feasible sets are bounded we apply Algorithm \ref{Algorithm:online-matroid} along a randomly selected dimension, else if there exists an unbounded feasible set we first estimate a density threshold by sampling and then apply Algorithm \ref{Algorithm:online-matroid} over a random dimension but restricted to elements with density above the threshold. 
   
Most of this subsection is devoted the analysis of
Algorithm~\ref{Algorithm:multi-costs:online-matroid} when there exists
an unbounded feasible set. 
We prove the overall competitive ratio of the algorithm in Theorem \ref{theorem:multi-costs-constrained-online}.

We define $\rhob$ as follows.
\begin{definition}
  For a fixed parameter $\eone \leq 1 $, let $\rhob$ be the highest
  density such that $\profit(\OPT(\prefsmall)) \geq
  \left(\frac{\eone}{49\ell^2}\right)^2 \profit(\OPT)$.
\end{definition}

The following lemma is essentially an analogue of Lemma \ref{lem:main}.
\begin{lemma}
  \label{lem:multi-costs:main}
  Suppose there exists an unbounded feasible set.
  For fixed parameters $\const[1] \geq 1 $, $\const[4] \leq 1$, $\eone
  \leq 1$ and $\etwo \leq 1$ assume that there does not exists an element with
  profit more than $\frac{1}{\const[1]}\profit(\OPT)$. Then with
  probability at least $\const[4]$, we have that $\guess$, as defined
  in Algorithm~\ref{Algorithm:multi-costs:online-matroid}, satisfies
  \begin{enumerate}
  \item $ \rhob \geq \guess \geq \rhoc$ and 
  \item $\profit(\OPT(\prefguess^Y)) \geq \etwo
    \left(\frac{\eone}{49\ell^2}\right)^2 \profit(\OPT)$
  \end{enumerate}
\end{lemma}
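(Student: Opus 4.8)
The plan is to reproduce the argument of Lemma~\ref{lem:main} with the offline $7\ell$-approximation of Theorem~\ref{thm:multi-costs:offline-matroid} (Algorithm~\ref{Algorithm:multi-costs:offline-matroid}, which I denote $\alg$) playing the role of the offline $4$-approximation of Section~\ref{sec:constrained}; this is exactly why the threshold rule in Algorithm~\ref{Algorithm:multi-costs:online-matroid} uses the factor $49\ell^2=(7\ell)^2$. As usual, $\F$-elements are absorbed: Lemma~\ref{lem:multi-costs:F} bounds $|\OPT\cap\F|\le\ell$ and the $c=1$ (Dynkin) branch handles the single best element, so below I take $\OPT$ to be the optimum over $\I$. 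Writing $\preflarge=P_\rhoc$, define the events $E_1:\ \profit(\OPT(\preflarge^X))\ge\eone\,\profit(\OPT(\preflarge))$ and $E_2:\ \profit(\OPT(\prefsmall^Y))\ge\etwo\,\profit(\OPT(\prefsmall))$. I will show: (a) $E_1$ forces $\rhob\ge\guess\ge\rhoc$; (b) $E_1\wedge E_2$ forces $\profit(\OPT(\prefguess^Y))\ge\etwo\,(\eone/49\ell^2)^2\,\profit(\OPT)$; and (c) $\Pr[E_1\wedge E_2]\ge0.52$, so that $\const[4]=0.52$ works.

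For (a), the key preliminary is an analogue of Lemma~\ref{lem:goodprefix}, namely $\profit(\OPT(\preflarge))\ge\frac1{7\ell}\profit(\OPT)$. I would obtain this from Lemma~\ref{lem:multi-costs:unbounded} (invoked with $\rhoc$ as defined there): each $D'_i$ and each $\fdfmax_{\rhoc,i}$ is a feasible subset of $\bP_\rhoc\subseteq\preflarge$, so $\sum_i\profit(D'_i)\le\ell\,\profit(\OPT(\preflarge))$ and $\max_i\profit(\fdfmax_{\rhoc,i})\le\profit(\OPT(\preflarge))$, while the remaining $\ell\,\profit(e_\rhoc)$ term is absorbed using the small-element hypothesis (this is what forces $\const[1]$ to grow, at least linearly, with $\ell$); rearranging the inequality of Lemma~\ref{lem:multi-costs:unbounded} then yields the bound. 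Given it, $\guess\ge\rhoc$ follows as in Claim~\ref{lem:lowerbound_guess}: running $\alg$ on $\preflarge^X$ and using the $7\ell$-approximation guarantee, $E_1$, and the preliminary, $\profit(\alg(\preflarge^X))\ge\frac1{7\ell}\profit(\OPT(\preflarge^X))\ge\frac{\eone}{7\ell}\profit(\OPT(\preflarge))\ge\frac{\eone}{49\ell^2}\profit(\OPT)\ge\frac{\eone}{49\ell^2}\profit(\alg(X))$ (the last step since $\alg(X)$ is feasible in $U$), so $\preflarge$ is admissible in the definition of $\guess$ and hence $\guess\ge\rhoc$. For $\guess\le\rhob$, as in Claim~\ref{lem:upperbound_guess}: containment and optimality give $\profit(\OPT(\prefguess))\ge\profit(\alg(\prefguess^X))\ge\frac{\eone}{49\ell^2}\profit(\alg(X))\ge\frac{\eone}{49\ell^2}\cdot\frac1{7\ell}\profit(\OPT(\preflarge^X))\ge(\frac{\eone}{49\ell^2})^2\profit(\OPT)$, where the last inequality again uses $E_1$ and the preliminary, and the definition of $\rhob$ then gives $\guess\le\rhob$.

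For (b): conditioning on $E_1$, part (a) gives $\rhob\ge\guess\ge\rhoc$, hence the prefix containments $\prefsmall^Y\subseteq\prefguess^Y\subseteq\preflarge^Y$; by optimality $\profit(\OPT(\prefguess^Y))\ge\profit(\OPT(\prefsmall^Y))$, and $E_2$ together with the definition of $\rhob$ gives the claimed bound. For (c), I would mirror Claim~\ref{lem:chernoff-matroid}: since $\profit(\OPT(\preflarge^X))\ge\profit(\OPT(\preflarge)\cap X)$ and analogously for $Y$, it suffices to bound the probabilities of the intersection versions. Fixing a greedy ordering of $\OPT(\preflarge)$ with non-increasing profit increments and applying Lemma~\ref{lemma:concentration} with $\OPT(\preflarge)$ as the fixed set and those (non-negative, by optimality) increments as the weights gives $\Pr[E_1]\ge0.76$ with $\eone=\beta(c)$, provided $\const[1]$ is large enough, as a polynomial in $\ell$, that no increment exceeds a $1/c$ fraction of $\profit(\OPT(\preflarge))$ --- here using $\profit(\OPT(\preflarge))\ge\frac1{7\ell}\profit(\OPT)$; the same argument applied to $\OPT(\prefsmall)$, now using $\profit(\OPT(\prefsmall))\ge(\eone/49\ell^2)^2\profit(\OPT)$ from the definition of $\rhob$ (which forces $\const[1]=\Theta(\ell^4)$), gives $\Pr[E_2]\ge0.76$ with $\etwo=\beta(c')$. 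A union bound then gives $\Pr[E_1\wedge E_2]\ge0.52$.

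I expect the delicate step to be the analogue of Lemma~\ref{lem:goodprefix} used in (a): one must extract $\profit(\OPT(\preflarge))\ge\frac1{7\ell}\profit(\OPT)$ from Lemma~\ref{lem:multi-costs:unbounded} with \emph{exactly} the denominator $7\ell$, so that the two cascaded losses --- estimating $\guess$ from the sample $X$, and the definition of $\rhob$ --- compose into precisely $(\eone/49\ell^2)^2$, which also pins down how large (polynomially in $\ell$) $\const[1]$ must be; one must also dispose of the $e_\rhoc\in\I$ proviso of Lemma~\ref{lem:multi-costs:unbounded}. A minor additional point, needed for Theorem~\ref{theorem:multi-costs-constrained-online} rather than for this lemma, is that $E_1$ and $E_2$ depend only on the split $U=X\cup Y$ and not on the order within $Y$, so conditioning on them leaves $Y$ uniformly permuted.
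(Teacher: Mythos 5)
Your proposal matches the paper's proof sketch: define $E_1,E_2$ on $\preflarge$ and $\prefsmall$ exactly as in Lemma~\ref{lem:main}, derive the prefix bound $\profit(\OPT(\preflarge))\ge\frac{1}{7\ell}\profit(\OPT)$ from the unbounded-case analysis (Lemma~\ref{lem:multi-costs:unbounded}), and then replay the claims of the single-dimensional argument with $7\ell$ in place of $4$ (hence the $49\ell^2=(7\ell)^2$ in the threshold rule). The caveats you flag --- the $e_\rhoc\in\I$ proviso and making the constant in the prefix bound precise --- are real but are also left implicit in the paper's own sketch, so your more explicit treatment is a faithful expansion of the intended argument rather than a different one.
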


\begin{proofsketch}
  We define events $E_1$ and $E_2$ as in the
  single-dimensional setting:
\begin{align*}
E_1: \profit(\OPT(\preflarge^X)) & \geq \eone \profit(\OPT(\preflarge)) \\ 
E_2: \profit(\OPT(\prefsmall^Y)) & \geq \etwo \profit(\OPT(\prefsmall)).
\end{align*}
We observe that the proof of Lemma \ref{lem:main} primarily depends on
the properties of density prefixes, in particular that
$\profit(\OPT(P^X_\rhoc))$ is a sufficiently large fraction of
$\profit(\OPT(\preflarge))$, and that
Algorithm~\ref{Algorithm:offline-matroid}'s profit does not decrease
when considering larger
prefixes. 
Then Theorem~\ref{thm:multi-costs:offline-matroid} gives us
$\profit(\OPT(P_\rhoc)) \geq \frac{1}{7\ell} \profit(\OPT)$.
  
Thus, a proof similar to that of Lemma~\ref{lem:main} validates
Lemma~\ref{lem:multi-costs:main}.
\end{proofsketch}


The following lemma establishes the competitive ratio of Algorithm \ref{Algorithm:multi-costs:online-matroid} in the presence of an unbounded feasible set. 

\begin{lemma}
  \label{lem:multi-costs:secondstage}  
  Suppose that there exists an unbounded feasible set, and let
  $\alpha$ be the competitive ratio that
  Algorithm~\ref{Algorithm:online-matroid} achieves for a
  single-dimensional problem $(\profit_i,\Feas)$. Then, for a fixed
  set $Y$ and threshold $\guess$, satisfying $\guess \geq \rhoc$, we
  have $ \E_\sigma[\profit(O_2)] \geq \frac{1}{\alpha\ell}
  \ \profit(\OPT(\prefguess^Y))$, where the expectation is over all
  permutations $\sigma$ of $Y$.
\end{lemma}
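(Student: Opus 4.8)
The plan is to combine two ingredients: (i) a structural boundedness property of $\prefguess^Y$ that is forced by the hypothesis $\guess\ge\rhoc$, and which makes the full profit $\profit$ dominate each coordinate profit $\profit_i$ on the algorithm's output; and (ii) the $\alpha$-competitiveness of Algorithm~\ref{Algorithm:online-matroid} on the single-dimensional instance $(\profit_i,\Feas)$, averaged over the uniformly random dimension $i^*$. Since $\profit=\sum_i\profit_i$ as set functions, this averaging reconstructs $\profit(\OPT(\prefguess^Y))$ at the stated cost of a factor $\ell$.

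The one genuinely new step is to show that the output $O_2$ is bounded in the sense of the boundedness definition of Section~\ref{app:multi-constrained}. I would split into cases exactly as in the proof of Lemma~\ref{lem:secondstage}. If $\guess>\rhoc$, then every element of $P_\guess$ has proper density strictly above $\rhoc$, hence $P_\guess\subseteq\bP_\rhoc$ and so $\prefguess^Y\subseteq\bP_\rhoc$. If $\guess=\rhoc$, then $e_\rhoc$ must have been placed into the sample $X$ (since $\guess$, as selected in Algorithm~\ref{Algorithm:multi-costs:online-matroid}, is realized by an element of $X$, exactly as argued for Lemma~\ref{lem:secondstage}), so $e_\rhoc\notin Y$ and again $\prefguess^Y\subseteq P_\rhoc\setminus\{e_\rhoc\}=\bP_\rhoc$. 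Now, by the maximality of $\rhoc$ — precisely the observation used in the proof of Lemma~\ref{lem:multi-costs:unbounded} — every feasible subset $A$ of $\bP_\rhoc$ is bounded: $A$ lies in $P_\gamma$ for $\gamma$ its minimum density, and $\gamma>\rhoc$. Since $O_2$ is a feasible subset of $\prefguess^Y\subseteq\bP_\rhoc$ it is bounded, and this holds regardless of which coordinate objective $\profit_{i^*}$ the subroutine was run with. Lemma~\ref{lem:multi-costs:profit-lb} then yields $\profit(O_2)\ge\profit_{i^*}(O_2)$.

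The remainder is the averaging argument. Conditioned on $i^*=i$, the set $O_2$ is the output of Algorithm~\ref{Algorithm:online-matroid} run with objective $\profit_i$ and constraint $\Feas$ on the substream $\prefguess^Y$; with $Y$ and $\guess$ fixed, the elements of $\prefguess^Y$ still arrive in uniformly random relative order, so $\alpha$-competitiveness gives $\E_\sigma[\profit_i(O_2)\mid i^*=i]\ge\frac{1}{\alpha}\profit_i(\OPT_i(\prefguess^Y))\ge\frac{1}{\alpha}\profit_i(\OPT(\prefguess^Y))$, where $\OPT_i(\prefguess^Y)$ is the $\profit_i$-maximal feasible subset of $\prefguess^Y$ and the last inequality holds because $\OPT(\prefguess^Y)$ is itself feasible in $\prefguess^Y$. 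Averaging over $i^*$ uniform in $\{1,\dots,\ell\}$ and using $\profit=\sum_i\profit_i$,
\begin{align*}
\E_\sigma[\profit(O_2)]
&=\frac{1}{\ell}\sum_{i=1}^\ell\E_\sigma[\profit(O_2)\mid i^*=i]
\ge\frac{1}{\ell}\sum_{i=1}^\ell\E_\sigma[\profit_i(O_2)\mid i^*=i]\\
&\ge\frac{1}{\alpha\ell}\sum_{i=1}^\ell\profit_i(\OPT(\prefguess^Y))
=\frac{1}{\alpha\ell}\profit(\OPT(\prefguess^Y)),
\end{align*}
which is the claim. Everything past the boundedness step mirrors the offline $\ell$-way averaging of Lemmas~\ref{lem:multi-costs:I} and~\ref{lem:multi-costs:all-bounded}, so the boundedness of $O_2$ — in particular pinning down the $\guess=\rhoc$ case — is the only real obstacle.
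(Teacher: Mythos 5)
Your proof is correct and takes essentially the same approach as the paper: establish $\prefguess^Y \subseteq \bP_\rhoc$ (splitting on $\guess = \rhoc$ versus $\guess > \rhoc$) so that $O_2$ is bounded, invoke Lemma~\ref{lem:multi-costs:profit-lb} to get $\profit(O_2) \ge \profit_{i^*}(O_2)$, and average the per-dimension $\alpha$-competitiveness guarantee over the uniformly random $i^*$. The only cosmetic difference is that you make explicit the intermediate comparison $\profit_i(\OPT_i(\prefguess^Y)) \ge \profit_i(\OPT(\prefguess^Y))$, which the paper leaves implicit.
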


\begin{proof}
  First, we prove that $O_2$ is bounded. By the definition of
  $\rhoc$, since $O_2 \subseteq \prefguess^Y$ it suffices to show
  that $\prefguess^Y \subseteq \bP_\rhoc$, as all feasible sets in $\bP_\rhoc$ are bounded. The
  threshold $\tau$ is either equal to or strictly greater than $\rhoc$
  and it suffices to consider the former case. If $\tau = \rhoc$, then
  $e_\rhoc$ must have been in the sample set $X$ so it is not in
  $\prefguess^Y$ and so $\prefguess^Y \subseteq \bP_\rhoc$.
 
  Hence, by Lemma~\ref{lem:multi-costs:profit-lb}, for all $i$ we have
  $\profit(O_2) \geq \profit_i(O_2)$. Applying
  Theorem~\ref{thm:online-matroid} where the ground set is
  $\prefguess^Y$ gives us $\E_\sigma[ \profit_i(O_2)\ |\ i^*=i] \geq
  \frac 1\alpha \profit_i(\OPT(\prefguess^Y))$. 
  Therefore, we have
  \begin{align*}
    \E_{\sigma}[\profit(O_2)] 
    &= \frac{1}{\ell}\sum_i \E_\sigma[\profit(O_2)\
    |\ i^* = i]\\
    &\geq \frac{1}{\ell}\sum_i \E_\sigma[\profit_i(O_2)\
    |\ i^* = i]\\
    &\geq \frac{1}{\alpha\ell}\sum_i \profit_i(\OPT(\prefguess^Y))\\
    &= \frac{1}{\alpha\ell} \profit(\OPT(\prefguess^Y)).\qedhere
  \end{align*}
\end{proof}

We now prove the main result of this section. 
\begin{theorem}
 \label{theorem:multi-costs-constrained-online} 
 Let $\alpha$ denote the competitive ratio of
 Algorithm~\ref{Algorithm:online-matroid} for the single-dimensional
 problem $(\profit_i,\Feas)$. Then
 Algorithm~\ref{Algorithm:multi-costs:online-matroid} achieves a
 competitive ratio of $O(\alpha\ell^5)$ for the multi-dimensional
 problem $(\profit, \Feas)$.
\end{theorem}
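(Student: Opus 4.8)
The plan is to assemble the offline decomposition of Theorem~\ref{thm:multi-costs:offline-matroid} with the sampling machinery of Section~\ref{sec:constrained}, routing the three sources of offline profit into the three branches of Algorithm~\ref{Algorithm:multi-costs:online-matroid}. I would assume $\cost_i(0)=0$, so that $\profit(\OPT)\ge0$. Since each of the three branches returns a set of non-negative profit and is selected with probability $\tfrac13$, we have $\E[\profit(\text{output})]\ge\tfrac13\big(\E[\profit(\Omax)]+\E[\profit(O_1)]+\E[\profit(O_2)]\big)$, so it suffices to exhibit, for every instance, \emph{one} branch whose conditional expected profit is $\Omega\!\big(\tfrac1{\alpha\ell^5}\big)\profit(\OPT)$.

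First I would fix the parameter $\const[1]=\Theta(\ell)$ from Lemma~\ref{lem:multi-costs:main}, taken large enough that $\ell\max_e\profit(e)\le\tfrac12\profit(\OPT)$ whenever no element has profit exceeding $\tfrac1{\const[1]}\profit(\OPT)$, and then case-split. \textbf{Case A (a large element).} If some $e$ has $\profit(e)\ge\tfrac1{\const[1]}\profit(\OPT)$, the branch $c=1$ runs Dynkin's algorithm, which outputs that element with probability $1/e$, so $\E[\profit(\Omax)\mid c=1]\ge\tfrac1{e\const[1]}\profit(\OPT)=\Omega(\tfrac1\ell)\profit(\OPT)$. Otherwise no element is large; by Lemma~\ref{lem:multi-costs:F} and subadditivity $\profit(\OPT\cap\F)\le\ell\max_e\profit(e)\le\tfrac12\profit(\OPT)$, hence $\profit(\OPT\cap\I)\ge\tfrac12\profit(\OPT)$ and I can argue against $\OPT\cap\I$ at a constant-factor loss. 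Now split on the feasibility structure. \textbf{Case B (all feasible sets bounded).} The branch $c=2$ runs Algorithm~\ref{Algorithm:online-matroid} on $(\profit_{i^*},\Feas)$, which is $\alpha$-competitive, so $\E_\sigma[\profit_i(O_1)\mid i^*=i]\ge\tfrac1\alpha\profit_i(\OPT\cap\I)$ (trivial when the right side is negative, since the left side is at least $0$). As every feasible set is bounded, $O_1$ is bounded, so Lemma~\ref{lem:multi-costs:profit-lb} gives $\profit(O_1)\ge\profit_{i^*}(O_1)$; averaging over the uniform draw of $i^*$,
\[
\E[\profit(O_1)\mid c=2]\ \ge\ \frac1\ell\sum_i\E_\sigma[\profit_i(O_1)\mid i^*=i]\ \ge\ \frac1{\alpha\ell}\sum_i\profit_i(\OPT\cap\I)\ =\ \frac1{\alpha\ell}\profit(\OPT\cap\I)\ =\ \Omega\!\Big(\tfrac1{\alpha\ell}\Big)\profit(\OPT).
\]
\textbf{Case C (an unbounded feasible set exists).} The branch $c=3$ applies. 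By Lemma~\ref{lem:multi-costs:main}, with probability at least the constant $\const[4]$ the computed threshold satisfies $\rhob\ge\guess\ge\rhoc$ and $\profit(\OPT(\prefguess^Y))\ge\etwo\big(\tfrac{\eone}{49\ell^2}\big)^2\profit(\OPT)$; call this event $\mathcal G$. Because $\mathcal G$ depends only on which elements land in $X$ versus $Y$ and not on the internal order of $Y$, conditioned on $\mathcal G$ the stream $Y$ is still uniformly randomly ordered, so Lemma~\ref{lem:multi-costs:secondstage} (which needs only $\guess\ge\rhoc$) gives $\E_\sigma[\profit(O_2)\mid\mathcal G]\ge\tfrac1{\alpha\ell}\profit(\OPT(\prefguess^Y))$; using $\profit(O_2)\ge0$ off $\mathcal G$,
\[
\E[\profit(O_2)\mid c=3]\ \ge\ \const[4]\cdot\frac1{\alpha\ell}\cdot\etwo\Big(\frac{\eone}{49\ell^2}\Big)^{2}\profit(\OPT)\ =\ \Omega\!\Big(\frac1{\alpha\ell^5}\Big)\profit(\OPT).
\]

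Taking the smallest of the three conditional bounds gives $\E[\profit(\text{output})]=\Omega\!\big(\tfrac1{\alpha\ell^5}\big)\profit(\OPT)$, i.e.\ a competitive ratio of $O(\alpha\ell^5)$. The substantive content lives entirely in the ingredients, so I expect the main obstacle to be Lemma~\ref{lem:multi-costs:main}: it must show that the density estimate $\guess$ extracted from the sample is \emph{sandwiched}, $\rhoc\le\guess\le\rhob$, with constant probability while still leaving a constant fraction of $\profit(\OPT)$ on the surviving prefix $\prefguess^Y$; this reuses Lemma~\ref{lemma:concentration} applied to profit increments along the density prefix, the monotonicity of $\profit$ restricted to $\alg$-prefixes, and the monotonicity of the offline subroutine's profit in the prefix (Theorem~\ref{thm:multi-costs:offline-matroid}). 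The remaining work in this theorem is just the bookkeeping of the five powers of $\ell$: one $\ell$ from the random coordinate $i^*$, carried through Lemma~\ref{lem:multi-costs:secondstage}; and $\ell^4$ from threshold estimation --- the threshold $\guess$ is calibrated to within an $\Theta(\ell^2)$ factor of the offline optimum (one $\ell$ from the $O(\ell)$-approximate offline subroutine, one $\ell$ because the good prefix $\preflarge$ captures only an $\Omega(1/\ell)$ fraction of $\profit(\OPT)$ by Theorem~\ref{thm:multi-costs:offline-matroid}), and this $\ell^2$ is squared once more in the definition of $\rhob$.
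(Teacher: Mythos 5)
Your proposal follows the paper's proof essentially step for step: split by branch of the algorithm, handle the large-element case via Dynkin, the all-bounded case via Lemma~\ref{lem:multi-costs:profit-lb} plus $\alpha$-competitiveness averaged over $i^*$, and the unbounded case via Lemmas~\ref{lem:multi-costs:main} and \ref{lem:multi-costs:secondstage}, then collect the $\ell$-factors exactly as the paper does. The only cosmetic departures are that you route Case~B through $\OPT\cap\I$ (an unnecessary but harmless extra step, since when all feasible sets are bounded one can compare against $\profit_i(\OPT)$ directly) and that you make explicit the choice $\const[1]=\Theta(\ell)$, which the paper leaves implicit.
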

\begin{proof}
  If all feasible sets are bounded, we can apply
  Lemma~\ref{lem:multi-costs:profit-lb}. Then,
  \begin{align*}
    \E[\profit(O_1)] 
    &\geq \frac{1}{\ell}\sum_i\E[\profit_i(O_1)\ |\ i^* = i]\\
    &\geq \frac{1}{\alpha\ell}\sum_i\profit_i(\OPT)\\
    &= \frac{1}{\alpha\ell} \profit(\OPT).
  \end{align*}

  Suppose there exists an element with profit
 more than $\frac{1}{\const[1]} \profit(\OPT)$. Since we apply the
  standard secretary algorithm with probability $\frac{1}{3}$, and the
  standard secretary algorithm is $e$-competitive, in expectation
  we get a profit of $\frac{1}{3 \const[1] e}$ times the optimal.

  Finally, if no element has profit more than $\frac{1}{\const[1]}
  \profit(\OPT)$ and there exists an unbounded feasible set, using the
  second inequality of Lemma \ref{lem:multi-costs:main}, Lemma
  \ref{lem:multi-costs:secondstage}, and the fact that we output $O_2$
  with probability $\frac{1}{3}$, we get
  \begin{align*}
    \E[\profit(O_2)]
    &\geq \frac{1}{3}\E[\profit(O_2)\ |\ E_1 \wedge
    E_2]\Pr[E_1\wedge E_2]\\
   &\geq \frac{\const[4] }{3\alpha\ell}\cdot\E[\profit(\OPT(\prefguess^Y))\
    |\ E_1 \wedge E_2]\\
    &\geq \frac{\const[4] \etwo}{3\alpha\ell}\left(\frac{\eone}{49\ell^2}\right)^2
    \profit(\OPT).
  \end{align*}
  Overall, we have that \[\E[\profit(O)] \geq \min\left\{\frac{
      \const[4]\etwo}{3\alpha\ell}\left(\frac{\eone}{49\ell^2}\right)^2, \frac{1}{3
      \const[1] e}, \frac{1}{3\alpha\ell}\right\} \cdot \profit(\OPT).\] Since all the
  involved parameters are fixed constants we get the desired result.
\end{proof}

\bibliographystyle{plain}
\bibliography{matroid_secretary}

 
\end{document}